\theoremstyle{definition}
\newtheorem{theorem}{Theorem}
\newtheorem{corollary}{Corollary}
\newtheorem{lemma}{Lemma}
\newtheorem{proposition}{Proposition}
\theoremstyle{definition}
\newtheorem{definition}{Definition}
\newtheorem{example}{Example}
\newtheorem{remark}{Remark}
\newcommand{\acts}{F}
\newcommand{\alg}{\mathcal{B}}
\newcommand{\algI}{\mathcal{I}}
\newcommand{\algJ}{\mathcal{J}}
\newcommand{\bang}{\mathord{!}}
\newcommand{\catD}{\mathcal{D}}
\newcommand{\catE}{\mathcal{E}}
\newcommand{\choice}{\Gamma}
\newcommand{\choicePlayers}{\Players{\choice}}
\newcommand{\CS}{\mathsf{CS}}
\newcommand{\dist}{\mathbb{D}}
\newcommand{\embedded}[1]{\maxi({#1})}
\newcommand{\id}{\mathsf{id}}
\newcommand{\justchoice}{\mathcal{C}}
\newcommand{\justmaxi}{m}
\newcommand{\justprel}{\mathcal{P}}
\newcommand{\maxi}{\lambda}
\newcommand{\maxiPlayers}{\Players{\maxi}}
\newcommand{\nat}{\mathbb{N}}
\newcommand{\op}{\mathsf{op}}
\newcommand{\powerset}{\mathbb{P}}
\newcommand{\Players}[1]{{#1}^\heartsuit}
\newcommand{\pref}{\preccurlyeq}
\newcommand{\strpref}{\prec}
\newcommand{\prel}{\Pi}
\newcommand{\prelPlayers}{\Players{\prel}}
\newcommand{\Set}{\mathsf{Set}}
\newcommand{\U}{\mathcal{U}}
\newcommand{\Unc}{\mathsf{Unc}}
\newcommand{\supp}{\mathsf{supp}}
\newcommand{\X}{\mathcal{X}}
\renewcommand{\iff}{\qquad \mbox{iff} \qquad}
\title{Choice Structures in Games}
\author{Paolo Galeazzi and Johannes Marti}
\begin{document}
\maketitle


\begin{abstract}
\noindent Following the decision-theoretic approach to game theory, we
extend the analysis of \cite{EpsteinWang96} and \cite{DiTillio08} from
hierarchies of preference relations to hierarchies of choice functions.
We then construct the universal choice structure containing all these
choice hierarchies, and show how the universal preference structure from \cite{DiTillio08} is
embedded in it.
\end{abstract}

\section{Introduction}
\label{intro}

The present work focuses on the foundations of that part of the theory of games that deals with the players' interactive beliefs and subjective preferences. We work here in the broadest setup, where the players' beliefs and preferences are expressed in terms of choice functions with no further properties assumed. This generalizes the existing approaches in the literature, where the players' beliefs and preferences are represented by probability and utility functions or by preference relations. We show that even in our general setup it is possible to build a universal type structure - which we call universal choice structure - containing all the coherent hierarchies of choice functions. By being universal, we mean that such structure is terminal, in the sense of category theory, and complete, in the sense of \cite{Brandenburger03}. Furthermore, the universal choice structure is non-redundant (in the sense of \cite{MertensZamir}) and topology-free (in the sense of \cite{HeifetzSamet1998}). All these notions are discussed in more detail in Section \ref{discussion}.


The motivation for the present work comes from the limitations imposed
by employing order relations to represent the players' preferences,
as already recognized by \cite{Epstein97}.
Generalizations of probabilistic type structures, called preference structures, have been introduced in \cite{EpsteinWang96,DiTillio08} and \cite{GanguliHeifetz2016}, where types are defined as hierarchies of interactive preference relations rather than as hierarchies of interactive probabilistic beliefs.
However, not all important decision
criteria are representable by order relations and, consequently, preference
structures are of no help to the study of the behavioral implications
when rationality is defined in terms of such criteria. A famous example
is regret minimization, which violates the principle of independence
of irrelevant alternatives and hence transitivity too (see Section~\ref{SubExample} below). As a consequence,
regret-minimizing preferences cannot be represented by order relations
and have been axiomatized by means of choice functions (see \cite{hayashi08}
and \cite{Stoye2011}).

The present paper aims to build the interactive epistemic structures
required to take into account cases of decision criteria like regret
minimization, where preference orders are insufficient to represent
the players' rationality. Such motivation justifies the shift from
hierarchies of interactive preference orders to hierarchies of interactive
choice functions.

In addition to the conceptual motivation of this work, choice structures
turn out to be the most general models for interactive epistemology and
epistemic game theory introduced so far. As preference orders are
special cases of choice functions, the universal choice structure embeds
the universal preference structure, which in turn embeds the universal probabilistic
type structure.

On a more technical side, we prove the formal results  of this paper
using notions from the theory of coalgebras and category theory. This
allows for a modular approach, which may also shed new light on the
construction of both the universal probabilistic type structure and the universal
preference structure.

The paper is structured as follows. The following two Subsections \ref{Related literature} and \ref{SubExample} survey the related literature and provide a motivational example for this work, respectively. Section \ref{preliminaries} introduces the setting and the preliminary notions that are used in Section \ref{choice structures} to construct hierarchies of choice functions and the universal choice structure. Section \ref{preference structures} shows how the universal preference structure is embedded into the universal choice structure, and Section \ref{discussion} concludes. 

\subsection{Related literature} \label{Related literature}

As already pointed out above, \cite{EpsteinWang96}  generalize the
notion of a type from a hierarchy of probabilistic beliefs to a
hierarchy of interactive preference relations and introduce preference
structures accordingly. \cite{chen2010} then proves that any structure \`{a} la
\cite{EpsteinWang96} can be embedded into their preference structure by
a morphism which is unique. Starting from different premises than
\cite{EpsteinWang96}, \cite{DiTillio08} also constructs a preference
structure which he shows to be universal and non-redundant. In the same
spirit as \cite{EpsteinWang96} and \cite{DiTillio08}, \cite{GanguliHeifetz2016} show the existence of the universal structure for a larger class of preferences than in \cite{EpsteinWang96}. Our results are thus a further generalization from preference structures to choice structures tout court, in order to be even more liberal about the players' rationality and decision criteria.

Other related work, more for the formal techniques employed than in the spirit, is \cite{heinsalu2014}. There, the author provides the construction of a universal type structure with unawareness by means of tools from category theory. Different from our case, however, for his construction to succeed it suffices to prove that his type structures with unawareness are coalgebras for an appropriately defined functor, for which
the results of \cite{heinsalu2014} directly follow from those in \cite{Moss04,Viglizzo05}. 
In our case, instead, a more extensive use of category theory is necessary, as our choice structures are coalgebras for a new different functor, whose categorical properties have not been investigated yet.

This last observation also relates our results to the work by \cite{Moss04,Viglizzo05}, in that they are the first to formulate classic probabilistic type structures as coalgebras for the probability measure functor and to show that the universal type structure is the terminal coalgebra in the category of probabilistic type structures appropriately defined. We show here that the terminal coalgebra also exists for choice structures.

\subsection{A motivational example} \label{SubExample}

Consider the following game, that Ida is playing as the row player
and Joe as the column player.

\medskip{}

\hfill{}%
\begin{tabular}{c|cc}
 & $l$ & $r$\tabularnewline
\hline 
$u$ & 5;1 & 0;0\tabularnewline
$m$ & 3;2 & 0;1\tabularnewline
$c$ & 1;1 & 3;0\tabularnewline
$d$ & 1;2 & 2;3\tabularnewline
\end{tabular}\hfill{}

\medskip{}

\noindent Ida has two dominated actions: $m$ and $d$. Specifically, the mixed
action $pu+(1-p)c$ strongly dominates $m$ for $p\in(1/2,1)$ and
$d$ for $p\in(0,1/3)$. Therefore, no probabilistic belief on Joe's
actions can justify the choice of either $m$ or $d$ in terms of
expected utility maximization: For every probabilistic belief of Ida, either $u$ or $c$ will have higher expected utility than $m$ as well as $d$. 
This is evident from Figure~\ref{subfig:normal}, where the horizontal axis represents the probability of Joe playing $r$ and the vertical axis represents Ida's expected utility.
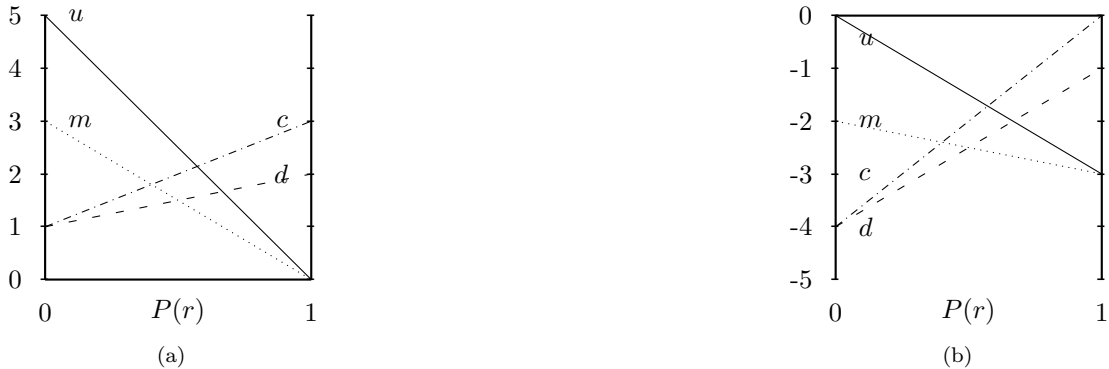
\begin{figure}
\subfloat[]{\begin{tikzpicture}[scale=.7]

\draw [-, thick]  (0,0) to (5,0);
\draw [-, thick] (0,0) to (0,5);

\draw [-, thick] (5,0) to (5,5);

\node (1y2) [] at (0,0)
  [label=left:{0}] [label=below:{0}] {-};
\node (1y2) [] at (2.5,0)
  [label=below:{$P(r)$}] {};
\node (1y2) [] at (0,1)
  [label=left:{1}] {-};
\node (1y2) [] at (0,2)
  [label=left:{2}] {-};
\node (1y2) [] at (0,3)
  [label=left:{3}] [label=right:{$m$}] {-};
\node (1y2) [] at (0,4)
  [label=left:{4}] {-};
\node (1y2) [] at (0,5)
  [label=left:{5}] [label=right:{$u$}] {-};

\node (1y0) [] at (5,0)
  [label=below:{1}] {-};
\node (1y2) [] at (5,1)
  [label=left:{}] {-};
\node (1y2) [] at (5,2)
  [label=left:{$d$}] {-};
\node (1y2) [] at (5,3)
  [label=left:{$c$}] {-};
\node (1y2) [] at (5,4)
  [label=right:{}] {-};
\node (1y2) [] at (5,5)
  [label=right:{}] {-};

\draw [-] (0,5) to (5,0);
\draw [-, dotted] (0,3) to (5,0);
\draw [-, dashdotted] (0,1) to (5,3);
\draw [-, loosely dashed] (0,1) to (5,2);

\end{tikzpicture}\label{subfig:normal}}\hfill{}\subfloat[]{\begin{tikzpicture}[scale=.7]

\draw [-, thick]  (0,5) to (5,5);
\draw [-, thick] (0,0) to (0,5);

\draw [-, thick] (5,0) to (5,5);

\node (1y2) [] at (0,0)
  [label=left:{-5}] [label=below:{0}] {-};
\node (1y2) [] at (2.5,0)
  [label=below:{$P(r)$}] {};
\node (1y2) [] at (0,1)
  [label=left:{-4}] [label=right:{$d$}] {-};
\node (1y2) [] at (0,2)
  [label=left:{-3}] [label=right:{$c$}] {-};
\node (1y2) [] at (0,3)
  [label=left:{-2}] [label=right:{$m$}] {-};
\node (1y2) [] at (0,4)
  [label=left:{-1}] [label=above right:{$u$}] {-};
\node (1y2) [] at (0,5)
  [label=left:{0}] [label=right:{}] {-};

\node (1y0) [] at (5,0)
  [label=below:{1}] {-};
\node (1y2) [] at (5,1)
  [label=left:{}] {-};
\node (1y2) [] at (5,2)
  [label=left:{}] {-};
\node (1y2) [] at (5,3)
  [label=left:{}] {-};
\node (1y2) [] at (5,4)
  [label=right:{}] {-};
\node (1y2) [] at (5,5)
  [label=right:{}] {-};

\draw [-] (0,5) to (5,2);
\draw [-, dotted] (0,3) to (5,2);
\draw [-, dashdotted] (0,1) to (5,5);
\draw [-, loosely dashed] (0,1) to (5,4);

\end{tikzpicture}\label{subfig:regret}}

\caption{On the left, the representation of Ida's expected utilities associated with the actions in the game above. On the right, the representation of Ida's expected negative regrets associated with the same actions.}
\label{fig: example}
\end{figure}
 Actions that are consistent with the players' rationality and common belief in rationality are called rationalizable. 
In the example, the
only action profile that is rationalizable with expected utility maximization is $(u,l)$, since action $r$
is no longer rational when $d$ is eliminated.

When, in the wake of decision-theoretic developments, classic probabilistic
beliefs are generalized to possibly non-probabilistic beliefs (e.g.
\cite{GilSch89,Schmeidler89}), the game-theoretic notions of rationality, dominance and equilibrium
have to be reconsidered accordingly. Examples of these advancements in game theory can be found e.g. in \cite{Klibanoff96,Lo1996,Marinacci00,KajiiUi05,SchlagRENOU2010,halpass12,BattCerrMM15, Trost2019}. 
For the sake of explanation,
let us take the case where beliefs are represented by sets of probability
distributions as in the multiple-prior (MP, henceforth) model of \cite{GilSch89}.
In this case, each action is associated with a set of expected utilities,
and a natural notion of rational choice consists in picking an action
with the highest minimal expected utility. When maxmin expected utility
is substituted for expected utility maximization as the notion of
rationality in the presence of non-probabilistic beliefs, a question naturally arises: How is dominance defined in this setting?

An answer is offered by \cite{Epstein97}, who establishes a correspondence
between iterated elimination of MP-dominated actions and MP-rationalizability.
To exemplify, consider again the game above between Ida and Joe, and
suppose that both players are expected utility maximinimizers: What
are then the behavioral implications of rationality and common belief
in rationality? Figure~\ref{subfig:normal} shows that actions $u,c$
and $d$ are justifiable for Ida: $u$ and $c$ are still best replies
to some probabilistic belief, while $d$ is now a possible best reply
to some non-probabilistic belief, e.g., the set of probability distributions
that coincides with the simplex over Joe's actions. The MP-rationalizable
action profiles are therefore the members of the Cartesian product
$\{u,c,d\}\times\{l,r\}$.

The results about MP-dominance and MP-rationalizability in \cite{Epstein97} are based
on preference structures, i.e., type structures whose elements consist
in hierarchies of interactive, reflexive and transitive preference
relations over Savage-style acts. The space of all coherent preference
hierarchies, i.e., the universal preference structure, is thus foundational
to the results about iterated dominance and rationalizability for
decision criteria that are representable by reflexive and transitive
preference relations over acts, such as maxmin expected utility and
all other noteworthy criteria in \cite{Epstein97}. As mentioned above, universal preference structures have been constructed by both \cite{EpsteinWang96} and \cite{DiTillio08}.

Once multiple decision criteria are introduced for single-agent problems, it is also natural to think of situations where different individuals adhere to different criteria in playing games. In such cases, we may have for instance Ida playing the game and choosing actions according to criterion A, while Joe is making his choices according to criterion B. Uncertainty about the opponent's criterion therefore enters the players' epistemic state and spreads to higher-order levels too: Ida may be uncertain about Joe's criterion and about Joe's uncertainty relative to her criterion, and so on. Preference structures provide a formal framework to express such interactive higher-order uncertainty about the players' decision criteria.

Consider the game above again, but now suppose that both Ida and Joe
are regret minimizers. Figure~\ref{subfig:regret} helps picture
the situation, where each action is plotted in terms of its expected negative
regret. Taking advantage of the fact that the minimization of the maximal (positive)
regret is equivalent to maxmin negative regret, Figure~\ref{subfig:regret}
shows that action $m$ is now a possible best reply (e.g., to the set
of probability distributions coinciding with the simplex over Joe's
actions), whereas action $d$ is no longer a best reply to any possible
belief of Ida. Joe would consequently not play action $r$, and
the only regret-rationalizable profile is thus $(u,l)$.

As already pointed out, however, regret minimization violates the independence
of irrelevant alternatives and is therefore not representable by a preference relation. To see it, we can make use of the game above again.
Suppose for instance that Ida is a regret minimizer and her belief
is represented by a convex compact set of probability distributions
assigning action $r$ a lower probability of 0.25 and an upper probability
of 1 (see Figure~\ref{fig: example-2}).
\begin{figure}
\subfloat[]{\begin{tikzpicture}[scale=.7]

\draw [-, thick]  (0,5) to (5,5);
\draw [-, thick] (0,0) to (0,5);

\draw [-, thick] (5,0) to (5,5);

\node (1y2) [] at (0,0)
  [label=left:{-5}]  [label=below:{0}] {-};
\node (1y2) [] at (2.5,0)
  [label=below:{$P(r)$}] {};
\node (1y2) [] at (0,1)
  [label=left:{-4}] [label=right:{$d$}] {-};
\node (1y2) [] at (0,2)
  [label=left:{-3}] [label=right:{$c$}] {-};
\node (1y2) [] at (0,3)
  [label=left:{-2}] [label=right:{$m$}] {-};
\node (1y2) [] at (0,4)
  [label=left:{-1}] [label=above right:{$u$}] {-};
\node (1y2) [] at (0,5)
  [label=left:{0}] [label=right:{}] {-};

\node (1y0) [] at (5,0)
  [label=below:{1}] {-};
\node (1y2) [] at (5,1)
  [label=left:{}] {-};
\node (1y2) [] at (5,2)
  [label=left:{}] {-};
\node (1y2) [] at (5,3)
  [label=left:{}] {-};
\node (1y2) [] at (5,4)
  [label=right:{}] {-};
\node (1y2) [] at (5,5)
  [label=right:{}] {-};

\draw [-] (0,5) to (5,2);
\draw [-, dotted] (0,3) to (5,2);
\draw [-, dashdotted] (0,1) to (5,5);
\draw [-, loosely dashed] (0,1) to (5,4);

\draw [-, loosely dotted] (1.25,0) to (1.25,5);

\node (1y2) [] at (1.25,0)
  [label=below:{.25}] {};

\end{tikzpicture}\label{subfig:normal-2}}\hfill{}\subfloat[]{\begin{tikzpicture}[scale=.7]

\draw [-, thick]  (0,5) to (5,5);
\draw [-, thick] (0,0) to (0,5);

\draw [-, thick] (5,0) to (5,5);

\node (1y2) [] at (0,0)
  [label=left:{-5}] [label=below:{0}] {-};
\node (1y2) [] at (2.5,0)
  [label=below:{$P(r)$}] {};
\node (1y2) [] at (0,1)
  [label=left:{-4}] {-};
\node (1y2) [] at (0,2)
  [label=left:{-3}] {-};
\node (1y2) [] at (0,3)
  [label=left:{-2}] [label=above right:{$c$}] [label=right:{$d$}] {-};
\node (1y2) [] at (0,4)
  [label=left:{-1}] [label=above right:{$m$}] {-};
\node (1y2) [] at (0,5)
  [label=left:{0}] [label=right:{}] {-};

\node (1y0) [] at (5,0)
  [label=below:{1}] {-};
\node (1y2) [] at (5,1)
  [label=left:{}] {-};
\node (1y2) [] at (5,2)
  [label=left:{}] {-};
\node (1y2) [] at (5,3)
  [label=left:{}] {-};
\node (1y2) [] at (5,4)
  [label=right:{}] {-};
\node (1y2) [] at (5,5)
  [label=right:{}] {-};

\draw [-, dotted] (0,5) to (5,2);
\draw [-, dashdotted] (0,3) to (5,5);
\draw [-, loosely dashed] (0,3) to (5,4);

\draw [-, loosely dotted] (1.25,0) to (1.25,5);

\node (1y2) [] at (1.25,0)
  [label=below:{.25}] {};


\end{tikzpicture}\label{subfig:regret-2}}

\caption{On the left, the representation of Ida's expected negative regrets associated with the actions in the game above. On the right, the representation of Ida's expected negative regrets associated with the actions in the game above, when action $u$ is no longer available.}
\label{fig: example-2}
\end{figure}
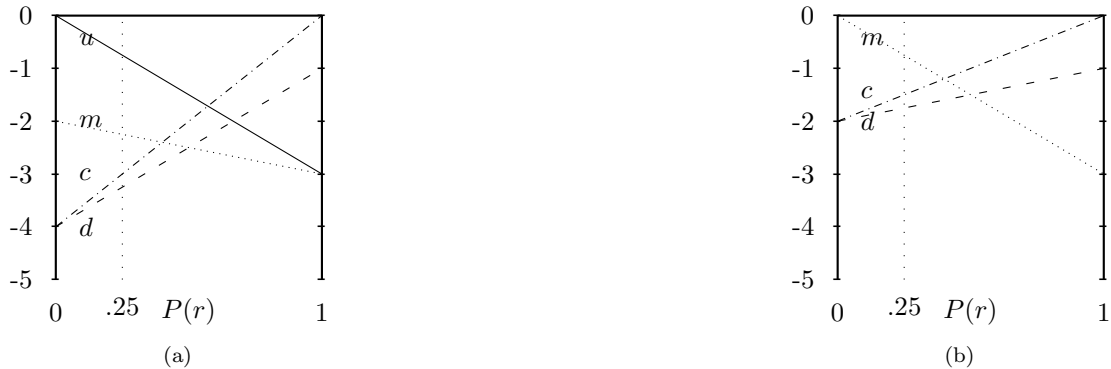
 For this specific belief, Ida is then indifferent between actions
$u,m$ and $c$, when she can choose among her four original actions
(Figure~\ref{subfig:normal-2}). However, when action $u$ is no
longer available, Ida will go for action $c$ (Figure~\ref{subfig:regret-2}):
actions $m$ and $c$ are no longer equivalent, in violation of the
independence of irrelevant alternatives. This choice pattern cannot
be encoded by a preference order, and a choice function $C$ such
that
\[
\begin{array}{ccc}
C(\{u,m,c,d\})=\{u,m,c\} &  & C(\{m,c,d\})=\end{array}\{c\}
\]
has instead to be employed. These cases never arise for maxmin expected
utility, in that the relative order among possible alternatives does
not change when actions are added or removed. This is essentially
the reason why criteria such as maxmin expected utility can be axiomatized
by preference orders (\cite{GilSch89}), while context-dependent criteria
like regret minimization require the use of choice functions (\cite{hayashi08}
and \cite{Stoye2011}). In interactive contexts, this  motivates the generalization from preference structures to choice structures.

\section{Preliminaries}
\label{preliminaries}


In this section we introduce the mathematical notions that will be  employed in game-theoretic contexts in the following sections.

\subsection{Uncertainty spaces}
\label{uncertainty spaces}

We are working with \emph{uncertainty spaces}, or simply \emph{spaces},
$(X,\alg)$, where $X$ is any set, whose elements are called
\emph{states}, and $\alg \subseteq \powerset X$ is an algebra of subsets
of $X$, where we denote by $\powerset X$ the powerset of $X$. The elements of
$\alg$ are called \emph{measurable sets} or \emph{events}.\footnote{That
$\alg$ is an algebra means that it is closed under finite intersections
and complements. Thus, an uncertainty space $(X,\alg)$ is almost a
measurable space, but without requiring closure under countable
intersections.}
We often write just $X$ for an uncertainty space $(X,\alg)$. Any set $X$
can be considered as a trivial uncertainty space $(X,\alg)$ in which
every subset is measurable, meaning that $\alg = \powerset X$ is the
\emph{discrete algebra} containing all subsets of $X$.

A morphism $\varphi$ from an uncertainty space $X$ to an uncertainty
space $Y$ is any function $\varphi : X \to Y$ that is \emph{measurable},
meaning that $\varphi^{-1}[E] = \{x \in X \mid \varphi(x) \in E\}$ is
measurable in $X$ whenever $E$ is measurable in $Y$.
For every uncertainty space $X = (X,\alg)$ we use $\id_X$ to denote the
measurable identity function on $X$, that is, $\id_X : X \to X, x \mapsto
x$.

Two uncertainty spaces $X$ and $Y$ are \emph{isomorphic}, written as $X
\simeq Y$, if there is a bijective function $\varphi : X \to Y$ such
that both $\varphi$ and its inverse $\varphi^{-1} : Y \to X$ are
measurable, and $\varphi \circ \varphi^{-1} = \id_Y$ and $\varphi^{-1} \circ \varphi
= \id_X$.

Given two uncertainty spaces $X$ and $Y$ we can define their product to
be the uncertainty space $X \times Y$ whose states are all pairs $(x,y)$
where the first component is a state in $X$ and the second component is
a state in $Y$. The measurable sets of states are generated by taking
finite unions and complements of cylinders, that is, sets of the form $U
\times Y$ and $X \times V$ for measurable $U$ in $X$ and $V$ measurable
in $Y$. It is clear that with this algebra the projections $\pi_1 : X
\times Y \to X, (x,y) \mapsto x$ and $\pi_2 : X \times Y \to Y, (x,y)
\mapsto y$ are measurable functions. Moreover given two measurable
functions $\varphi : X \to X'$ and $\psi : Y \to Y'$ we use $\varphi \times \psi : X
\times Y \to X' \times Y'$ to denote the measurable function, which is
defined such that $(\varphi \times \psi)(x,y) = (\varphi(x),\psi(y))$ for all $(x,y) \in X
\times Y$. It is not hard to check that this $\varphi \times \psi$ is indeed
measurable.

\subsection{Savage acts}

Fix a non-empty set $Z$ of \emph{outcomes}. A \emph{Savage act}, or
just \emph{act}, for an
uncertainty space $X$ is a measurable finite step function $f : X \to
Z$. That $f$ is a finite step function means that its range $f[X] =
\{f(x) \in Z \mid x \in X\}$ is finite. We also assume that the set $Z$
carries the discrete algebra in which all sets are measurable. Using
that the measurable sets in $X$ are closed under finite unions, one
easily checks that a finite step function $f : X \to Z$ is measurable precisely if $f^{-1}[\{z\}]$ is
measurable for all $z \in Z$. We write $\acts X$ for the set of all acts
for some uncertainty space $X$.
For every measurable function $\varphi : X \to Y$ we obtain a function
$\acts \varphi : \acts Y \to \acts X, f \mapsto f \circ \varphi$.

Our notion of a Savage act, where $Z$ is any set and $f : X \to Z$ is a
finite step function, is then compatible with the settings from both
\cite{DiTillio08} and \cite{Stoye2011}. In the work on preference
structures from \cite{DiTillio08}, it is assumed that $Z$ is finite and
thus all acts are automatically finite step functions. In the framework
from \cite{Stoye2011}, acts are measurable finite step functions $f : X
\to \Delta \mathcal{Z}$, where $\Delta \mathcal{Z}$ is the set of all
probability distributions with finite support over a set $\mathcal{Z}$
of outcomes. This approach can be recovered in our setting by
instantiating the set $Z$ with $\Delta \mathcal{Z}$. 

In the following we are making the assumption that $Z$ is finite, following \cite{DiTillio08}. However, our results on the existence of the universal choice structure hold also for arbitrary, possibly infinite, sets $Z$. We address this matter in more detail with Remark~\ref{r:finiteness} in \iftoggle{preprint}{Section~\ref{definition mu} of the appendix}{\ref{definition mu}}.


\subsection{Choice functions}
\label{subsec choice functions}

A \emph{choice function} over a set $X$ is a function $C$ that maps
every finite subset $F \subseteq X$ to one of its subsets $C(F)
\subseteq F$.
For any set $X$ we write $\justchoice X$ for the set of all choice
functions over $X$. Even if $X$ is just a set, without a notion of
measurable subset, we take $\justchoice X$ to be the uncertainty space
in which a set is measurable if it can be generated by taking finite
intersections and complements of sets of the form
\[
 B^K_L = \{C \in \justchoice X \mid C(K) \subseteq L\},
\]
for some finite $K,L \subseteq X$ with $L \subseteq K$.

Given any function $f : X \to Y$ we define the function $\justchoice f :
\justchoice Y \to \justchoice X$ by setting $\justchoice f (C) = C^f$,
where $C^f$ is the choice function mapping a finite $K \subseteq X$ to
\[
 C^f (K) = f^{-1}[C(f[K])] \cap K .
\]
We use $f[K] \subseteq Y$ to denote the direct image $f[K] = \{f(x) \in
Y \mid x \in K\}$ of $K$. One can show that the function $\justchoice f
: \justchoice Y \to \justchoice X$ is measurable. To see this one first
checks that for all measurable sets of the form $B^K_L$ with $K \subseteq L$
\begin{align*}
 (\justchoice f)^{-1}[B^K_L] 
 & = \{C \in \justchoice Y \mid f^{-1}[C(f[K])] \cap K \subseteq L \} \\
 & = \{C \in \justchoice Y \mid C(f[K]) \subseteq \hat{L} \} =
B^{f[K]}_{\hat{L}}, 
\end{align*}
where $\hat{L} = \{y \in f[K] \mid f^{-1}[\{y\}] \cap K \subseteq L\}$.
This then extends to arbitrary measurable sets because intersections and
complements are preserved under taking inverse images.


\subsection{Choice functions over Savage acts}
\label{choice over acts}

We then consider choice functions over Savage acts for some uncertainty space
$X$. For every uncertainty space $X$ we define the uncertainty space
$\choice X = \justchoice \acts X$ to be the uncertainty space of all
choice functions over Savage acts for $X$. Moreover, for every measurable
function $\varphi : X \to Y$ we can define the measurable function
$\choice \varphi = \justchoice \acts \varphi : \choice X \to \choice Y$.
By unfolding the definitions of $\justchoice$ and $\acts$, we can describe the choice
function $\choice \varphi (C) = C^\varphi$ for any $C \in \choice X$
more concretely: It maps a finite set of Savage acts $K \subseteq \acts
Y$ to the set
\[
 C^\varphi (K) = \{f \in K \mid f \circ \varphi \in C(\{g \circ \varphi
\mid g \in K\}) \}.
\]

\section{Choice structures}
\label{choice structures}

In this section, the notions introduced above are applied to the game-theoretic context that we are interested in. To keep things notationally simple, we focus here on interactive situations with only two
players, Ida and Joe. It is straightforward to adapt our setting to
more than two players, but this would  introduce additional notational complications.

\subsection{Choice structures}
\label{subsection choice structures}

As we are working with 
two-player games, the basic uncertainty of Ida is
just over the fixed finite set $A_j$ of Joe's actions
and, similarly, the basic uncertainty of Joe is over the fixed finite
set of Ida's actions $A_i$. We thus obtain the following definition of
a choice structure:
\begin{definition} \label{definition choice structures}
A \emph{choice structure} is a tuple $\X = (T_i,T_j,\theta_i,\theta_j)$
consisting of:
\begin{itemize}
  \item uncertainty spaces $T_i$ and $T_j$ of \emph{types} for Ida and
Joe, and
  \item measurable functions $\theta_i : T_i \to \choice (A_j
\times T_j)$ and $\theta_j : T_j \to \choice (A_i \times
T_i)$.
\end{itemize}

A \emph{morphism} $\alpha : \X \to \X'$ from a choice structure $\X =
(T_i,T_j,\theta_i,\theta_j)$ to a choice structure $\X' =
(T'_i,T'_j,\theta'_i,\theta'_j)$ consists of two measurable functions
$\alpha_i : T_i \to T'_i$ and $\alpha_j : T_j \to T'_j$ such that 
\[
 \theta'_i \circ \alpha_i = \choice (\id_{A_j} \times \alpha_j) \circ
\theta_i \text{ and } \theta'_j \circ \alpha_j = \choice (\id_{A_i}
\times \alpha_i) \circ \theta_j.
\]
\end{definition}

A \emph{state} in $\X$ is a tuple $(a_i,a_j,t_i,t_j) \in A_i \times A_j
\times T_i \times T_j$. We also say that a \emph{state of Ida} is a pair $(a_i,t_i) \in A_i \times T_i$ and a \emph{state of Joe} is a pair $(a_j,t_j) \in  A_j \times T_j$. Notice that the actions specified by Ida's choice function $\theta_i(t_i)$ at state $(a_i,a_j,t_i,t_j)$ may be completely unrelated to the action $a_i$ actually played by Ida at that state (and likewise for Joe). In the example from Section~\ref{intro}, for instance, it is possible that at a given state Ida's type prescribes to choose $u$, whereas Ida's actual choice is $d$ (see Example~\ref{example} below for more details). This is a feature that choice structures have in common with preference structures. Without delving into interpretational issues, here we just think of both choice hierarchies and preference hierarchies as expressing choice attitudes or mental states rather than actual behavior. A
state of a player therefore consists of an action that she is actually
playing and a type that represents her mental attitude.\footnote{\cite{BattigalliDeVito2021} make a similar point in the context of a different framework, where players have beliefs about their own actions.}

A \emph{Savage act of Ida} is then a map $f:A_j \times T_j \to Z$ from states of
Joe to outcomes, and similarly for Joe. We hence model the type of a
player by her choices between Savage acts, whose outcome depends on the state
of the other player, but not on her own state. This modelling assumes
that the players do not have uncertainty about their own type. This is
different from the setting of \cite{DiTillio08}, who allows players to
be uncertain about their own type. We discuss this difference in
modelling more extensively in Section~\ref{sec:introspection}.

\begin{example} \label{example}
We provide an example of a choice structure
$\mathcal{X}=(T_i,T_j,\theta_i,\theta_j)$ for the game from the
introduction. In this example Ida has a single type $T_i=\{t_i\}$, while
Joe has two possible types $T_j=\{t_{Mm},t_{EU}\}$. Here, we interpret
Ida's type $t_i$ as a regret minimizer with belief represented as in
Figure \ref{fig: example-2}, i.e., a convex compact set of probability
distributions assigning action $r$ lower probability of 0.25 and upper
probability of 1, while Joe's type $t_{Mm}$ is a maximinimizer with full
uncertainty, i.e., not excluding any probability distribution over Ida's
actions, and Joe's type $t_{EU}$ is an expected utility maximizer
assigning probability $1/2$ to Ida playing $u$ and $1/2$ to Ida playing
$d$. The states of the choice structure are given by the Cartesian set
$A_i\times A_j\times T_i\times T_j$, i.e., 
\[
\{u,m,c,d\}\times\{l,r\}\times\{t_i\}\times\{t_{Mm},t_{EU}\}.
\]
Ida's states are then members of the set $\{u,m,c,d\}\times\{t_i\}$
and Joe's states are members of the set $\{l,r\}\times\{t_{Mm},t_{EU}\}$.
The set of outcomes $Z$ is naturally given by the outcomes of the
game, 
\[
Z = \{(5,1),(3,2),(1,1),(1,2),(0,0),(0,1),(3,0),(2,3)\}.
\]
The map $\theta_i$ then associates each of Ida's types with a choice
function over Savage acts defined on Joe's states. In the running example,
such Savage acts are 
\[
\begin{array}{ccccc}
f_{u}(l,t)=(5,1) &  & f_{u}(r,t)=(0,0) &  & \text{for both }t\in T_j\\
f_{m}(l,t)=(3,2) &  & f_{m}(r,t)=(0,1) &  & \text{for both }t\in T_j\\
f_{c}(l,t)=(1,1) &  & f_{c}(r,t)=(3,0) &  & \text{for both }t\in T_j\\
f_{d}(l,t)=(1,2) &  & f_{d}(r,t)=(2,3) &  & \text{for both }t\in T_j
\end{array}
\]
Similarly, Joe's Savage acts are the following:
\[
\begin{array}{ccc}
f_{l}(u,t_i)=(5,1) &  & f_{r}(u,t_i)=(0,0)\\
f_{l}(m,t_i)=(3,2) &  & f_{r}(m,t_i)=(0,1)\\
f_{l}(c,t_i)=(1,1) &  & f_{r}(c,t_i)=(3,0)\\
f_{l}(d,t_i)=(1,2) &  & f_{r}(d,t_i)=(2,3)
\end{array}
\]
When Ida's type $t_i$ is a regret minimizer as described above,
the choice function $C_i = \theta_i(t_i)$ associated with $t_i$
maps subsets of the set of Savage acts defined above as follows:
\[
\begin{array}{ccc}
C_i(\{f_{u},f_{m},f_{c},f_{d}\})=\{f_{u},f_{m},f_{c}\} &  & C_i(\{f_{m},f_{d}\})=\{f_{d}\}\\
C_i(\{f_{u},f_{m},f_{c}\})=\{f_{u},f_{m}\} &  & C_i(\{f_{c},f_{d}\})=\{f_{c}\}\\
C_i(\{f_{u},f_{m},f_{d}\})=\{f_{u},f_{m}\} &  & C_i(\{f_{u},f_{m}\})=\{f_{u}\}\\
C_i(\{f_{u},f_{c},f_{d}\})=\{f_{u}\} &  & C_i(\{f_{u},f_{d}\})=\{f_{u}\}\\
C_i(\{f_{m},f_{c},f_{d}\})=\{f_{c}\} &  & C_i(\{f_{u},f_{c}\})=\{f_{u},f_{c}\}\\
C_i(\{f_{m},f_{c}\})=\{f_{c}\}
\end{array}
\]
where we dispense with specifying the choice function in trivial cases
such as singletons. The definition of a choice structure would also
require $C_i$ to be defined on all other subsets of $\acts (A_j \times
T_j)$. For brevity we give the definition of $C_i$ only on subsets of
$\{f_u,f_m,f_c,f_d\}$, which are the Savage acts corresponding to actions in the game. As for Joe, we have that type $t_{Mm}$ is associated
with the choice function
\[
C_{Mm}(\{f_{l},f_{r}\})=\{f_{l}\}
\]
and type $t_{EU}$ with the choice function 
\[
C_{EU}(\{f_{l},f_{r}\})=\{f_{l},f_{r}\}.
\]
Again, we omit the definition of the choice functions on sets of
Savage acts which do not correspond to actions in the game.
\end{example}

 
\subsection{Choice hierarchies and the universal choice structure}
\label{universal choice structure}

We now introduce hierarchies of choice functions that represent the
higher-order attitudes of the players. To this aim we
define uncertainty spaces representing the players' $n$-th order
attitudes by a mutual induction on $i$ and $j$. In the base case
we set $\Omega_{i,1} = \choice A_j$ and $\Omega_{j,1} = \choice A_i$,
and for the inductive step $\Omega_{i,n + 1} = \choice(A_j \times
\Omega_{j,n})$ and $\Omega_{j,n + 1} = \choice(A_i \times
\Omega_{i,n})$.
The intuition is that the players' first order attitudes are represented
by their choices between Savage acts whose outcome depends just on the actual
action played by the opponent. Players' $(n + 1)$-th order attitudes
are represented by their choices between acts, whose outcome
depends on the actual action played by the opponent and the $n$-th
order attitudes of the opponent.

Note that the players' $(n + 1)$-th order attitudes determine their $n$-th order
attitudes. At the first level this means that the agent's choices in
$\Omega_{i,1}$ between Savage acts that depend on the opponent's action are
the same as her choices between Savage acts in $\Omega_{i,2}$, 
when they are taken as additionally depending trivially on the opponent's
first-order attitudes. This can be made precise with a measurable
\emph{coherence morphism} $\xi_{i,1} = \choice \pi_1 : \Omega_{i,2} \to
\Omega_{i,1}$, where $\pi_1 : A_j \times \Omega_{j,1} \to A_j$ is the
projection onto the first component. When $o_2 \in \Omega_{i,2}$
represents Ida's second-order attitudes then $\xi_{i,1}(o_2) \in
\Omega_{i,1}$ represents her first-order attitudes.

Analogously, we define a coherence morphism for Joe, by setting
$\xi_{j,1} = \choice \pi_1 : \Omega_{j,2} \to \Omega_{j,1}$, where
$\pi_1 : A_j \times \Omega_{j,1} \to A_j$. From now on we will not
bother with writing every equation explicitly for Ida and Joe. We just
write the version for Ida and then write ``and similarly for Joe'',
thereby meaning that the equation also holds with $i$ and $j$
interchanged.

By induction we can extend the idea of coherence to the higher levels.
Choices between acts depending on the opponent's action and $n$-th
order attitudes of the opponent are determined by choices between the
same acts taken as depending on the opponent's actions and their $(n +
1)$-th order attitudes.
Hence, we define by mutual induction 
\[
\xi_{i,n + 1} = \choice(\id_{A_j} \times \xi_{j,n}) : \choice(A_j \times \Omega_{j,n + 1}) \to \choice (A_j \times \Omega_{j,n})
\]
and similarly for Joe $\xi_{j,n + 1} = \choice(\id_{A_i} \times \xi_{i,n})$.

In the limit one can then consider countable sequences $o =
(o_1,o_2,\dots, o_n, \dots)$ with $o_n \in \Omega_{i,n}$ for all $n \in
\nat$. Moreover, we require these sequences to be coherent in the sense
that $\xi_{i,n}(o_{n+1}) = o_n$ for all $n$. One such sequence
completely describes a coherent state of Ida's higher-order attitudes at
all levels. Let $\Omega_i$ be the infinite set of all such coherent
sequences. There are projections $\zeta_{i,n} : \Omega_i \to
\Omega_{i,n}$ for every level $n \in \nat$. The set $\Omega_i$ becomes
an uncertainty space when endowed with the algebra generated from all
the subsets of the form $(\zeta_{i,n})^{-1} [O_n]$ for $n \in \nat$ and
measurable $O_n \subseteq \Omega_{i,n}$. All these notions can also be
defined analogously for Joe.

In the appendix we prove the central result about this construction,
which is that there exist the following isomorphisms:
\begin{theorem} \label{iso at omega}
 $\Omega_i \simeq \choice (A_j \times \Omega_j)$ and  $\Omega_j
\simeq \choice (A_i \times \Omega_i)$.
\end{theorem}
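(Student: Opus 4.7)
By symmetry, I only need to prove the first isomorphism $\Omega_i \simeq \choice(A_j \times \Omega_j)$; the second is identical with $i$ and $j$ swapped. The plan is to exhibit explicit mutually inverse measurable maps
\[
\Psi : \choice(A_j \times \Omega_j) \to \Omega_i
\qquad \text{and} \qquad
\Phi : \Omega_i \to \choice(A_j \times \Omega_j).
\]
The intuition is that $\Omega_j$ is the inverse limit of the system $(\Omega_{j,n}, \xi_{j,n})$, and the functor $\choice(A_j \times -)$ should preserve this limit; doing so carries $\choice(A_j \times \Omega_j)$ to $\lim_n \choice(A_j \times \Omega_{j,n}) = \lim_n \Omega_{i,n+1}$, which is $\Omega_i$ up to harmless re-indexing. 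The cornerstone is a factorization lemma: every Savage act $f : A_j \times \Omega_j \to Z$ factors as $f = \tilde{f} \circ (\id_{A_j} \times \zeta_{j,n})$ for some $n \in \nat$ and some $\tilde{f} : A_j \times \Omega_{j,n} \to Z$. This holds because $f$ has finite range, each fibre $f^{-1}[\{z\}]$ lies in the product algebra on $A_j \times \Omega_j$, this algebra is generated by cylinders of the form $U \times \zeta_{j,m}^{-1}[O_m]$ at finite levels $m$, and generators from different levels can be raised to a common $n$ using the maps $\xi_{j,m}$.

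I define $\Psi(C)$ componentwise by $\Psi(C)_1 = \choice \pi_1 (C) \in \choice A_j = \Omega_{i,1}$ and $\Psi(C)_n = \choice(\id_{A_j} \times \zeta_{j,n-1})(C) \in \Omega_{i,n}$ for $n \geq 2$. Coherence $\xi_{i,n}(\Psi(C)_{n+1}) = \Psi(C)_n$ follows from functoriality of $\choice$ together with the identity $\xi_{j,n-1} \circ \zeta_{j,n} = \zeta_{j,n-1}$, which is built into $\Omega_j$ as an inverse limit; for the base case one just uses $\pi_1 \circ (\id_{A_j} \times \zeta_{j,1}) = \pi_1 : A_j \times \Omega_j \to A_j$. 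Measurability of $\Psi$ is immediate since the algebra on $\Omega_i$ is generated by cylinders over the components $\zeta_{i,n}$, and each component of $\Psi$ is a composition of maps shown measurable in Section~\ref{subsec choice functions}. Conversely, for $o \in \Omega_i$ and a finite $K \subseteq \acts(A_j \times \Omega_j)$, I use the factorization lemma to pick a single $n$ such that every $f \in K$ is of the form $f = \tilde{f} \circ (\id_{A_j} \times \zeta_{j,n})$, let $\tilde{K} = \{\tilde{f} \mid f \in K\}$, and set
\[
\Phi(o)(K) \;=\; \bigl\{f \in K \,\bigm|\, \tilde{f} \in o_{n+1}(\tilde{K})\bigr\}.
\]

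Independence of $n$ in the definition of $\Phi(o)$ is the first nontrivial check: replacing $n$ by $n+1$ and inserting the concrete formula for $\choice(\id_{A_j} \times \xi_{j,n})$ from Section~\ref{choice over acts} reduces exactly to the coherence equation $\xi_{i,n+1}(o_{n+2}) = o_{n+1}$. Once $\Phi$ is well defined, both composite identities are straightforward: for $\Psi \circ \Phi = \id_{\Omega_i}$ I compute component $n$ by restricting attention to acts already of the form $\tilde{f} \circ (\id_{A_j} \times \zeta_{j,n-1})$, and the definition of $\Phi$ returns $o_n$ on the nose; for $\Phi \circ \Psi = \id$ I note that in the finite set $K$ the pulled-back act $\tilde{f} \circ (\id_{A_j} \times \zeta_{j,n})$ is literally $f$, so unfolding $\Psi(C)_{n+1} = \choice(\id_{A_j} \times \zeta_{j,n})(C)$ via the formula $C^\varphi(K) = \{f \in K \mid f \circ \varphi \in C(\varphi[K])\}$ gives $\Phi(\Psi(C))(K) = C(K)$. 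Measurability of $\Phi$ follows because the preimage of a basic cylinder $B^K_L$ is the pullback $\zeta_{i,n+1}^{-1}[B^{\tilde{K}}_{\tilde{L}}]$ for any $n$ witnessing the factorization of all acts in $K \cup L$.

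The main obstacle is the factorization lemma together with the careful verification that $\Phi$ is well defined: both hinge on turning the finiteness inherent in Savage acts (finite range, generators of the algebra obtained by finite Boolean combinations of cylinders) into a uniform finite level $n$ at which the entire finite set $K$ can be described. After that, every remaining step is a mechanical unfolding of the composite $\choice = \justchoice \circ \acts$ and the coherence built into $\Omega_j$.
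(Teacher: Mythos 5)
Your construction is essentially the paper's limit-preservation argument made concrete: your factorization lemma is the content of Lemma~\ref{pushing acts down} (the $\acts$-half, turning the limit $\Omega_j$ into a colimit of act-sets), and the well-definedness of $\Phi$ is the content of Theorem~\ref{justchoice preserves limits} (the $\justchoice$-half). But as written there is a genuine gap precisely at the point you identify as the crux. Your factorization lemma gives only \emph{existence} of $\tilde f$ with $f = \tilde f \circ (\id_{A_j} \times \zeta_{j,n})$, while every subsequent step silently uses \emph{uniqueness}: the notation $\tilde K = \{\tilde f \mid f \in K\}$ presupposes a well-defined assignment $f \mapsto \tilde f$; the level-independence check compares an arbitrary level-$(n+1)$ factorization with the canonical pushforward $\tilde f \circ (\id_{A_j} \times \xi_{j,n})$ and needs them to coincide; the computation of $\Psi \circ \Phi = \id_{\Omega_i}$ and the identity $\Phi^{-1}[B^K_L] = \zeta_{i,n+1}^{-1}[B^{\tilde K}_{\tilde L}]$ both need $f \mapsto \tilde f$ to be a bijection of $K$ onto $\tilde K$. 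If $\id_{A_j} \times \zeta_{j,n}$ is not surjective, two factorizations $\tilde f_1 \neq \tilde f_2$ of the same $f$ agree only on its image, and nothing in the coherence conditions constrains how the choice function $o_{n+1}$ treats acts that differ off that image (coherence only ties $o_{n+1}$ downward to $o_n$ via $\xi_{i,n}$), so $\Phi(o)(K)$ would genuinely depend on the chosen factorizations and the map would be ill-defined.

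Uniqueness is equivalent to surjectivity of the projections $\zeta_{j,n} : \Omega_j \to \Omega_{j,n}$, which is true here but not free. It requires two ingredients you never mention: first, that the coherence morphisms $\xi_{j,n}$ are surjective, which amounts to showing that $\choice$ maps surjective measurable functions to surjective ones --- the paper proves this by showing $\acts$ sends surjections to injections and $\justchoice$ sends injections to surjections via the explicit construction $C'(K) = f[C(f^{-1}[K])]$ of Lemma~\ref{justchoice monos to epics}; second, the standard fact that a countable inverse system with surjective bonding maps has surjective limit projections. This is exactly the ``epic cochain'' hypothesis that the paper's Theorem~\ref{preservation of limits} isolates and that its preliminary observations verify for the hierarchy $(\Omega_n,\xi_n)$. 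With that surjectivity lemma added, your explicit maps $\Psi$ and $\Phi$ do work and give a correct, more hands-on rendering of the paper's proof; without it, the central well-definedness claim is unsupported.
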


Using the isomorphisms $\mu_i : \Omega_i \to \choice(A_j \times \Omega_j)$ and
$\mu_j : \Omega_j \to \choice(A_i \times \Omega_i)$ from Theorem~\ref{iso at omega}, we then define the universal choice structure:
\begin{definition}
 The \emph{universal choice structure} $\U =
(\Omega_i,\Omega_j,\mu_i,\mu_j)$ consists of the uncertainty spaces
$\Omega_i$ and $\Omega_j$ of all coherent sequences of choice attitudes,
together with the measurable functions $\mu_i : \Omega_i \to \choice(A_j
\times \Omega_j)$ and $\mu_j : \Omega_j \to \choice(A_i \times
\Omega_i)$ from Theorem~\ref{iso at omega}.
\end{definition}

There is a technical difference between our presentation and the
approach that is usually taken in the literature, such as for instance
\cite{MertensZamir,BranDekel1993,EpsteinWang96,DiTillio08}.
It is common to define the $(n + 1)$-th level $\Omega_{i,n + 1}$
as consisting of all pairs $(x,y) \in \Omega_{i,n} \times \choice
\Omega_{j,n}$ that are coherent in the sense that the attitudes represented
by $x$ are consistent with the attitudes represented by $y$, in a sense
similar to our coherence morphisms $\xi_{i,n}$. In the limit $\Omega_i$ one then
considers sequences $(o_1,o_2,\dots) \in \Omega_i$ such that $o_n \in \choice
\Omega_{j,n}$ for all $n \in \nat$. Our approach is
equivalent to this approach, once coherence of the whole infinite
sequences is taken into account.



\subsection{Universality of the universal choice structure}


Every type $t \in T_i$ in any choice structure $\X =
(T_i,T_j,\theta_i,\theta_j)$ generates a coherent sequence of attitudes
in $\Omega_i$.
To see it, let us define first $\upsilon_{i,1} = \choice \pi_1 \circ
\theta_i : T_i \rightarrow \Omega_{i,1}, t \mapsto \choice
\pi_1(\theta_i(t))$, where $\pi_1 : A_j \times T_j \to A_j$ is the
projection. Similarly we define $\upsilon_{j,1} : T_j \to \Omega_{j,1}$.
We can then continue by mutual induction and set
\[
 \upsilon_{i, n + 1} = \choice(\id_{A_j} \times \upsilon_{j, n})
\circ \theta_i : T_i \to \Omega_{i, n + 1}
\]
\[
 t \mapsto \Gamma(\id_{A_j} \times \upsilon_{j,n})(\theta_i(t)).
\]
Similarly we define $\upsilon_{j, n + 1} : T_j \to \Omega_{j,n + 1}$.

One can easily verify that $\upsilon_{i,n} = \xi_{i,n} \circ
\upsilon_{i,n + 1}$ for all $n$. Hence, for each $t \in T_i$ the
infinite sequence $(\upsilon_{i,1}(t),\upsilon_{i,2}(t), \dots)$ is
coherent and we obtain a measurable map $\upsilon_i : T_i \to \Omega_i$.
Similarly we also obtain a measurable map $\upsilon_j : T_j \to
\Omega_j$. In the appendix we show that $\upsilon_i$ and $\upsilon_j$
together define a unique morphism $\upsilon$ into the universal choice structure:
\begin{theorem} \label{omega terminal}
 For every choice structure $\X$ there is a unique morphism of choice
structures $\upsilon : \X \to \U$ from $\X$ to the universal choice
structure $\U$.
\end{theorem}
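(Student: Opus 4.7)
The plan is to handle uniqueness and existence in parallel via the coherence projections $\zeta_{i,n}:\Omega_i\to\Omega_{i,n}$ and $\zeta_{j,n}:\Omega_j\to\Omega_{j,n}$. First I would isolate two compatibility properties of the isomorphisms $\mu_i,\mu_j$ furnished by Theorem~\ref{iso at omega}, which the construction behind that theorem must deliver:
\[
 \zeta_{i,1} = \choice\pi_1\circ\mu_i \qquad\text{and}\qquad \zeta_{i,n+1} = \choice(\id_{A_j}\times\zeta_{j,n})\circ\mu_i,
\]
and similarly for Joe. These simply say that $\mu_i$ identifies a coherent sequence $(o_1,o_2,\dots)\in\Omega_i$ with the choice function on $\acts(A_j\times\Omega_j)$ whose truncation to level $n$ is $o_{n+1}$ (the second identity, combined with $\xi_{i,1}\circ\zeta_{i,2}=\zeta_{i,1}$, already implies the first).

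For uniqueness, let $\upsilon'=(\upsilon'_i,\upsilon'_j)$ be any morphism $\X\to\U$. I would show by mutual induction on $n$ that $\zeta_{i,n}\circ\upsilon'_i=\upsilon_{i,n}$ and $\zeta_{j,n}\circ\upsilon'_j=\upsilon_{j,n}$. For the base case,
\[
\zeta_{i,1}\circ\upsilon'_i = \choice\pi_1\circ\mu_i\circ\upsilon'_i = \choice\pi_1\circ\choice(\id_{A_j}\times\upsilon'_j)\circ\theta_i = \choice\pi_1\circ\theta_i = \upsilon_{i,1},
\]
using the morphism condition on $\upsilon'$, covariant functoriality of $\choice$, and $\pi_1\circ(\id_{A_j}\times\upsilon'_j)=\pi_1$. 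The inductive step is analogous, with $\choice(\id_{A_j}\times\zeta_{j,n})$ in place of $\choice\pi_1$ and using the induction hypothesis $\zeta_{j,n}\circ\upsilon'_j=\upsilon_{j,n}$ to collapse the resulting composite. Since an element of $\Omega_i$ is a coherent sequence, the family $\{\zeta_{i,n}\}_{n\in\nat}$ is jointly injective on points, so this pins down $\upsilon'_i=\upsilon_i$ (and similarly for Joe).

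For existence, I would verify $\mu_i\circ\upsilon_i=\choice(\id_{A_j}\times\upsilon_j)\circ\theta_i$ (and its Joe-analogue) by showing the two sides agree after postcomposition with $\choice(\id_{A_j}\times\zeta_{j,n})$ for every $n$: the left-hand side becomes $\zeta_{i,n+1}\circ\upsilon_i=\upsilon_{i,n+1}$, and the right-hand side becomes $\choice(\id_{A_j}\times\upsilon_{j,n})\circ\theta_i=\upsilon_{i,n+1}$ by the very definition of $\upsilon_{i,n+1}$ and functoriality of $\choice$. To lift this level-wise agreement to an equality in $\choice(A_j\times\Omega_j)$, I would prove a factorisation lemma for acts: every $f\in\acts(A_j\times\Omega_j)$ admits a decomposition $f=g\circ(\id_{A_j}\times\zeta_{j,n})$ for some $n$ and some $g\in\acts(A_j\times\Omega_{j,n})$. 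This holds because each of the finitely many fibres $f^{-1}[\{z\}]$ of the finite step function $f$ lies in the algebra on $A_j\times\Omega_j$, and is therefore a finite Boolean combination of cylinders of the form $\{a\}\times(\zeta_{j,m})^{-1}[O]$; taking $n$ to be the maximum $m$ appearing across all these fibres yields a single level through which $f$ factors. Using the explicit formula for $\choice\varphi$ from Section~\ref{choice over acts}, the value of a choice function $C$ on any finite $K\subseteq\acts(A_j\times\Omega_j)$ is then determined by $\choice(\id_{A_j}\times\zeta_{j,n})(C)$ once $n$ is large enough to factor every element of $K$, giving the desired equality pointwise in $t\in T_i$.

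The main obstacle is the factorisation lemma, which expresses the finite-generation of the algebra on $A_j\times\Omega_j$ by cylinders living at a single high-enough level; everything else is diagram-chasing with the covariant functor $\choice$ and the compatibility of $\id_{A_j}\times(-)$ with composition.
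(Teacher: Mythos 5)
Your proposal is correct and follows essentially the same route as the paper's appendix proof, which is just the coalgebraic packaging of your argument: the approximants $\upsilon_{i,n}$, the inductive uniqueness argument via joint injectivity of the projections $\zeta_{i,n}$ on coherent sequences, and the verification of the morphism equation level by level, lifted to $\choice(A_j\times\Omega_j)$ by factoring each act through a finite level (the paper's factorisation and ``pushing acts down'' lemmas, which underlie its preservation-of-limits theorem for $\choice$). The only differences are cosmetic: the paper deduces the lifting step from the full universal property of $\choicePlayers\Omega$ as a limit of the cochain, whereas you use only the jointly-injective half of that property, and the compatibility identities you assume for $\mu_i$ are exactly the defining property $\zeta_n=\tau_n\circ\mu$ of the paper's construction of $\mu$, so that assumption is legitimate rather than a gap.
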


\subsection{Non-redundancy of the universal choice structure}
\label{nonred}

We can also show that our universal choice structure is non-redundant.
The notion of non-redundancy goes back to \cite{MertensZamir}, who prove
that their universal type space has this property. Our definition of
non-redundancy is an adaptation of Definition~3 in \cite{DiTillio08}.

For the definition of non-redundancy we consider situations
in which we want to endow the sets of types $T_i$ and $T_j$ in a choice
structure $\X = (T_i,T_j,\theta_i,\theta_j)$ with an algebra
that is possibly distinct from the algebra they have as type spaces
in $\X$. To this aim we need to clarify our notation. Recall that we
write an uncertainty space $X$ as $(X,\alg)$, where the symbol $X$ is
used for both the underlying set of points and the uncertainty space
itself, including the algebra $\alg$. Given an algebra $\alg' \subseteq
\powerset X$, which might be distinct from $\alg$, we write $(X,\alg')$
for the uncertainty space that has the set $X$ as its underlying set and
the algebra $\alg'$ as its algebra. 

\begin{definition} \label{observables}
 Let $\X = (T_i,T_j,\theta_i,\theta_j)$ be some choice structure. Define
the \emph{algebras of observable events} $\alg^{ob}_{\X,i} \subseteq
\powerset T_i$ and $\alg^{ob}_{\X,j} \subseteq \powerset T_j$ to be the
smallest algebras such that $\alg^{ob}_{\X,i}$ contains the sets
$B^{ob}_{\X,i}(K,L)$ for all finite $K,L \subseteq \acts(A_j \times
(T_j,\alg^{ob}_{\X,j}))$ and $\alg^{ob}_{\X,j}$ contains the sets
$B^{ob}_{\X,j}(K,L)$ for all finite $K,L \subseteq \acts(A_i \times
(T_i,\alg^{ob}_{\X,i}))$, where we write
\[
 B^{ob}_{\X,i}(K,L) = \{t \in T_i \mid \theta_i(t)(K) \subseteq L\}
\quad \mbox{and} \quad B^{ob}_{\X,j}(K,L) = \{t \in T_j \mid
\theta_j(t)(K) \subseteq L\}.
\]
In \iftoggle{preprint}{Appendix~\ref{app:observables well defined}}{\ref{app:observables well defined}} we explain more carefully
why these algebras of observable events are well-defined.
\end{definition}

\begin{definition}
 An algebra $\alg \subseteq \powerset X$ on the set $X$ is
\emph{separating} on $X$ if for any $x,y \in X$ with $x \neq y$ there is
a $E \in \alg$ such that $x \in E$ and $y \notin E$. The choice
structure $\X$ is \emph{non-redundant} if $\alg^{ob}_{\X,i}$ is
separating on $T_i$ and $\alg^{ob}_{\X,j}$ separating on $T_j$.
\end{definition}

In the appendix we prove that a choice structure $\X$ is non-redundant
if and only if the unique morphism $\upsilon$ from $\X$ to the universal
choice structure $\U$ is injective in both components. Proposition~2.5
in \cite{MertensZamir}, Proposition~2 in \cite{Liu09} and Proposition~2
in \cite{DiTillio08} contain analogous results for probabilistic type
spaces and preference structures.
\begin{proposition} \label{p:chara nonred}
 A choice structure $\X$ is non-redundant if and only if both components
$\upsilon_i$ and $\upsilon_j$ of the unique morphism of choice
structures $\upsilon$ from $\X$ to the universal choice structure $\U$
are injective.
\end{proposition}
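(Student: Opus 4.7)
The plan is to reduce the biconditional to two facts: first, the algebraic identity $\alg^{ob}_{\X,i} = \upsilon_i^{-1}[\alg_{\Omega_i}]$ and its counterpart for Joe; second, that $\alg_{\Omega_i}$ and $\alg_{\Omega_j}$ are themselves separating. Granted these, $\X$ is non-redundant precisely when each pullback $\upsilon_i^{-1}[\alg_{\Omega_i}]$ is separating on $T_i$, which, since the pulled-back algebra is already separating, is in turn equivalent to injectivity of $\upsilon_i$; symmetrically for $j$.

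To establish the identity, I would stratify both algebras by finite levels and proceed by mutual induction. Define $\mathcal{A}_{i,0} = \{\emptyset, T_i\}$ and let $\mathcal{A}_{i,n+1}$ be the algebra generated by all $B^{ob}_{\X,i}(K, L)$ with finite $L \subseteq K \subseteq \acts(A_j \times (T_j, \mathcal{A}_{j,n}))$; analogously for Joe. The chains $\mathcal{A}_{i,\bullet}$ and $\mathcal{A}_{j,\bullet}$ are increasing, and a standard argument (using that a finite step function has only finitely many fibres, each of which must live at some finite level of the chain $\bigcup_n \mathcal{A}_{j,n}$) shows that $\alg^{ob}_{\X,i} = \bigcup_n \mathcal{A}_{i,n}$. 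Using the coherence equation $\upsilon_{i,n} = \xi_{i,n} \circ \upsilon_{i,n+1}$, the pullback $\upsilon_i^{-1}[\alg_{\Omega_i}]$ is likewise the increasing union $\bigcup_n \upsilon_{i,n}^{-1}[\alg_{\Omega_{i,n}}]$. The identity thus reduces to the simultaneous claim $\mathcal{A}_{i,n} = \upsilon_{i,n}^{-1}[\alg_{\Omega_{i,n}}]$ for all $n$ (and likewise for $j$).

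The base case $n = 1$ is immediate: acts measurable w.r.t. the trivial algebra $\mathcal{A}_{j,0}$ are precisely those factoring through $\pi_1 : A_j \times T_j \to A_j$, i.e., they come from $\acts A_j$, and since $\upsilon_{i,1} = \choice \pi_1 \circ \theta_i$ the generators $B^{ob}_{\X,i}(K, L)$ and $\upsilon_{i,1}^{-1}[B^K_L]$ match exactly. For the inductive step, suppose the statement for Joe holds at level $n$. The key observation is that any $f \in \acts(A_j \times (T_j, \mathcal{A}_{j,n}))$ can be lifted to some $\hat f \in \acts(A_j \times \Omega_{j,n})$ with $f = \hat f \circ (\id_{A_j} \times \upsilon_{j,n})$: each fibre $\{t \in T_j : f(a, t) = z\}$ lies in $\mathcal{A}_{j,n} = \upsilon_{j,n}^{-1}[\alg_{\Omega_{j,n}}]$ and so has the form $\upsilon_{j,n}^{-1}[E_{a,z}]$ for a measurable $E_{a,z} \subseteq \Omega_{j,n}$; these can be patched into a measurable finite partition of $\Omega_{j,n}$ from which $\hat f$ is read off (the behaviour outside $\upsilon_{j,n}[T_j]$ being immaterial). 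Using these lifts together with $\upsilon_{i,n+1} = \choice(\id_{A_j} \times \upsilon_{j,n}) \circ \theta_i$ and the explicit description $\choice \varphi (C)(K) = \{f \in K \mid f \circ \varphi \in C(\{g \circ \varphi : g \in K\})\}$ from Section~\ref{choice over acts}, one checks that $B^{ob}_{\X,i}(K, L) = \upsilon_{i,n+1}^{-1}[B^{\hat K}_{\hat L}]$ by unfolding what it means for $\theta_i(t)(K)$ to lie in $L$. Conversely, each $\upsilon_{i,n+1}^{-1}[B^K_L]$ with $K, L \subseteq \acts(A_j \times \Omega_{j,n})$ is equal to $B^{ob}_{\X,i}(K', L')$, where $K', L'$ are obtained by precomposing elements of $K, L$ with $\id_{A_j} \times \upsilon_{j,n}$ and are measurable w.r.t. $\mathcal{A}_{j,n}$ by the inductive hypothesis.

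It remains to observe that $\alg_{\Omega_i}$ is separating on $\Omega_i$. By a straightforward induction each $\alg_{\Omega_{i,n}}$ is separating: for $C \neq C'$ in $\choice X$, pick a finite $K$ with $C(K) \neq C'(K)$ and some $x \in C(K) \triangle C'(K)$; then $B^K_{K \setminus \{x\}}$ separates them. Distinct coherent sequences in $\Omega_i$ differ at some level $n$, and a cylinder over a separating event in $\Omega_{i,n}$ then separates them in $\Omega_i$. The main technical obstacle throughout is the lifting of acts in the inductive step: because $\upsilon_{j,n}$ is typically not surjective, the lift $\hat f$ is not uniquely determined by $f$, and one has to verify that the resulting equality of events is independent of the choice of extension --- which it is, since $\choice(\id_{A_j} \times \upsilon_{j,n})(\theta_i(t))$ only probes $\hat f$ along the image of $\upsilon_{j,n}$.
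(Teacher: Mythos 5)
Your proposal is correct and, at its core, follows the same route as the appendix: the biconditional is reduced to the identity $\alg^{ob}_{\X,i} = \{\upsilon_i^{-1}[E] \mid E \text{ measurable in } \Omega_i\}$ (Proposition~\ref{p:real work}) together with the fact that the limit algebra on $\Omega_i$ separates points, and that identity is established by a mutual induction along the hierarchy, lifting finite step acts along $\id_{A_j} \times \upsilon_{j,n}$ and unfolding the explicit description of $\choice \varphi$. The difference is organizational. The paper proves one inclusion levelwise (Lemma~\ref{pain lemma}) and the other at the limit, by showing that the pulled-back algebras satisfy the closure conditions of Definition~\ref{observables} --- which uses the factorization Lemma~\ref{factorisation} at the limit and the morphism equation $\mu_i \circ \upsilon_i = \choice(\id_{A_j} \times \upsilon_j) \circ \theta_i$ --- and then invokes minimality of the observable algebras. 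You instead exhibit $\alg^{ob}_{\X,i}$ as the increasing union $\bigcup_n \mathcal{A}_{i,n}$ of explicitly stratified levels and prove both inclusions of the levelwise identity $\mathcal{A}_{i,n} = \upsilon_{i,n}^{-1}[\alg_{\Omega_{i,n}}]$ by induction; this makes the well-definedness of $\alg^{ob}_{\X,i}$ constructive and dispenses with the morphism equation for $\upsilon$ and the limit-level factorization, at the cost of justifying the stratification, which your finite-step/finite-level argument does (it is the same device as Lemma~\ref{pushing acts down}). One fine point, which affects your ``conversely'' step exactly as it affects the last equivalence in the proof of Lemma~\ref{pain lemma}: since $\upsilon_{j,n}$ need not be surjective, precomposition with $\id_{A_j} \times \upsilon_{j,n}$ may identify an act in $K \setminus L$ with an act in $L$, so $\upsilon_{i,n+1}^{-1}[B^K_L]$ equals $B^{ob}_{\X,i}(K',L'')$ with $L'' = \{g \in K' \mid \text{every } f \in K \text{ with } f \circ (\id_{A_j} \times \upsilon_{j,n}) = g \text{ lies in } L\}$ rather than $B^{ob}_{\X,i}(K',L')$ in general; this replacement is harmless and the inclusion you need goes through unchanged. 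By contrast, the worry you do flag --- non-uniqueness of the lifts $\hat f$ --- is indeed immaterial, since any fixed choice of lifts yields the required equality of events, as in the paper's own use of Lemma~\ref{factorisation}.
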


Because the unique morphism from the universal choice structure $\U$ to
itself has the identity function in both components, which is
injective, it follows
immediately from Proposition~\ref{p:chara nonred} that $\U$ is non-redundant.
\begin{corollary}
 The universal choice structure $\U$ is non-redundant.
\end{corollary}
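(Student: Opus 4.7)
The plan is to apply Proposition~\ref{p:chara nonred} to the universal choice structure $\U$ itself, so the whole proof reduces to exhibiting an injective unique morphism from $\U$ to $\U$. The key observation is that identity gives us such a morphism for free.

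First, I would invoke Theorem~\ref{omega terminal} with $\X = \U$: this yields a unique morphism of choice structures $\upsilon : \U \to \U$. Second, I would verify that the pair $(\id_{\Omega_i}, \id_{\Omega_j})$ is itself a morphism of choice structures from $\U$ to $\U$. This is routine: the required commuting diagrams read
\[
  \mu_i \circ \id_{\Omega_i} = \choice(\id_{A_j} \times \id_{\Omega_j}) \circ \mu_i,
\]
and the analogous equation for Joe, which both hold because $\id_{A_j} \times \id_{\Omega_j} = \id_{A_j \times \Omega_j}$, $\choice$ sends identities to identities, and composing with identities on either side changes nothing.

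By the uniqueness part of Theorem~\ref{omega terminal}, these two morphisms must coincide, i.e., $\upsilon_i = \id_{\Omega_i}$ and $\upsilon_j = \id_{\Omega_j}$. In particular, both components of $\upsilon$ are injective. An application of Proposition~\ref{p:chara nonred} then immediately yields that $\U$ is non-redundant, completing the proof. There is no real obstacle here: the argument is a direct corollary, and the only thing to check carefully is that the identity pair genuinely satisfies the morphism condition, which follows from the functoriality properties of $\choice$ and the product already used implicitly throughout Section~\ref{choice structures}.
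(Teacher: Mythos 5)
Your argument is correct and is precisely the paper's own: the unique morphism $\U \to \U$ must be the identity pair (which is a morphism by functoriality of $\choice$), so both components are injective and Proposition~\ref{p:chara nonred} applies. Nothing is missing.
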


\section{Embedding preference structures}
\label{preference structures}

In this section we discuss how our hierarchies of choice functions
relate to the hierarchies of preference relations introduced in
\cite{DiTillio08}.

\subsection{Di Tillio's preference structures}
\label{di tillio}

We start by reviewing the approach by \cite{DiTillio08} in our
notation. The fundamental notion of \cite{DiTillio08} is that of a
preference relation over a set $X$. In the following a \emph{preference
relation} ${\pref}$ over $X$ is a poset, that is, a reflexive, transitive
and anti-symmetric relation, over the set $X$. We require preference
relations to be anti-symmetric. This is different from
\cite{DiTillio08}, who requires preference relations to be just preorders, that means reflexive and transitive, but not necessarily
anti-symmetric relations. We justify this apparent loss of generality in
Remark~\ref{why posets} below. In most of our arguments anti-symmetry
does not play a role and hence they also work for preorders. Our reason
to require anti-symmetry is that in the case of preorders the embedding
from preference relations into choice functions need not be injective.

Write $\justprel X$ for the set of all preference relations over the set
$X$. The set $\justprel X$ can be turned into an uncertainty space by
generating the algebra of measurable events from sets of the form
$B_{x_1 \pref x_2} = \{{\pref} \in \justprel X \mid x_1 \pref x_2\}$ for
some $x_1,x_2 \in X$.

Every function $f : X \to Y$ gives rise to the measurable function
$\justprel f : \justprel Y \to \justprel X$, where a preference relation
$\pref$ over $Y$ maps to the preference relation $\justprel f (\pref) =
{\pref}^f$ over $X$ that is defined by
\begin{equation} \label{defining justprel}
 x_1 \pref^f x_2 \iff f(x_1) \pref f(x_2).
\end{equation}

We can then redo all the constructions from Section~\ref{choice
structures} using $\justprel$ instead of $\justchoice$. Let us sketch
how this works.
One considers preference relations over acts by considering for every
uncertainty space $X$ the space $\prel X = \justprel \acts X$. For every
measurable function $\varphi : X \to Y$ we obtain the measurable
function $\prel \varphi = \justprel \acts \varphi : \prel X \to \prel Y$
which maps a preference relation $\pref$ over $\acts X$ to the
preference relation $\pref^\varphi$ over $\acts Y$ that is defined such
that $f \pref^\varphi g$ if and only if $f \circ \varphi \pref g \circ \varphi$.

Lastly, define a \emph{preference structure} to be a tuple $\X =
(T_i,T_j,\theta_i,\theta_j)$ where $\theta_i : T_i \to \prel (A_j \times
T_j)$ and $\theta_j : T_j \to \prel (A_i \times T_i)$ are measurable
functions. A \emph{morphism} $\alpha : \X \to \X'$ from a preference
structure $\X = (T_i,T_j,\theta_i,\theta_j)$ to a preference structure
$\X' = (T'_i,T'_j,\theta'_i,\theta'_j)$ consists of two measurable
functions $\alpha_i : T_i \to T'_i$ and $\alpha_j : T_j \to T'_j$ such that
\[
 \theta'_i \circ \alpha_i = \prel (\id_{A_j} \times \alpha_j) \circ
\theta_i \text{ and } \theta'_j \circ \alpha_j = \prel (\id_{A_i} \times
\alpha_i) \circ \theta_j.
\]
Note that this definition of a preference structure differs from the one given in \cite{DiTillio08} with respect to the assumption of introspection mentioned in Section~\ref{subsection choice structures}.

The universal preference structure $\U' =
(\Omega'_i,\Omega'_j,\mu'_i,\mu'_j)$ can be defined from the limits
$\Omega'_i$ and $\Omega'_j$ of sequences
$(\Omega'_{i,n},\xi'_{i,n})_{n \in \nat}$ and
$(\Omega'_{j,n},\xi'_{j,n})_{n \in \nat}$ analogous to those that are used to approximate
the universal choice structure in the previous section. Hence, set
$\Omega'_{i,1} = \prel A_j$, $\Omega'_{j,1} = \prel A_j$ and then
inductively $\Omega'_{i,n + 1} = \prel(A_j \times \Omega'_{j,n})$ and
$\Omega'_{j,n + 1} = \prel(A_i \times \Omega'_{i,n})$. The coherence
morphism are such that $\xi'_{i,1} = \prel \pi_1$, $\xi'_{j,1} = \prel
\pi_2$ and inductively $\xi'_{i,n + 1} = \prel (\id_{A_j} \times
\xi'_{j,n})$ and $\xi'_{j,n + 1} = \prel (\id_{A_i} \times \xi'_{i,n})$.
The existence of suitable $\mu'_i$ and $\mu'_j$ and universality
properties of $\U'$ then follow from a construction that is analogous to
the one given in \iftoggle{preprint}{Appendix~\ref{proofs for choices}}{\ref{proofs for choices}}, using $\prel$ in
place of $\choice$. The properties of $\prel$ that are required for this
construction to succeed are stated in Section~3 of
\cite{DiTillio08}.

\subsection{Maximization}

We use maximization to map preference orders to choice functions. Given
a finite set of alternatives, a player with a given preference relation
chooses the most preferred alternatives of the set. Formally, this means
that given a preference relation ${\pref} \in \justprel X$ over a set
$X$ we map it to the choice function $\justmaxi_X ({\pref}) \in
\justchoice X$ that assigns to a finite set $K \subseteq X$ the set
$\justmaxi_X ({\pref})(K)$ of its maximal elements, which is defined as
follows:
\[
 \justmaxi_X({\pref})(K) = \{m \in K \mid \mbox{there is no } k \in K \mbox{ with } m \strpref k\},
\]
where $x \strpref y$ is defined to hold if and only if $x \pref y$ and not $y \pref x$.
This definition yields a measurable function $\justmaxi_X : \justprel X
\to \justchoice X$ for every set $X$. To prove that it is measurable it
suffices to show that for every basic event $B^K_L = \{C \in \justchoice
X \mid C(K) \subseteq L\}$ of $\justchoice X$ its preimage
$\justmaxi_X^{-1}[B^K_L]$ is measurable in $\justprel X$. To see that
this is the case first observe that the set of maximal elements of a set $K$
in a preference relation $\pref$ is a subset of $L$ if and only if for every
element $k \in K \setminus L$ there is some $l \in L$ such that $k \pref
l$ and not $l \pref k$. Thus we can write
\[
 \justmaxi_X^{-1}[B^K_L] = \bigcap_{k \in K \setminus L} \bigcup_{l \in
L} \left(B_{k \pref l} \setminus B_{l \pref k} \right) .
\]
Since $K$, $L$ and hence also $K \setminus L$ are finite the right side of
the above equality is a finite intersection of finite unions of
intersections of basic events with complements of basic events.

\begin{remark} \label{why posets}
 With our definition of the choice function $\justmaxi_X({\pref})$ from the preference relation $\pref$ we do not lose any generality by restricting to posets instead of preorders. For every preorder ${\pref}$ we can define the poset $\pref'$ by setting $x \pref' x'$ iff $x = x'$ or, $x \pref x'$ and not $x' \pref x$. One can show that $\pref'$ is anti-symmetric and that the maximal elements of any finite set $K \subseteq X$ in $\pref'$ are the same as the maximal elements of $K$ in $\pref$, meaning that $\justmaxi_X({\pref'})(K) = \justmaxi_X({\pref})(K)$. It follows from this that any choice function that arises from maximization in some preorder also arises from maximization in some poset. Hence, the restriction to posets does not lose any generality in the class of choice behaviors that we can account for.

There are examples in which the preorders $\pref$ and $\pref'$ are
distinct. For instance, we might take $X = \{x_1,x_2\}$ to be a two element
set and $\pref$ the total relation, in which the two elements of $X$ are
equally preferred, meaning that $x_1 \pref x_2$ and $x_2 \pref x_1$. Applying
the definition of $\pref'$ from above shows that $\pref'$ is then the
poset in which $x_1$ and $x_2$ are incomparable, meaning that neither $x_1
\pref' x_2$ nor $x_2 \pref' x_1$. We have then that $\justmaxi_X({\pref})(K)
= K = \justmaxi_X({\pref'})(K)$ for every $K \subseteq X$. Hence, there
exist two preorders that account for the same choice behavior. We show
in the following proposition that this redundancy disappears if one
restricts to posets.
\end{remark}

The next proposition shows that any two distinct preference relations
give rise via maximization to distinct choice functions. We prove this
proposition in the appendix. It relies on our assumption that preference
relations are anti-symmetric.
\begin{proposition} \label{justmaxi injective}
 The measurable function $\justmaxi_X : \justprel X \to \justchoice X$
is injective for all sets $X$.
\end{proposition}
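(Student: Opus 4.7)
The plan is to show that any two distinct preference relations must already disagree on some two-element subset, after which their maximizations disagree on that same set. Suppose $\pref_1, \pref_2 \in \justprel X$ with $\pref_1 \neq \pref_2$. By swapping the roles of the two relations if necessary, pick $x,y \in X$ witnessing the difference, so that $x \pref_1 y$ but not $x \pref_2 y$. Reflexivity of $\pref_2$ forces $x \neq y$, hence $K = \{x,y\}$ is a genuine two-element set, and this is the set on which I will compare the two choice functions.

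Next I would compute $\justmaxi_X(\pref_1)(K)$ and $\justmaxi_X(\pref_2)(K)$ directly. Under $\pref_1$, anti-symmetry together with $x \pref_1 y$ and $x \neq y$ rules out $y \pref_1 x$, so $x \strpref_1 y$ holds and $x$ is not maximal in $K$. The element $y$ is maximal: $y \strpref_1 y$ fails by definition of $\strpref$, and $y \strpref_1 x$ fails because $x \pref_1 y$. Hence $\justmaxi_X(\pref_1)(K) = \{y\}$. Under $\pref_2$, the failure of $x \pref_2 y$ makes $x \strpref_2 y$ impossible, and $x \strpref_2 x$ is impossible as well, so $x$ is maximal in $K$ and $x \in \justmaxi_X(\pref_2)(K)$. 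Since $x \notin \{y\} = \justmaxi_X(\pref_1)(K)$, the two choice functions disagree on $K$, giving $\justmaxi_X(\pref_1) \neq \justmaxi_X(\pref_2)$.

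There is no serious obstacle here; the whole argument is essentially a case analysis on a single doubleton. The only delicate point is the explicit use of anti-symmetry at the step where $x \pref_1 y$ is upgraded to $x \strpref_1 y$. This is precisely what fails for mere preorders, matching the counterexample in Remark~\ref{why posets}: with the total preorder on $\{x_1,x_2\}$ one has $x_1 \pref x_2$ and $x_2 \pref x_1$ simultaneously, so neither $x_1 \strpref x_2$ nor $x_2 \strpref x_1$ holds, both elements are maximal in $\{x_1,x_2\}$, and the induced choice function coincides with that of the discrete poset. Restricting to posets thus removes exactly this redundancy, which is what makes injectivity possible.
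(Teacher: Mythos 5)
Your argument is correct and is essentially the contrapositive of the paper's own proof: both reduce to the doubleton $\{x,y\}$, use anti-symmetry to turn $x \pref_1 y$ into a strict preference so that $\justmaxi_X(\pref_1)(\{x,y\}) = \{y\}$, and observe that failure of $x \pref_2 y$ keeps $x$ maximal under $\pref_2$. No gap; the appeal to anti-symmetry is exactly where the paper also locates the crucial step, as discussed in Remark~\ref{why posets}.
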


We can extend maximization to preference relations and choice functions
over acts by defining for every space $X$ the measurable function $\maxi_X =
\justmaxi_{\acts X} : \prel X \to \choice X$. A crucial property of
$\maxi_X$ is stated in the following proposition, which we prove in the
appendix:
\begin{proposition} \label{maxi natural}
 The mapping $\maxi_X$ is natural in $X$. This means that for every
measurable function $\varphi : X \to Y$ we have that $\maxi_Y \circ \prel \varphi =
\choice \varphi \circ \maxi_X$.
\end{proposition}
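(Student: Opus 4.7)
The plan is to fix an arbitrary preference relation $\pref \in \prel X$ and verify the equation $\maxi_Y(\prel \varphi(\pref)) = \choice \varphi(\maxi_X(\pref))$ by showing that the two choice functions agree on every finite set $K \subseteq \acts Y$. Writing $\pref^\varphi$ for $\prel \varphi(\pref)$ and $K' = \{g \circ \varphi \mid g \in K\} \subseteq \acts X$, unfolding the definitions of $\maxi$, $\prel \varphi$ and $\choice \varphi$ shows that the value at $K$ on the left-hand side is the set of $\pref^\varphi$-maximal elements of $K$, while the value at $K$ on the right-hand side is $\{f \in K \mid f \circ \varphi \in \maxi_X(\pref)(K')\}$, that is, the set of $f \in K$ such that $f \circ \varphi$ is $\pref$-maximal in $K'$. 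The proposition then reduces to the claim: for each $f \in K$, the act $f$ is maximal in $K$ under $\pref^\varphi$ if and only if $f \circ \varphi$ is maximal in $K'$ under $\pref$.

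The crucial observation is that the strict part of the preference transfers cleanly along $\varphi$. Directly from the definition $g_1 \pref^\varphi g_2$ iff $g_1 \circ \varphi \pref g_2 \circ \varphi$, one gets $g_1 \strpref^\varphi g_2$ iff $g_1 \circ \varphi \strpref g_2 \circ \varphi$ for all $g_1,g_2 \in \acts Y$. For the forward direction, assume $f$ is $\pref^\varphi$-maximal in $K$ and suppose for contradiction that some $h \in K'$ satisfies $f \circ \varphi \strpref h$; writing $h = g \circ \varphi$ for some $g \in K$ and applying the observation yields $f \strpref^\varphi g$, contradicting maximality of $f$. The reverse direction is symmetric: any $k \in K$ witnessing $f \strpref^\varphi k$ gives $k \circ \varphi \in K'$ with $f \circ \varphi \strpref k \circ \varphi$, contradicting the maximality of $f \circ \varphi$ in $K'$.

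I do not anticipate a genuine obstacle: once one notices that $K' = (\acts \varphi)[K]$ and that $\pref^\varphi$ is by definition the pullback of $\pref$ along $\acts \varphi$, the verification is essentially a definitional unwinding, and the only care needed is to handle the strict part $\strpref$ correctly, since maximality is stated in terms of it. Conceptually, the proposition asserts that the family $(\maxi_X)_X$ is a natural transformation from the functor $\prel$ to the functor $\choice$, which is precisely what is needed in the next step to lift $\maxi$ to a morphism between preference structures and choice structures and thereby embed the universal preference structure into the universal choice structure.
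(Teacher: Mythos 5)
Your proof is correct and follows essentially the same route as the paper's: both unfold the definitions of $\prel\varphi$, $\choice\varphi$ and $\maxi$ and verify elementwise that maximality transfers along $\varphi$. The only cosmetic differences are that the paper first reduces to naturality of $\justmaxi$ for an arbitrary function between sets and phrases maximality via the non-strict condition ($m \pref k$ implies $k \pref m$), whereas you work directly with acts and with the strict part $\strpref$, which transfers along $\varphi$ exactly as you observe.
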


\subsection{Embedding preference structures into choice structures}
\label{embedding}

We can use the mappings $\maxi_X$ to turn preference structures into choice
structures. For every preference structure $\X =
(T_i,T_j,\theta_i,\theta_j)$ define the choice structure $\embedded{\X}
= (T_i,T_j,\maxi_{A_j \times T_j} \circ \theta_i,\maxi_{A_i \times T_i}
\circ \theta_j)$. It is an easy consequence of Proposition~\ref{maxi
natural} that this embedding preserves morphisms in the sense that
whenever $\chi : \X \to \X'$ is a morphism of preference structures then
the same pair of measurable functions is also a morphism of choice
structures $\chi : \embedded{\X} \to \embedded{\X'}$.

It is also possible to characterize the class of choice structures
$\embedded{\X} = (T_i,T_j,\maxi_{A_j \times T_j} \circ
\theta_i,\maxi_{A_i \times T_i} \circ \theta_j)$ arising from some
preference structure $\X = (T_i,T_j,\theta_i,\theta_j)$. To this end,
one can use any of the axiomatizations of the class $\mathbb{P}$ of
choice functions that arise from the maximization in a poset. Such an
axiomatization is given for instance in Theorem~2.9 of
\cite{AleskerovetAl07}. We then have that a choice structure $\X' =
(T'_i,T'_j,\theta'_i,\theta'_j)$ is equal to $\embedded{\X}$ for some
preference structure $\X$ if and only if for all types $t_i \in T'_i$ and $t_j \in
T'_j$ the choice functions $\theta'_i(t_i) \in \choice (A_j \times T'_j)$
and $\theta'_j(t_j) \in \choice (A_i \times T'_i)$ are in $\mathbb{P}$.

Because $\maxi_X$ is injective for every space $X$, it follows that
whenever $\embedded{\X} = \embedded{\X'}$ for preference structures $\X$
and $\X'$ then $\X = \X'$. Therefore, any difference between preference
structures is preserved in choice structures. One can also define an
injective embedding of the universal preference structure into the
universal choice structure. Consider the unique morphism of choice
structures $\upsilon$ from $\maxi(\U')$ to $\U$ that exists by
Theorem~\ref{omega terminal}. This morphism consists of two measurable
functions $\upsilon_i : \Omega'_i \to \Omega_i$ and $\upsilon_j :
\Omega'_j \to \Omega_j$ for which we argue in the appendix that it is
injective if $A_i$ and $A_j$ are finite. Hence, we obtain the following
theorem:
\begin{theorem} \label{upsilon injective}
 The uncertainty spaces
$\Omega'_i$ and $\Omega'_j$ of all preference hierarchies are isomorphic
to two subspaces of the spaces of all choice hierarchies $\Omega_i$ and
$\Omega_j$ respectively.
\end{theorem}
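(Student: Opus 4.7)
My plan is to show that the components $\upsilon_i$ and $\upsilon_j$ of the unique morphism $\upsilon : \maxi(\U') \to \U$ furnished by Theorem~\ref{omega terminal} are injective and, moreover, identify the algebras on $\Omega'_i$ and $\Omega'_j$ with the pullback algebras along $\upsilon_i$ and $\upsilon_j$ respectively; this is precisely what is needed for an isomorphism onto a subspace. I would work level-by-level. Exploiting naturality of $\maxi$ (Proposition~\ref{maxi natural}), an easy induction factors each $\upsilon_{i,n}$ through the preference-side projection $\zeta'_{i,n}$ via measurable maps $\nu_{i,n} : \Omega'_{i,n} \to \Omega_{i,n}$ (and similarly for $j$), with $\nu_{i,1} = \maxi_{A_j}$ and
\[
 \nu_{i,n+1} = \choice(\id_{A_j} \times \nu_{j,n}) \circ \maxi_{A_j \times \Omega'_{j,n}} = \maxi_{A_j \times \Omega_{j,n}} \circ \prel(\id_{A_j} \times \nu_{j,n}).
\]
Since $\Omega'_i$ and $\Omega_i$ are projective limits of the $\Omega'_{i,n}$ and $\Omega_{i,n}$, if each $\nu_{i,n}$ is injective and the algebra on $\Omega'_{i,n}$ equals $\{\nu_{i,n}^{-1}[F] : F \subseteq \Omega_{i,n} \text{ measurable}\}$, then the analogous statements hold for $\upsilon_i$, and likewise for $\upsilon_j$.

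I would then establish these two claims by mutual induction on $n$, simultaneously for both players. The base case is straightforward: injectivity of $\nu_{i,1} = \maxi_{A_j}$ comes from Proposition~\ref{justmaxi injective}, while the pullback-algebra identity reduces to the equality $B_{g_1 \pref g_2} = \maxi_{A_j}^{-1}[B^{\{g_1,g_2\}}_{\{g_2\}}]$ for distinct $g_1, g_2 \in \acts A_j = Z^{A_j}$, which holds by anti-symmetry and reflexivity. The inductive step hinges on an act-extension lemma: every measurable finite-step $g \in \acts(A_j \times \Omega'_{j,n})$ factors as $f \circ (\id_{A_j} \times \nu_{j,n})$ for some $f \in \acts(A_j \times \Omega_{j,n})$. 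Given the pullback-algebra identity at level $n$, finiteness of $A_j$ lets one split $g$ into $(g_a)_{a \in A_j}$ and realize each preimage $g_a^{-1}(z)$ as $\nu_{j,n}^{-1}[F_{a,z}]$ for measurable $F_{a,z} \subseteq \Omega_{j,n}$; iteratively taking set differences and assigning a default outcome on the leftover region produces an honest measurable partition of $\Omega_{j,n}$ and hence a finite-step act $f$ as required. Once the lemma is available, injectivity of $\nu_{i,n+1}$ follows from injectivity of $\maxi$ and the fact that $\prel(\id_{A_j} \times \nu_{j,n})$ is injective because every act lifts, and the pullback-algebra identity at level $n+1$ follows by lifting the acts $g_1, g_2$ in any generator $B_{g_1 \pref g_2}$ to $f_1, f_2 \in \acts(A_j \times \Omega_{j,n})$ and checking $B_{g_1 \pref g_2} = \nu_{i,n+1}^{-1}[B^{\{f_1,f_2\}}_{\{f_2\}}]$ when $g_1 \neq g_2$.

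The hardest part is the interlocking mutual induction between the pullback-algebra identity and the act-extension lemma, which is exactly where the finiteness of $A_i$ and $A_j$ is indispensable: it keeps the algebra on each product $A \times \Omega'_n$ generated in a manageable way by measurable sets on the $\Omega'_n$-factor, keeps the number of cells of any partition finite, and reduces the base case over $\acts A_j = Z^{A_j}$ to a finite affair. Once both claims hold at every level, the projective-limit structure delivers injectivity of $\upsilon_i$ and $\upsilon_j$ together with the pullback-algebra identification of $\Omega'_i$ and $\Omega'_j$ as subspaces of $\Omega_i$ and $\Omega_j$, completing the proof.
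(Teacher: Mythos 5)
Your proposal is correct, and its skeleton coincides with the paper's: both factor the approximations $\upsilon_{i,n}$ of the unique morphism from Theorem~\ref{omega terminal} as $\nu_{i,n}\circ\zeta'_{i,n}$ (your $\nu_{i,n}$ are exactly the paper's componentwise $\delta_n$, defined by the same recursion and the same naturality computation), prove the level maps injective by a mutual induction, and conclude via joint injectivity of the limit projections. The genuine difference lies in how the inductive step is discharged. The paper obtains injectivity of $\choice(\id_{A_j}\times\nu_{j,n})$ from the general fact that $\choice$ preserves monic morphisms with discrete codomain, and this is where it uses finiteness of $Z$, $A_i$, $A_j$: all approximants $\Omega_{j,n}$ are then finite discrete spaces, so the lifting of acts along $\id_{A_j}\times\nu_{j,n}$ is free. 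You instead maintain a stronger induction hypothesis, namely that the algebra on $\Omega'_{\cdot,n}$ is exactly the pullback along $\nu_{\cdot,n}$ of the algebra on $\Omega_{\cdot,n}$, and derive the act-extension from it by the same disjointification device as the paper's Lemma~\ref{factorisation}; surjectivity of $\acts(\id_{A_j}\times\nu_{j,n})$ then yields injectivity of $\prel(\id_{A_j}\times\nu_{j,n})$, and Propositions \ref{justmaxi injective}~and~\ref{maxi natural} finish the step, with your base-case identity $B_{g_1 \pref g_2} = \justmaxi^{-1}[B^{\{g_1,g_2\}}_{\{g_2\}}]$ (for $g_1\neq g_2$, using anti-symmetry) doing the same work one level down. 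Your route buys two things: it tracks algebras explicitly, so at the limit you get not only injectivity of $\upsilon_i,\upsilon_j$ but also that the algebras on $\Omega'_i,\Omega'_j$ are the trace algebras, which is literally the subspace isomorphism asserted by the theorem (in the paper this identification is immediate from discreteness of the finite approximants, but it is not spelled out); and it localizes the role of finiteness, since discreteness of the codomains is never invoked, only some product-algebra bookkeeping over the action sets (indeed your argument suggests the embedding survives with weaker finiteness assumptions than Remark~\ref{r:finiteness} requires, so your claim that finiteness of $A_i,A_j$ is ``indispensable'' is, if anything, an overstatement). What the paper's argument buys is brevity: discreteness makes both the lifting lemma and the level-$n$ algebra identification trivial, so no auxiliary induction hypothesis has to be carried along.
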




\section{Discussion}
\label{discussion}

\subsection{Universality of the universal choice structure}

The universal choice structure that is defined in this paper has all of the properties that have been associated with the term ``universality'' in the literature. It is terminal in the categorical sense, meaning that for every choice structure there is a unique morphism to the universal one. It is complete in the sense that the measurable functions $\mu_i$ and $\mu_j$ of the universal choice structure are onto. The universal choice structure is also non-redundant, as we discussed above.
In particular, because we are working in a category of coalgebras, completeness and non-redundancy follow directly from terminality.

\subsection{Topological assumptions and coherence}

Our construction of the universal choice structure is based on
hierarchies of choice functions, where the levels in the hierarchies are
linked by what we call coherence morphisms. We use this terminology in
analogy to the construction of coherent probabilistic hierarchies by the
use of marginalization. However, our result is topology-free, in that we
do not need to impose topological assumptions on the spaces. This is
different from the classic
probabilistic setting, where without additional topological assumptions
the universal structure can not be built from coherent hierarchies of
beliefs \cite{HeifetzSamet1998,heifetzSamet1999,Viglizzo05}. 
Different from \cite{HeifetzSamet1998} and \cite{heifetzSamet1999}, the construction of the universal structure directly from coherent choice hierarchies in a topology-free setup succeeds in our case because, like in \cite{DiTillio08}, all the sets we deal with are endowed with an algebra, not a $\sigma$-algebra, of subsets. We therefore do not incur in the same problem as in \cite{heifetzSamet1999}, because the space of all the choice hierarchies is endowed with an algebra, not a $\sigma$-algebra, of subsets and the extension of the coherent choice hierarchies to such a space is thus unproblematic.

\subsection{Introspection}
\label{sec:introspection}

An important difference between the definitions of type structure in the
literature is how they account for a player's knowledge about her own
type. Similar to \cite{EpsteinWang96}, we follow an approach in the
spirit of \cite{BranDekel1993}, where a player has no uncertainty about her own
type. This treatment is different from the one in \cite{DiTillio08}, who
allows for players to be uncertain about their own type. We could use an
analogous approach by considering choice structures based on just one
uncertainty space $X$, whose elements should be thought of as
the equivalent of the states from Section~\ref{subsection choice structures}, and then
define a choice structure with measurable functions $\sigma :
X \to A_i \times A_j$ and $\vartheta : X \to \choice X \times \choice
X$. The first component of $\vartheta(x) =
(\vartheta_i(x),\vartheta_j(x))$ specifies the type of Ida, and the
second the type of Joe.
As for our proofs, this would mean that we would consider coalgebras for
the endofunctor in $\Unc$ that maps an uncertainty space $X$ to the
product $(A_i \times A_j) \times (\choice X \times \choice X)$, and with
the obvious behavior on arrows.

To avoid confusion it should be mentioned that the
approach of \cite{DiTillio08} is different from the one by
\cite{MertensZamir}, even though both settings use analogous mathematical structures. The difference is that \cite{MertensZamir}
impose an introspection condition that requires the players to be certain about 
their own type, whereas \cite{DiTillio08} does not.
As discussed in \cite[sec.~7.4]{Viglizzo05}, it is unclear how this condition can be formulated in coalgebraic terms for the structures from
\cite{MertensZamir} and \cite{DiTillio08}.

%
%




\subsection{The coalgebraic approach}

For the technical proofs in the appendix we make extensive use of concepts
from coalgebra, which in turn uses notions from category theory such as
the base category, the coalgebraic  functor, terminal objects and
natural transformations between  functors. The coalgebraic approach provides us with a modular language in which to describe the
construction of the universal choice structure. This comes with the
following advantages: First, it is easier to adapt the theory to a
new representation of uncertainty. Because the general construction of
the terminal coalgebra is parametric in the functor, it suffices to define the appropriate functor and check that it preserves limits of
suitable diagrams. 
Second, category theory provides an encompassing language that can be used to compare properties between different kinds of type structures, such as the concepts of completeness, non-redundancy and  the like. 
Third, the coalgebraic approach allows to abstract away from
notational complications such as for instance the indexing that is involved in
representing games with multiple players. When formulating the terminal-sequence construction, the indexing is conveniently taken care of in the base category and the coalgebraic functor.

\subsection{Interpretation of the framework}

A point deserving further investigation is the interpretation of both choice hierarchies and preference hierarchies. As a strict behavioral interpretation of such hierarchies seems to conflict with the specification of the players' actual behavior at a state, for now we viewed hierarchies as choice attitudes rather than choice behavior. Deeper analysis on this point is left to more philosophical research in the future, while here we have been primarily concerned with building the necessary technical apparatus introducing hierarchies of choice functions in situations of interactive epistemology. 

\bibliographystyle{alpha}
\bibliography{biblio,bibliocopy,choicePrince,bibcoalgebra}

\appendix



\section{Overview of the appendix}

This appendix contains the technical proofs about the constructions of
the universal choice structure from Section~\ref{choice structures} and
the embedding of the universal preference structure into the universal
choice structure from Section~\ref{preference structures}. We view these
constructions as instances of more general constructions from the theory
of coalgebras. Thereby, we follow the approach laid out in
\cite{Moss04,Viglizzo05}, who show how classical probabilistic type spaces can
be seen as coalgebras. The theory of coalgebras itself makes use of
notions from category theory. We review all relevant concepts from
category theory in \iftoggle{preprint}{Section~\ref{app:preliminaries} of this appendix}{\ref{app:preliminaries}}.
For the interested reader, more in-depth expositions of category theory can be found in
\cite{MacLane} and \cite{awod06}. For a treatment of the theory of
coalgebras we refer to \cite{Jacobs16} or the classic introduction
\cite{Rutten00}.

The cornerstones for the results in this appendix are as follows:

\textit{Choice structures as coalgebras:} The first insight is that
choice structures are coalgebras and morphisms of choice structures are
coalgebra morphisms. The definition of a coalgebra is parametric in a
fixed endofunctor $G$ on a category $\catD$. The category $\catD$ is called the ``base category'' and the functor $G$ is often called the ``type functor'', but we call it the ``coalgebraic'' functor in order to avoid confusion. We explain the
definition of a category and the notion of a functor in \iftoggle{preprint}{Sections
\ref{app:category}~and~\ref{app:functors}}{\ref{app:category}~and~\ref{app:functors}}. Most relevant for us are the
category $\Unc$ of uncertainty spaces and measurable functions and the
functor $\choice$ on $\Unc$, which we have already implicitly defined in
Section~\ref{choice over acts}. However, for technical reasons having to
do with the fact that choice structures represent the uncertainty of two
players, instead of just one, we have to use $\Unc^2$ as our base
category, which essentially captures situations involving two
uncertainty spaces at once, instead of just one as in $\Unc$. One also
has to work with the functor $\choicePlayers$, which is an adaptation of
$\choice$ to $\Unc^2$ that captures the dependence of a player's
uncertainty on the uncertainty of the other player.

\textit{Existence of the terminal coalgebra:} The next observation is
that the universal choice structure corresponds to what is called
the terminal coalgebra in the theory of coalgebras. Our aim is then to
use a general coalgebraic construction to
show the existence of the terminal coalgebra in the category of choice
structures. This construction, called the terminal sequence construction
and discussed in \cite{Worrell99}, can be seen as a coalgebraic
counterpart of the construction of the universal type space from the
space of coherent hierarchies of beliefs. It can be performed in all
categories of coalgebras for a given functor and under certain rather mild conditions on the
coalgebraic functor it is guaranteed to yield the terminal coalgebra. There are
different versions of these sufficient conditions, but mostly they
require that the coalgebraic functor preserves the limits of certain diagrams
in the base category. We review the required notion of limits in
\iftoggle{preprint}{Section~\ref{app:limits}}{\ref{app:limits}}. In the case of the functor $\choicePlayers$,
whose coalgebras are choice structures, we can only show the
preservation of a limit for a particular kind of diagram, which we call
epic countable cochains. In \iftoggle{preprint}{Section~\ref{proofs for choices}}{\ref{proofs for choices}} we review
the terminal sequence construction to check that the preservation of
limits for epic countable cochains suffices for it to yield the terminal
coalgebra.

\textit{Preservation of limits of countable epic cochains:} To guarantee
the existence of the terminal coalgebra we then show in
\iftoggle{preprint}{Section~\ref{choice preserves limits}}{\ref{choice preserves limits}} that the functor $\choicePlayers$
preserves limits of epic countable cochains. It is relatively easy to
see that this reduces to showing that $\choice$ preserves these limits.
The functor $\choice$ can be seen as the composition of two simpler
functors, the functor $F$, which maps an uncertainty space $X$ to the
set $F X$ of all Savage acts over $X$, and the functor $\justchoice$,
which maps a set $X$ to the uncertainty space $\justchoice X$ of all
choice functions over $X$. We then show that both $F$ and $\justchoice$
preserve suitable limits such that their composition $\choice$ preserves
limits of countable epic cochains. For $F$ this is a relatively easy
observation that is already implicitly part of the proofs in
\cite{DiTillio08}. That $\justchoice$ preserves limits of the right
kind is Theorem~\ref{preservation of limits}, which is the central
technical contribution of this paper.

\textit{Embedding of preference structures:} We also use the coalgebraic
approach to prove the results from Section~\ref{preference structures}
about the embedding of the universal preference structure of
\cite{DiTillio08} into the universal choice structure. The setting of
preference structures can also be reformulated in the coalgebraic
terminology simply by replacing the functor $\justchoice$ mentioned
above by a functor $\justprel$ that captures preference relations.
The embedding of preference structures into choice structures then
follows from the observation that there is a natural transformation from
$\justchoice$ to $\justprel$ that is injective at all components. We
discuss the notion of a natural transformation in \iftoggle{preprint}{Section~\ref{app:nat
trans}}{\ref{app:nat trans}}. In \iftoggle{preprint}{Section~\ref{proofs for prel}}{\ref{proofs for prel}} we explain how the existence of
such a natural transformation from $\justprel$ to $\justchoice$ gives
rise to an embedding of the universal preference structure into the
universal choice structure.

\section{Preliminaries}
\label{app:preliminaries}

We start by reviewing the basics from category theory and the theory of
coalgebras.

\subsection{Categories}
\label{app:category}

A \emph{category} $\catD$ is a collection of objects, possibly a proper
class, together with a collection of morphisms, written as $f : X \to
Y$, between any two objects $X$ and $Y$ such that for every object $X$
in $\catD$ there is an \emph{identity} morphism $\id_X : X \to X$ and
for all objects $X$, $Y$ and $Z$, and morphisms $f : X \to Y$ from $X$
to $Y$ and $g : Y \to Z$ from $Y$ to $Z$ their \emph{composition} $g
\circ f : X \to Z$ is a morphism in $\catD$ from $X$ to $Z$. The
identities and compositions are required to satisfy the identity law
that $f \circ \id_X = f = \id_Y \circ f$ for all morphism $f : X \to Y$
and the associativity law that $(h \circ g) \circ f = h \circ (g \circ
f)$ for all $f : X \to Y$, $g : Y \to Z$ and $h : Z \to U$. For a
morphism $f : X \to Y$ in $\catD$ we call the object $X$ the
\emph{domain} of $f$ and the object $Y$ the \emph{codomain} of $f$.

Examples of categories, which we use in this paper, are the category of
sets $\Set$, which has all the sets as its objects and functions between
them as morphisms, and the category of uncertainty spaces $\Unc$, which
has uncertainty spaces as objects and measurable functions as morphisms.
These clearly satisfy the identity and composition laws if one uses
the standard identity functions as identity morphisms and the standard
compositions of functions for the composition. A different kind of
example is the category $\CS$ of all choice structures as defined in
Section~\ref{choice structures} with morphisms of choice structures
between them. The results of this paper mostly concern the structure of
this category $\CS$.

A more technical example of a category is $\Unc^2$, which has pairs $X =
(X_0,X_1)$ of uncertainty spaces $X_0$ and $X_1$ as objects and in which
the morphisms from $X = (X_0,X_1)$ to $Y = (Y_0,Y_1)$ are pairs of
measurable functions $\varphi = (\varphi_0,\varphi_1)$. One can check
that this is a category, when one gives the obvious component-wise
definition of identities and composition. We use the category $\Unc^2$
as a convenient abstraction for the technical results of this appendix
because it concisely represents situations that concern the uncertainty
of two players. In an object $X = (X_0,X_1)$ of $\Unc^2$, the uncertainty
space $X_0$ represents the uncertainty of Ida, whereas $X_1$ represents
the uncertainty of Joe.

\subsection{Functors}
\label{app:functors}

A \emph{functor} $G$ from a category $\catD$ to a category $\catE$
is an assignment of an object $G X$ in $\catE$ to every object $X$ of
$\catD$ and an assignment of a morphism $G f : G X \to G Y$ in $\catE$
to every morphism $f : X \to Y$ in $\catD$ that preserves identities and
compositions, meaning that $G \id_X = \id_{G X}$ for all $X$ and $G(g
\circ f) = G g \circ G f$ for all $g$ and $f$. If $G$ is a functor from
a category $\catD$ to the same category $\catD$ one also calls $G$ an
\emph{endofunctor}.

A simple example of a functor is the powerset functor $\powerset$ from 
$\Set$ to $\Set$. It maps a set $X$ to its powerset $\powerset X$
and a function $f : X \to Y$ to the direct image map $\powerset f :
\powerset X \to \powerset Y$, which is defined such that $\powerset f
(U) = f[U] = \{f(x) \in Y \mid x \in U\}$ for all $U \subseteq X$. The
functor $\powerset$ is an endofunctor on the category of sets.

Another example of an endofunctor in $\Set$ is the finite distribution
functor $\dist$. It maps a set $X$ to the set $\dist X$ of all
probability distributions over $X$ that have finite support. A function
$f : X \to Y$ is mapped to the function $\dist f : \dist X \to \dist Y$
which is such that $\dist f (\mu) = \mu^f$ is the probability
distribution $\mu^f$ on $Y$ such that $\mu^f(V)
= \sum_x \mu(x)$ for $x \in f^{-1}[V]$.

A \emph{contravariant functor} from $\catD$ to $\catE$ is a functor from
$\catD$ to $\catE^\op$, where $\catE^\op$ is just like $\catE$, but the
direction of all morphisms is turned around. When $G$ is a contravariant
functor from $\catD$ to $\catE$, one usually just thinks of it as
turning morphisms around, in the sense that $G f : G Y \to G X$ for all
$f : X \to Y$. If one wants to emphasize that a functor $G$ from $\catD$
to $\catE$ is not contravariant, one calls $G$ \emph{covariant}.

An example of a contravariant functor from this paper is the mapping $F$ sending an
uncertainty space $X$ to the set $F X$ of Savage acts for $X$ and
a measurable function $\varphi : X \to Y$ to the function $F f : F Y \to
F X$. This is a
contravariant functor from the category of uncertainty spaces to the
category of sets.

Another example of a contravariant functor is the mapping $\justchoice$
from Section~\ref{subsec choice functions}, which sends a set $X$ to the
uncertainty space $\justchoice X$ of choice functions over that set and
a function $f : X \to Y$ to the measurable function $\justchoice f :
\justchoice Y \to \justchoice X$. This defines a contravariant functor
from the category of sets to the category of uncertainty spaces. It is
straightforward to check that it preserves identities. To see that
$\justchoice$ preserves the composition of morphisms, one sees from the
definitions that this amounts to checking the following equality:
\[
 f^{-1}[g^{-1}[C(g[f[K]])]] \cap K = f^{-1}[g^{-1}[C(f[g[K]])] \cap
g[K]] \cap K,
\]
for all finite sets $K \subseteq Y$, and functions $f : X \to Y$ and $g
: Y \to W$.

The mapping $\choice$ from Section~\ref{choice over acts} is a covariant
endofunctor on the category of uncertainty spaces that results from
first applying the contravariant functor $\acts$ from uncertainty spaces
to sets and then the contravariant functor $\justchoice$ to get back to
uncertainty spaces. Formally, this means that $\choice$ maps an
uncertainty space $X$ to the space $\choice X = \justchoice \acts X$ and
a measurable function $\varphi : X \to Y$ to the measurable function
$\choice \varphi = \justchoice F \varphi : \choice X \to \choice Y$.
Because the effects of the two contravariant functors $\acts$ and
$\justchoice$ cancel out, this means that $\choice$ is a covariant
functor.

The mappings $\justprel$ and $\prel$ from \cite{DiTillio08}, which we
survey in Section~\ref{di tillio}, can also be seen as functors
analogous to $\justchoice$ and $\choice$. That is, $\justprel$ is a
contravariant functor from $\Set$ to $\Unc$ and $\prel$ is the
covariant endofunctor on $\Unc$ that results from first applying the
contravariant functor $\acts$ from $\Unc$ to $\Set$ and then the
functor $\justprel$.

\subsection{Coalgebras}

We now explain how choice structures can be seen as an instance of the
more general notion of a coalgebra for an endofunctor in some category. 

To define the notion of a coalgebra we need to fix a category $\catD$
and an endofunctor $G$ on $\catD$, that is, a functor $G$ from $\catD$
to $\catD$. A \emph{$G$-coalgebra} $(X,\xi)$ is an object $X$ of $\catD$
together with a morphism $\xi : X \to G X$ in $\catD$. The class of all
$G$-coalgebras can be turned into a category by taking as morphisms from
a $G$-coalgebra $(X,\xi)$ to a $G$-coalgebra $(Y,\upsilon)$ all
morphisms $f : X \to Y$ in the category $\catD$ that satisfy the
property that $\upsilon \circ f = G f \circ \xi$.

A relatively simple example of a category of coalgebras for
the powerset functor $\powerset$ in the category $\Set$. One can see
that $\powerset$-coalgebras are essentially the same as Kripke frames
\cite{Fagin95,Blackburn02}. For any Kripke frame $(W,R)$, with a set $W$
and relation $R \subseteq W^2$, we can define a $\powerset$-coalgebra
$(W,\alpha)$ where $\alpha : W \to \powerset W$ is such that $\alpha(w)
= R[\{w\}] = \{v \in W \mid (w,v) \in R\}$ for all $w \in W$.
Conversely, every $\powerset$-coalgebra $(W,\alpha)$ also defines a
Kripke frame $(W,R)$ such where $R = \{(w,v) \mid v \in \alpha(w)\}$.
One can also verify that the coalgebra morphisms between
$\powerset$-coalgebras correspond precisely to the morphisms
between Kripke frames.

Choice structures for fixed sets $A_i$ and $A_j$ of actions can be
defined as coalgebras for a particular functor $\choicePlayers$ over the
category $\Unc^2$. The functor $\choicePlayers$ maps an object
$(X_0,X_1)$ to the object $(\choice(A_j \times X_1),\choice(A_i \times
X_0))$. The change of indices in the components is intentional, as it
corresponds to encoding attitudes about the opponent's attitudes. For
morphisms, the functor $\choicePlayers$ sends a pair of measurable
functions $(\varphi_0,\varphi_1)$ from $(X_0,X_1)$ to $(Y_0,Y_1)$ to the
pair of measurable functions $(\choice(\id_{A_j} \times \varphi_1),
\choice(\id_{A_i} \times \varphi_0))$ from $(\choice(A_j \times
X_1),\choice(A_i \times X_0))$ to $(\choice(A_j \times Y_1),\choice(A_i
\times Y_0))$.

We can view every choice structure $\X = (T_i,T_j,\theta_i,\theta_j)$,
defined according to Definition~\ref{definition choice structures}, as a
$\choicePlayers$-coalgebra $((T_i,T_j),(\theta_i,\theta_j))$ on the
object $(T_i,T_j)$ of $\Unc^2$ with the morphism $(\theta_i,\theta_j) :
(T_i,T_j) \to \choicePlayers (T_i,T_j)$. Conversely, every
$\choicePlayers$-coalgebra $((X_0,X_1),(\xi_0,\xi_1))$ contains
measurable functions $\xi_0 : X_0 \to \choice(A_j \times X_1)$ and
$\xi_1 : X_1 \to \choice(A_i \times X_0)$ and thus gives rise to a
choice structure $(X_0,X_1,\xi_0,\xi_1)$. One can also check that the
morphisms of choice structures, as defined in Definition~\ref{definition
choice structures}, correspond precisely to the coalgebra morphisms
between $\choicePlayers$-coalgebras.

Analogous to choice structures, one can view preference structures as
coalgebras for a functor. To this aim one considers the functor
$\prelPlayers$ from $\Unc^2$ to $\Unc^2$ that is defined by replacing
all occurrences of $\choice$ in the definition of $\choicePlayers$ with
$\prel$. This means that $\prelPlayers$ sends an object $(X_0,X_1)$ to
$(\choice(A_j \times X_1),\choice(A_i \times X_0))$ and a morphism
$(\varphi_0,\varphi_1)$ to $(\prel(\id_{A_j} \times \varphi_1),
\prel(\id_{A_i} \times \varphi_0))$.



\subsection{Isomorphisms and epic or monic morphisms}

A morphism $f : X \to Y$ in a category $\catD$ is an \emph{isomorphism}
if there exists a morphism $g : Y \to X$ such that $g \circ f = \id_X$
and $f \circ g = \id_Y$. One writes $X \simeq Y$ in case there exists an
isomorphism $f : X \to Y$. One can check that in the category of sets
the isomorphisms are precisely the bijective functions and in the
category $\Unc$ they are the bijective measurable functions whose
inverse is also measurable, analogously to the homeomorphism between
topological spaces. The isomorphisms in $\Unc^2$ are those pairs
$(\varphi_0,\varphi_1)$ of morphisms such that both $\varphi_0$ and
$\varphi_1$ are isomorphisms in $\Unc$.

A morphism $f : X \to Y$ is \emph{epic} if for all further morphisms
$g,h : Y \to T$ it holds that if $g \circ f = h \circ f$ then already $g
= h$. A morphism $f : X \to Y$ is \emph{monic} if for all further
morphisms $g,h : T \to X$ it holds that if $f \circ g = f \circ h$ then
already $g = h$. One can check that in the category of sets epic
morphisms are precisely the surjective functions and monic morphisms
precisely the injective functions. Similarly, in the category of
uncertainty spaces the epic and monic morphisms are the surjective
and injective measurable functions, respectively. Epic and monic morphisms in $\Unc^2$
are epic and monic in both components. That is, $(\varphi_0,\varphi_1)$
is epic if both $\varphi_0$ and $\varphi_1$ are epic, and analogously
for monic.

A functor $F$ from $\catD$ to $\catE$ \emph{preserves} an epic morphism
$f$ of $\catD$ if $F f$ is epic in $\catE$, and $F$ \emph{preserves} a monic
morphism $f$ if $F f$ is monic. Note that, because an epic morphism in
$\catE^\op$ is monic in $\catE$, a contravariant functor $F$ from
$\catD$ to $\catE$ preserves an epic morphism $f$ if $F f$ is monic in
$\catE$, and analogously for the preservation of monic morphisms.

We show in \iftoggle{preprint}{Sections \ref{choice epic}~and~\ref{choice monic}}{\ref{choice epic}~and~\ref{choice monic}} that
$\choice$ preserves all epic morphisms and that it preserves all monic
morphisms $\varphi : X \to Y$ for which $Y$ is a discrete space.

\subsection{Terminal objects, products and limits of chains}
\label{app:limits}


We make use of three different instances of the notion of a limit:
terminal objects, products and limits of chains. We define these three
concepts separately in the following, but to the interested reader it
should be noted that they are all instances of the more general concept
of a limit for a diagram, which is discussed in all of the texts on
category theory cited above.

A \emph{terminal object} in a category $\catD$ is any object $\top$ with
the property that for every object $T$ of $\catD$ there exists a unique
morphism $\bang_T : T \to \top$. It follows directly from this
definition that any two terminal objects in some category need to be
isomorphic. To see this assume that we have a second terminal object
$\top'$. Because $\top'$ is terminal we have the morphism $\bang'_\top :
\top \to \top'$. Because $\top$ is universal there is the morphism
$\bang_{\top'} : \top' \to \top$. We have that $\id_\top = \bang_\top =
\bang_{\top'} \circ \bang'_\top$ because $\bang_\top$ is the unique
morphism from $\top$ to $\top$. A similar argument shows that
$\id_{\top'} = \bang'_\top \circ \bang_{\top'}$. As a consequence one
can assume that if a category has a terminal object then it is unique,
at least if one only cares about the identity of objects up-to
isomorphism.

The terminal object in the category $\Unc$ can be defined as the
uncertainty space $\top = (\{\star\},\{\emptyset,\{\star\}\})$ which
contains just one point. The terminal object in the category $\Unc^2$
can be defined componentwise as the object $(\top,\top)$ where $\top$ is
the terminal object of $\Unc$. A more interesting example of a terminal
object is the universal choice structure. The universality property from
Theorem~\ref{omega terminal} expresses precisely that the universal
choice structure is the terminal object in the category of choice
structures.

Given two objects $X$ and $Y$ of a category $\catD$, a \emph{product} of
$X$ and $Y$ is an object of $\catD$, written as $X \times Y$, together
with two morphisms $\pi_1 : X \times Y \to X$ and $\pi_2 : X \times Y \to
Y$, which have the following universal property: For every object $T$
and morphisms $f : T \to X$ and $g : T \to Y$ there is a morphism $u : T
\to X \times Y$, often written as $(f,g)$, that is unique for having the
property that $f = \pi_1 \circ u$ and $g = \pi_2 \circ u$.

Similar as for terminal objects, one can see that any two objects with
this universal property, relative to fixed $X$ and $Y$, need to be
isomorphic. Hence, one often speaks of the product of $X$ and $Y$ as if
it was unique.

In the category $\Unc$ one can define the product $X \times Y$ of two
uncertainty spaces $X$ and $Y$ to be an uncertainty space whose states
are all pairs $(x,y)$, where $x$ is a state of $X$ and $y$ is a state of
$Y$. The measurable sets of states are generated by taking finite unions
and complements of cylinders, that is, sets of the form $U \times Y$ and
$X \times V$ for measurable $U$ in $X$ and $V$ measurable in $Y$. It is
clear that with this algebra the projections $\pi_1 : X \times Y \to X,
(x,y) \mapsto x$ and $\pi_2 : X \times Y \to Y, (x,y) \mapsto y$ are
measurable functions. Moreover, one can check that this definition of
the product has the universal property that for each space $T$ together
with measurable functions $\varphi : T \to X$ and $\psi : T \to Y$ there
is a unique measurable function $\mu : T \to X \times Y, t \mapsto
(\varphi(t), \psi(t))$ such that $\varphi = \pi_1 \circ \mu$ and $\psi =
\pi_2 \circ \mu$.

We need a further piece of notation related to the product: Given $f : X
\to Z$ and $g : Y \to U$ we write $f \times g : X \times Y \to Z
\times U$ for the morphism $f \times g = (f \circ \pi_1,g \circ \pi_2)$.
It is easy to check that this definition generalizes our earlier
definition of $\varphi \times \psi$ for measurable functions $\varphi$ and
$\psi$ from Section~\ref{uncertainty spaces}.

A crucial categorical notion for the construction of the universal
choice structure is that of the limit of a countable cochain. A
\emph{countable cochain} $(X_n,f_n)_{n \in \omega}$ in a category
$\catD$ consists of an object $X_n$ of $\catD$ for every natural number
$n \in \omega$ and a morphism $f_n : X_{n + 1} \to X_n$ for every $n \in
\omega$. Here, and in the following we use $\omega$ to denote the set of
all natural numbers including $0$. The $f_n$ are also called the
\emph{coherence morphism} of the cochain.

It is clear that we can compose the morphisms $f_n$ to obtain a morphisms
$f^m_n = f_n \circ \dots \circ f_{m - 1} : X_m \to X_n$ for all $n,m \in
\omega$ with $n \leq m$, where $f^n_n = \id_{X_n} : X_n \to X_n$ is just
the identity on $X_n$.
We call the countable cochain $(X_n,f_n)_{n \in \omega}$ \emph{epic}, if
all the morphisms $f_n$ are epic.

A \emph{limit} of the countable cochain $(X_n,f_n)_{n \in
\omega}$ is an object $X_\omega$ together with projection morphisms $p_n
: X_\omega \to X_n$ such that $p_n = f_n \circ p_{n + 1}$ for all $n \in
\omega$ which satisfies the following universal property: For every
object $T$ together with morphisms $g_n : T \to X_n$, satisfying $g_n =
f_n \circ g_{n + 1}$ for all $n \in \omega$, there is a morphism $u : T
\to X_\omega$ that is unique for having the property that $g_n = p_n
\circ u$ for all $n \in \omega$. As for the product and the terminal
object, one can show with this universal property that any two limits
of a given cochain are isomorphic.

For every cochain $(X_n,\xi_n)_{n \in \omega}$ in the category $\Unc$ we
can define its limit $X_\omega$ to be the uncertainty space that has as
its states all sequences $x = (x_0,x_1,\dots,x_n,\dots)$, where $x_n$ is
a state from $X_n$ for all $n \in \omega$, which are coherent in the
sense that $x_n = \xi_n(x_{n + 1})$ for all $n \in \omega$. We can then
consider the projections $\zeta_n : X_\omega \to X_n, x \mapsto x_n$ for
each $n \in \omega$. The measurable sets of the limit $X_\omega$ are
defined to be all sets of the form $\zeta_n^{-1}[E]$ for some $n \in
\omega$ and measurable set $E$ of $X_n$. It is clear that this
definition turns the projections $\zeta_n : X_\omega \to X_n$ into
measurable functions. Moreover, using that all $\xi_n$ are measurable,
one can show that the measurable sets defined in this way are closed
under finite unions and intersections. To check that this definition of
the limit has the universal property observe that for any measurable
space $T$ with measurable functions $\varphi_n : T \to X_n$, such that
$\varphi_n = \xi_n \circ \varphi_{n + 1}$ for each $n \in \omega$, we
can define the unique morphism $\mu : T \to X_\omega$ such that $\mu(t)
= (\varphi_0(t), \varphi_1(t), \dots, \varphi_n(t),\dots)$ for all $t
\in T$.

One can also define the limit for every cochain
$((X_{0,n},X_{1,n}),(\xi_{0,n},\xi_{1,n}))_{n \in \omega}$ in $\Unc^2$.
This can be done componentwise, meaning that for each of the components
of the pairs representing objects and morphisms in $\Unc^2$ we can just
use the construction for $\Unc$, described in the previous paragraph.
For instance the limit itself is just the object
$(X_{0,\omega},X_{1,\omega})$ such that $X_{0,\omega}$ and
$X_{1,\omega}$ are the limits of $(X_{0,n},\xi_{0,n})_{n \in \omega}$
and $(X_{1,n},\xi_{1,n})_{n \in \omega}$ in $\Unc$. It is not hard to
see that this again has the required universal property.

A functor $G$ from $\catD$ to $\catE$ \emph{preserves} a limit
$X_\omega$ of a countable cochain $(X_n,f_n)_{n \in \omega}$ if $G
X_\omega$ is the limit for the countable cochain $(G X_n, G f_n)_{n \in
\omega}$.
The main technical result in this appendix is that the functor $\choice$
preserves limits of countable epic cochains. In the proof of this claim
in \iftoggle{preprint}{Section~\ref{choice preserves limits}}{\ref{choice preserves limits}} we investigate the
preservation of limits of cochains under the contravariant functors
$\acts$ from $\Unc$ to $\Set$ and $\justchoice$ from $\Set$ to $\Unc$.
For this purpose we need to understand limits of cochains in $\Set^\op$.
Because morphisms in $\Set^\op$ are just morphisms from $\Set$ with swapped domain and codomain we have
that a countable cochain $(X_n,f_n)_{n \in \omega}$ in $\Set^\op$ is
just what is called a countable chain in $\Set$, meaning that the $X_n$
are all sets and the $f_n : X_n \to X_{n + 1}$ are functions that now go
from $X_n$ to $X_{n + 1}$. Moreover the limit $X_\omega$ of this cochain
in $\Set^\op$ is what is actually called a colimit of chain in $\Set$.
The colimit $X_\omega$ of a chain $(X_n,f_n)_{n \in \omega}$ in $\Set$
can be described concretely as the disjoint union of all the $X_n$
modulo the equivalence relation that identifies an $x_n$ from $X_n$ with
an $x_m$ from $X_m$ if there is some $k \geq n,m$ such that $f^n_k(x_n)
= f^m_k(x_m)$. Clearly, we can then define inclusions $p^n_\omega : X_n
\to X_\omega$ for all $n \in \omega$. The universal property of this
colimit is just the same as the universal property of limit in
$\Set^\op$ with all morphisms turned around: For every set $T$ with
functions $h_n : X_n \to T$ such that $h_n = h_{n + 1} \circ f_n$ for
each $n \in \omega$, there is a unique function $u : X_\omega \to T$
such that $h_n = u \circ p^n_\omega$ for all $n \in \omega$.

\subsection{Natural transformations}
\label{app:nat trans}


Let $F$ and $G$ be functors that both go from a category $\catD$ to a
category $\catE$. Then a \emph{natural transformation} $\eta$ from
$F$ to $G$ is an assignment of a morphism $\eta_X : F X \to G X$ in
$\catE$ to every object $X$ in $\catD$ such that for all morphisms $f : X
\to Y$ in $\catD$ it holds that $G f \circ \eta_X = \eta_Y \circ F
f$, where $\eta_Y$ is the morphism that $\eta$ assigns to the object $Y$.

An example of a natural transformation from the discrete distribution
functor to the powerset functor is mapping a finite distribution to its
support set. More formally, we have for every set $X$ a function
$\supp_X : \dist X \to \powerset X$ that maps a discrete distribution
$\mu$ over $X$ to the support set $\supp_X(\mu)$ of $\mu$. The
naturality condition that $\powerset f \circ \supp_X = \supp_Y \circ
\dist f$ for all functions $f : X \to Y$ amounts to the fact that for
every discrete distribution $\mu$ over $X$ there is some $x \in X$ with
$f(x) = y$ and $\mu(x) \neq 0$ if and only if $\mu^f(y) \neq 0$, where
$\mu^f = \dist f (\mu)$.

An example of natural transformation from this paper is the measurable
function $\maxi_X : \prel X \to \choice X$ that is defined for every
uncertainty space $X$. Proposition~\ref{maxi natural}, which is proven
in \iftoggle{preprint}{Section~\ref{proof maxi natural}}{\ref{proof maxi natural}}, shows that $\maxi$ is a natural
transformation from the functor $\prel$ to the functor $\choice$.


\section{Properties of $\choice$}
\label{actual work}

In this part of the appendix we prove crucial properties of the functor
$\choice$ which are needed for the coalgebraic construction of the
universal choice structure. In case the reader tries to get an overview
of the construction of the universal choice structure, they might skip
this section and then refer back as needed when reading \iftoggle{preprint}{Sections
\ref{proofs for choices}~and~\ref{proofs for prel}}{\ref{proofs for choices}~and~\ref{proofs for prel}}.

\subsection{$\choice$ preserves epic morphisms}
\label{choice epic}

We show that $\choice$ preserves epic morphisms, which in $\Unc$ are just
surjective measurable functions. Because $\choice$ is the composition of
$\acts$ and $\justchoice$ it suffices to show that $\acts$ maps
surjective measurable functions to injective functions in $\Set$ and
that $\justchoice$ maps injective functions to surjective measurable
functions. It is easy to check the former using the definition of
$\acts$ and the cancellation property of epic morphism. That
$\justchoice$ maps injective functions to surjective measurable
functions is shown in the following lemma:

\begin{lemma} \label{justchoice monos to epics}
 If $f : X \to Y$ is injective then $\justchoice f : \justchoice Y \to
\justchoice X$ is surjective.
\end{lemma}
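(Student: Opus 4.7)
The plan is to construct, for each $C \in \justchoice X$, an explicit preimage $D \in \justchoice Y$ under $\justchoice f$, thereby establishing surjectivity directly.

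Fix an arbitrary choice function $C \in \justchoice X$. I would define a candidate $D \in \justchoice Y$ by setting, for each finite $L \subseteq Y$,
\[
 D(L) = f[C(f^{-1}[L])].
\]
First I would check that this really is a choice function, i.e. that $D(L) \subseteq L$. This uses injectivity only mildly: $C(f^{-1}[L]) \subseteq f^{-1}[L]$ implies $f[C(f^{-1}[L])] \subseteq f[f^{-1}[L]] = L \cap f[X] \subseteq L$.

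Next I would verify that $\justchoice f(D) = C$, which by the definition of $\justchoice f$ amounts to showing that for every finite $K \subseteq X$,
\[
 f^{-1}[D(f[K])] \cap K \;=\; C(K).
\]
Unfolding the definition of $D$, the left-hand side becomes $f^{-1}[f[C(f^{-1}[f[K]])]] \cap K$. Here injectivity enters twice: since $f$ is injective, $f^{-1}[f[K]] = K$, so the innermost expression simplifies to $C(K)$; and since $f$ is injective, $f^{-1}[f[C(K)]] = C(K)$. Intersecting with $K$ leaves $C(K)$ unchanged because $C(K) \subseteq K$. Thus $\justchoice f(D) = C$, proving surjectivity.

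There is no real obstacle beyond careful bookkeeping with direct and inverse images; the content of the lemma is precisely that injectivity of $f$ allows us to transport a choice function on $X$ to one on $Y$ along $f$ without information loss. Note that the definition of $D$ is only pinned down on sets $L \subseteq f[X]$; on sets $L$ with $L \setminus f[X] \neq \emptyset$, any subset of $L$ containing $f[C(f^{-1}[L])]$ would work equally well (for instance one could also put $D(L) = f[C(f^{-1}[L])] \cup (L \setminus f[X])$), which reflects the fact that $\justchoice f$ typically has many preimages when $f$ is not surjective.
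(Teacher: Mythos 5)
Your proof is correct and follows essentially the same route as the paper: you define the same candidate preimage $C'(L) = f[C(f^{-1}[L])]$ and verify $\justchoice f(C') = C$ by the same computation using $f^{-1}[f[U]] = U$ for injective $f$. The only additions (checking $C'(L) \subseteq L$ and noting the non-uniqueness of the preimage) are harmless refinements of the paper's argument.
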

\begin{proof}
 It is easy to see that as $f$ is injective then $f^{-1}[f[U]] = U$
holds for all $U \subseteq X$. To check that $\justchoice f$ is
surjective consider any $C \in \justchoice X$. We have to find a $C' \in
\justchoice Y$ such that $\justchoice f (C') = C$. Define $C'$ such that
$C'(K) = f[C(f^{-1}[K])]$ for all finite $K \subseteq Y$. The following
equality then follows for all $L \subseteq X$:
\begin{align*}
 \justchoice f (C') (L) & = f^{-1}[C'(f[L])] \cap L \\
 & = f^{-1}[ f[     C(f^{-1}[f[L]])   ] ] \cap L \\
 & = C(L) \cap L = C(L).
\end{align*}
\end{proof}

\subsection{$\choice$ preserves monic morphisms to discrete spaces}
\label{choice monic}

We show that $\choice$ preserves monic morphisms, if the codomain has the
discrete algebra.
We again split the problem into two steps. First we show that $\acts$
maps every surjective measurable function with a discrete codomain to an
injective functions and then we show that $\justchoice$ maps injective
functions to surjective measurable functions.

\begin{lemma}
 If the measurable function $\varphi : X \to Y$ is injective and $Y$ has
the discrete algebra then the function $\acts \varphi : \acts Y \to
\acts X$ is surjective.
\end{lemma}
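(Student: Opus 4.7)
The plan is to construct, for any given $g \in \acts X$, a preimage $f \in \acts Y$ under $\acts \varphi$, that is, a measurable finite step function $f : Y \to Z$ with $f \circ \varphi = g$. Since $\varphi$ is injective, each $y \in \varphi[X]$ has a unique preimage $\varphi^{-1}(y) \in X$, so the value $g(\varphi^{-1}(y))$ is well defined, and this is what I would set $f(y)$ to be on $\varphi[X]$.

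For $y \in Y \setminus \varphi[X]$ I would simply fix an arbitrary outcome $z_0 \in Z$ (using that $Z$ is nonempty) and put $f(y) = z_0$. Then three checks need to be made: (i) $f$ is a finite step function, because its range is contained in $g[X] \cup \{z_0\}$, which is finite by the assumption that $g \in \acts X$; (ii) $f$ is measurable, which is immediate because $Y$ carries the discrete algebra, so every function out of $Y$ is measurable; and (iii) $f \circ \varphi = g$, which follows directly from the definition, since for every $x \in X$ we have $f(\varphi(x)) = g(\varphi^{-1}(\varphi(x))) = g(x)$, again using injectivity of $\varphi$.

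There is no real obstacle; the argument is a routine extension-by-a-constant construction, and the two hypotheses are each used exactly once (injectivity of $\varphi$ to define $f$ consistently on $\varphi[X]$, and discreteness of $Y$ to guarantee measurability of the resulting $f$). The only care required is to remember that Savage acts must be finite step functions, which is why $z_0$ is picked as a single fixed value instead of, say, varying over $Y \setminus \varphi[X]$; the degenerate cases where $X$ is empty are trivial because then $\acts \varphi$ maps every $f \in \acts Y$ to the unique empty act in $\acts X$.
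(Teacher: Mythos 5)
Your proposal is correct and matches the paper's own proof: the paper defines the extension $f'$ of the given act along $\varphi$ in exactly the same way (value $g(\varphi^{-1}(y))$ on $\varphi[X]$, an arbitrary fixed outcome elsewhere), using injectivity for well-definedness and discreteness of $Y$ for measurability. Your extra remarks that the range stays finite and that $Z$ is non-empty are fine, just making explicit what the paper leaves implicit.
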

\begin{proof}
 Pick any $f \in \acts X$. We want to find a $f' \in \acts Y$ such that
$f = f' \circ \varphi$. We define $f'(y) = f(x)$, if there is some $x \in
X$ such that $\varphi(x) = y$, and $f'(y) = z'$ for an arbitrary $z' \in
Z$ otherwise. This is well-defined because $\varphi$ is injective. The
function $f'$ is trivially measurable because $Y$ has the discrete
algebra and $f = f' \circ \varphi$ by definition of $f'$.
\end{proof}

\begin{lemma}
 If $f : X \to Y$ is surjective then $\justchoice f : \justchoice Y \to
\justchoice X$ is injective.
\end{lemma}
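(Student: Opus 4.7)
The plan is to directly compare the values of two choice functions $C_1, C_2 \in \justchoice Y$ that agree after applying $\justchoice f$, exploiting surjectivity of $f$ to reduce arbitrary finite subsets of $Y$ to images of well-chosen finite subsets of $X$.

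First I would fix $C_1, C_2 \in \justchoice Y$ with $\justchoice f(C_1) = \justchoice f(C_2)$, which by definition means that
\[
 f^{-1}[C_1(f[K])] \cap K \;=\; f^{-1}[C_2(f[K])] \cap K
\]
for every finite $K \subseteq X$. The goal is to show that $C_1(L) = C_2(L)$ for every finite $L \subseteq Y$.

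Given such an $L$, I would use the surjectivity of $f$ to pick, for each $y \in L$, some $x_y \in f^{-1}[\{y\}]$ and set $K = \{x_y \mid y \in L\}$. By construction $K$ is finite, $f[K] = L$, and $f$ restricted to $K$ is a bijection onto $L$. Plugging this $K$ into the displayed equation and using that for any $M \subseteq L$ we have $f^{-1}[M] \cap K = \{x_y \mid y \in M\}$, one obtains
\[
 \{x_y \mid y \in C_1(L)\} \;=\; \{x_y \mid y \in C_2(L)\} .
\]
Since $y \mapsto x_y$ is injective, this forces $C_1(L) = C_2(L)$, as desired.

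Since $L$ was an arbitrary finite subset of $Y$, this yields $C_1 = C_2$ and hence injectivity of $\justchoice f$. There is no real obstacle here: the only subtlety is ensuring that one can find a finite $K \subseteq X$ whose image under $f$ is exactly $L$ and on which $f$ is injective, and this is immediate from surjectivity together with the finiteness of $L$.
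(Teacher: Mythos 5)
Your proof is correct, and it takes a genuinely (if mildly) different route from the paper's. The paper tests the hypothesis $\justchoice f(C_1) = \justchoice f(C_2)$ on the full preimage $f^{-1}[L]$ of the finite set $L \subseteq Y$, computes $\justchoice f(C_i)(f^{-1}[L]) = f^{-1}[C_i(L)]$ using the surjectivity identity $f[f^{-1}[L]] = L$, and then cancels $f^{-1}$ on both sides, again by surjectivity. You instead test it on a finite transversal $K = \{x_y \mid y \in L\}$ obtained by picking one point in each fibre over $L$, so that $f$ restricts to a bijection from $K$ onto $L$, and cancel via injectivity of $y \mapsto x_y$. Both arguments are equally short, but yours is in fact the more careful one: a choice function in this paper is defined only on \emph{finite} subsets, and $f^{-1}[L]$ need not be finite when $f$ has infinite fibres, so the paper's computation, strictly read, applies $\justchoice f(C_i)$ outside its domain (it is unproblematic when fibres are finite, and can be repaired in general exactly by your transversal device). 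What the paper's version buys in exchange is that it works with the canonical set $f^{-1}[L]$ rather than an arbitrary choice of representatives, so its two displayed computations are purely equational with no selection step; what your version buys is the lemma in full generality without that gloss.
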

\begin{proof}
 It is easy to see that as $f$ is surjective it holds for all $U
\subseteq Y$ that $f[f^{-1}[U]] = U$.
To check that $\justchoice f$ is injective consider any two $C,C' \in
\justchoice Y$ such that $\justchoice f(C) = \justchoice f(C')$. We have
to show that then $C(K) = C'(K)$ for all finite $K \subseteq Y$.
Using that $f[f^{-1}[U]] = U$, we can compute:
\begin{align*}
 \justchoice f (C) (f^{-1}[K]) & = f^{-1}[C(f[f^{-1}[K]])] \cap
f^{-1}[K] \\
 & = f^{-1}[C(K)] \cap f^{-1}[K] = f^{-1}[C(K) \cap K] = f^{-1}[C(K)]
\end{align*}
Similarly we obtain also for $C'$ that $\justchoice f(C') (f^{-1}[K]) =
f^{-1}[C'(K)]$. From the assumption that $\justchoice f(C) = \justchoice
f(C')$ it follows that $f^{-1}[C(K)] = f^{-1}[C'(K)]$. Because $f$ is
surjective it follows that $C(K) = C'(K)$.
\end{proof}

\subsection{$\choice$ preserves limits of countable epic cochains}
\label{choice preserves limits}



In this section we prove the main technical result of this paper:
\begin{theorem} \label{preservation of limits}
 $\choice$ preserves limits of countable epic cochains. This means that
whenever we have a limit $X_\omega$, with projections $\zeta_n :
X_\omega \to X_n$ for all $n \in \omega$, of a countable cochain
$(X_n,\xi_n)_{n \in \omega}$ in which the coherence morphisms $\xi_n$
are epic for all $n$, then $\choice X_\omega$, with projections $\choice
\zeta_n : \choice X_\omega \to \choice X_n$, satisfies the universal
property of the limit of the cochain $(\choice X_n, \choice \xi_n)_{n
\in \omega}$, i.e., for every uncertainty space $T$ together with
measurable functions $\varphi_n : T \to \choice X_n$ for all $n \in
\omega$ with $\varphi_n = \choice \xi_n \circ \varphi_{n + 1}$ there is
a unique measurable function $\mu : T \to \choice X_\omega$ such that
$\varphi_n = \choice \zeta_n \circ \mu$ for all $n \in \omega$.
\end{theorem}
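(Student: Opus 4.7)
The plan is to decompose $\choice$ as $\justchoice \circ \acts$ and exploit that both constituent functors are contravariant. Under this decomposition, a limit $X_\omega$ of an epic countable cochain $(X_n,\xi_n)_{n \in \omega}$ in $\Unc$ should be sent by $\acts$ to a \emph{colimit} in $\Set$ of the countable chain $(\acts X_n, \acts \xi_n)_{n \in \omega}$, and then sent by $\justchoice$ back to a limit in $\Unc$, which by definition is $\choice X_\omega = \justchoice \acts X_\omega$. So the theorem reduces to two preservation claims: (i) $\acts$ sends limits of epic countable cochains in $\Unc$ to colimits of countable chains in $\Set$; and (ii) $\justchoice$ sends colimits of countable chains in $\Set$ to limits of countable cochains in $\Unc$.

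For (i), epicness of all $\xi_n$ yields surjectivity of each projection $\zeta_n : X_\omega \to X_n$, by inductively lifting a fixed $x_n \in X_n$ to $x_{n+1}, x_{n+2}, \dots$ using surjectivity of each $\xi_m$ and then completing the coherent sequence downward by $x_{n-1} = \xi_{n-1}(x_n)$, etc. To verify the colimit universal property, let $f \in \acts X_\omega$; since $f$ is a finite step function, its range is a finite set $\{z_1,\dots,z_k\} \subseteq Z$, and each preimage $f^{-1}[\{z_r\}]$ is measurable in $X_\omega$. Because the algebra on $X_\omega$ is generated by the pullbacks $\zeta_n^{-1}[\alg_n]$, and these form a directed system via the identity $\zeta_n^{-1}[E] = \zeta_{n+1}^{-1}[\xi_n^{-1}[E]]$, each $f^{-1}[\{z_r\}]$ equals $\zeta_{n_r}^{-1}[E_r]$ for some $n_r$ and measurable $E_r$; taking $n = \max_r n_r$ and pushing everything up to level $n$, one obtains a finite step function $g : X_n \to Z$ with $f = g \circ \zeta_n$, and surjectivity of $\zeta_n$ makes $g$ well-defined on all of $X_n$ and unique up to raising the level. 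This establishes that the cocone $(\acts \zeta_n)_{n \in \omega}$ exhibits $\acts X_\omega$ as the colimit in $\Set$.

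For (ii), let $(Y_n, f_n)_{n \in \omega}$ be a chain in $\Set$ with colimit $Y_\omega$ and inclusions $p_n : Y_n \to Y_\omega$, and suppose $(\psi_n : T \to \justchoice Y_n)_{n \in \omega}$ is a cone, that is $\psi_n = \justchoice f_n \circ \psi_{n+1}$. Define $\mu : T \to \justchoice Y_\omega$ as follows: for $t \in T$ and finite $K \subseteq Y_\omega$, choose $n$ large enough that $K = p_n[K_n]$ for some $K_n \subseteq Y_n$ with $p_n\!\!\restriction_{K_n}$ a bijection onto $K$ (such $n$ exists because $K$ is finite and each equality $p_n(y)=p_n(y')$ in $Y_\omega$ is realised at some finite level), and set $\mu(t)(K) = p_n[C_n(K_n)]$ where $C_n = \psi_n(t)$. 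The key observation is that if $p_n\!\!\restriction_{K_n}$ is injective then so is $f_n\!\!\restriction_{K_n}$, whence the general identity $C_n(K_n) = (f_n)^{-1}[C_{n+1}(f_n[K_n])] \cap K_n$ specialises to $f_n[C_n(K_n)] = C_{n+1}(f_n[K_n])$, and applying $p_{n+1}$ gives $p_n[C_n(K_n)] = p_{n+1}[C_{n+1}(f_n[K_n])]$. Iterating this shows well-definedness both as $n$ grows and across different choices of $K_n$ at level $n$ (two such $K_n, K_n'$ eventually agree at a higher level). Measurability of $\mu$ follows from the computation $\mu^{-1}[B^K_L] = \psi_n^{-1}[B^{K_n}_{L_n}]$ where $L_n = K_n \cap p_n^{-1}[L]$, and the equation $\justchoice p_n \circ \mu = \psi_n$ is checked by the same identity $(f_n^m)^{-1}[C_m(f_n^m[K_n])] \cap K_n = C_n(K_n)$. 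Uniqueness of $\mu$ is forced because $\mu(t)(K)$ is recovered from $\psi_n(t) = \justchoice p_n(\mu(t))$ via $\mu(t)(K) = p_n[(\justchoice p_n(\mu(t)))(K_n)]$ whenever $p_n\!\!\restriction_{K_n}$ is bijective onto $K$.

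The main obstacle is step (ii): choice functions interact with the chain structure only through the rewriting formula $C_n(K_n) = (f_n^m)^{-1}[C_m(f_n^m[K_n])] \cap K_n$, and one must carefully select a level $n$ at which all identifications among representatives of $K$ have already occurred in $Y_n$, so that the finitely many values $p_n[C_n(K_n)]$ can be pushed consistently up and down the chain. Once this well-definedness is secured, measurability and the universal property both follow directly from the definition of basic measurable sets $B^K_L$ and from chasing the compatibility condition $\psi_n = \justchoice f_n \circ \psi_{n+1}$.
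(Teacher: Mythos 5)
Your proposal is correct and follows essentially the same route as the paper: decompose $\choice$ as $\justchoice\circ\acts$, show that $\acts$ turns the limit of an epic cochain into a colimit of a chain in $\Set$ (via factoring finite step functions through a finite level, using epicness of the projections), and show that $\justchoice$ turns colimits of countable chains into limits, with the same well-definedness, measurability, cone-condition and uniqueness arguments built on the identity $C_n(K_n)=f_n^{-1}[C_{n+1}(f_n[K_n])]\cap K_n$ and on pushing finite sets to a level where all colimit identifications are realised. The only differences are cosmetic: you use surjectivity of $\zeta_n$ directly where the paper's factorisation lemma builds an explicit partition, and you lift $K$ to a bijective representative set where the paper takes the least level with a surjective one.
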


For the proofs from later sections in this appendix we need the
following corollary of Theorem~\ref{preservation of limits}.
\begin{corollary} \label{choicePlayers preserves limits}
 $\choicePlayers$ preserves limits of countable epic cochains.
\end{corollary}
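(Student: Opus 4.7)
The plan is to reduce this to Theorem~\ref{preservation of limits} by (a) using that limits in $\Unc^2$ are computed componentwise, and (b) showing that the auxiliary functor $A \times -$ preserves epic countable cochains and their limits, so that the composition $\choice(A \times -)$ inherits preservation from $\choice$.

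Concretely, let $((X_{0,n},X_{1,n}),(\xi_{0,n},\xi_{1,n}))_{n \in \omega}$ be a countable epic cochain in $\Unc^2$, so each $\xi_{0,n}$ and $\xi_{1,n}$ is epic (surjective) in $\Unc$. Since objects, morphisms, identities, and composition in $\Unc^2$ are defined componentwise, the universal property of a limit splits into two independent universal properties, one per component. Hence the limit of the cochain is $(X_{0,\omega},X_{1,\omega})$ with projections $(\zeta_{0,n},\zeta_{1,n})$, where $X_{k,\omega}$ (with projections $\zeta_{k,n}$) is the limit in $\Unc$ of $(X_{k,n},\xi_{k,n})_{n \in \omega}$ for $k = 0,1$. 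Applying $\choicePlayers$ yields $(\choice(A_j \times X_{1,\omega}),\choice(A_i \times X_{0,\omega}))$, and by the same componentwise principle it suffices to prove, for a fixed set $A$ of actions, that $\choice(A \times X_{k,\omega})$ with projections $\choice(\id_A \times \zeta_{k,n})$ is the limit in $\Unc$ of the cochain $(\choice(A \times X_{k,n}),\choice(\id_A \times \xi_{k,n}))_{n \in \omega}$.

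The next step is to verify that the product functor $A \times -$ on $\Unc$ sends the epic cochain $(X_{k,n},\xi_{k,n})_{n \in \omega}$ to the epic cochain $(A \times X_{k,n}, \id_A \times \xi_{k,n})_{n \in \omega}$, with limit $A \times X_{k,\omega}$ and projections $\id_A \times \zeta_{k,n}$. Surjectivity of $\id_A \times \xi_{k,n}$ is immediate from surjectivity of $\xi_{k,n}$. For the limit claim, given any uncertainty space $T$ with measurable functions $\psi_n : T \to A \times X_{k,n}$ satisfying $\psi_n = (\id_A \times \xi_{k,n}) \circ \psi_{n+1}$, write $\psi_n = (\alpha_n, \beta_n)$; then $\alpha_n = \alpha_{n+1}$ for all $n$, so there is a common measurable $\alpha : T \to A$, while the $\beta_n : T \to X_{k,n}$ form a cone over the original cochain and therefore factor uniquely through a measurable $\beta : T \to X_{k,\omega}$. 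The pair $(\alpha, \beta) : T \to A \times X_{k,\omega}$ is then the unique mediating map, establishing the universal property.

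With this in hand, Theorem~\ref{preservation of limits} applied to the epic cochain $(A \times X_{k,n}, \id_A \times \xi_{k,n})_{n \in \omega}$ gives that $\choice(A \times X_{k,\omega})$ with projections $\choice(\id_A \times \zeta_{k,n})$ is the limit of $(\choice(A \times X_{k,n}),\choice(\id_A \times \xi_{k,n}))_{n \in \omega}$. Combining both components yields the required universal property for $\choicePlayers(X_{0,\omega},X_{1,\omega})$, proving the corollary. The only mildly delicate step is the product lemma in step two, and this is essentially a routine verification of a universal property; no genuine technical obstacle arises once Theorem~\ref{preservation of limits} is available.
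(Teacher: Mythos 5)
Your proof is correct and is exactly the argument the paper intends: the paper states the corollary as an immediate consequence of Theorem~\ref{preservation of limits} via the componentwise computation of limits in $\Unc^2$, and your explicit verification that the product functor $A \times {-}$ sends epic cochains and their limits to epic cochains and their limits simply fills in the routine details the paper leaves implicit.
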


We split the proof of Theorem~\ref{preservation of limits} into two
steps. First, we show that $\acts$ turns limits of epic cochains in
$\Unc$ into colimits of chains in $\Set$. Second, we show that
$\justchoice$ turns colimits of chains $\Set$ into limits of cochains in
$\Unc$. It follows that $\choice$ preserves limits of epic cochains
because $\choice$ is defined as the composition of $\acts$ and
$\justchoice$.

To prove the preservation result about $\acts$ we need two technical lemmas. The first relates acts and measurable functions. It
crucially relies on our assumptions that acts are finite step functions.
\begin{lemma} \label{factorisation}
 Let $\varphi : X \to Y$ be a measurable function and $f \in \acts X$ an
act with range $f[X] = \{z_1,\dots,z_k\}$ such that for every $n \in
\{1,\dots,k\}$ there is some measurable set $E_n$ in $Y$ such that
$f^{-1}[\{z_n\}] = \varphi^{-1}[E_n]$. Then there is an act $f' \in
\acts Y$ such that $f = f' \circ \varphi$.
\end{lemma}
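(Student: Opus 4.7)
The plan is to build $f'$ explicitly by partitioning $Y$ along the preimages $E_1,\dots,E_k$. The obstacle is that the given sets $E_n$ need not be pairwise disjoint, nor need they cover $Y$; however, since the values $z_n$ are distinct, the sets $f^{-1}[\{z_n\}]$ \emph{are} a partition of $X$, and my goal will be to transfer this disjointness across $\varphi$ by cleaning up the $E_n$ without changing their $\varphi$-preimages.

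Concretely, I would set $E'_1 = E_1$ and inductively $E'_n = E_n \setminus \bigcup_{m<n} E_m$ for $2 \le n \le k$. Each $E'_n$ is measurable because $\alg_Y$ is an algebra. The sets $E'_1,\dots,E'_k$ are pairwise disjoint by construction. The key computation is
\[
 \varphi^{-1}[E'_n] \;=\; \varphi^{-1}[E_n] \setminus \bigcup_{m<n} \varphi^{-1}[E_m] \;=\; f^{-1}[\{z_n\}] \setminus \bigcup_{m<n} f^{-1}[\{z_m\}] \;=\; f^{-1}[\{z_n\}],
\]
where the last equality uses that distinct $z_n$ give disjoint preimages under $f$. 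Now let $E'_0 = Y \setminus (E'_1 \cup \dots \cup E'_k)$, which is measurable, and observe that $\varphi^{-1}[E'_0] = \emptyset$: for any $x \in X$ we have $f(x) = z_n$ for some $n$, so $x \in f^{-1}[\{z_n\}] = \varphi^{-1}[E'_n]$, hence $\varphi(x) \in E'_n$.

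Define $f' : Y \to Z$ by $f'(y) = z_n$ if $y \in E'_n$ for some $1 \le n \le k$, and $f'(y) = z_1$ if $y \in E'_0$. This is a finite step function, since its range is contained in $\{z_1,\dots,z_k\}$. It is measurable because $Z$ carries the discrete algebra and for each $z \in Z$ the preimage $(f')^{-1}[\{z\}]$ is a finite union of sets drawn from $\{E'_0,E'_1,\dots,E'_k\}$, all of which lie in $\alg_Y$. Finally, to verify $f = f' \circ \varphi$, take any $x \in X$; since $\varphi^{-1}[E'_0] = \emptyset$, there is a unique $n \in \{1,\dots,k\}$ with $\varphi(x) \in E'_n$, and for this $n$ we have $x \in \varphi^{-1}[E'_n] = f^{-1}[\{z_n\}]$, so $f(x) = z_n = f'(\varphi(x))$.

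The only nontrivial step is the disjointification and the computation that it preserves $\varphi$-preimages; once that is in place the rest is routine bookkeeping. No closure under countable operations is needed, which is important because $\alg_Y$ is only assumed to be an algebra.
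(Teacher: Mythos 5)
Your proposal is correct and follows essentially the same route as the paper: disjointify the sets $E_n$ inside the algebra on $Y$, add the measurable residual set covering the rest of $Y$ (whose $\varphi$-preimage is empty), and define $f'$ as the corresponding step function. The only cosmetic differences are that the paper assigns an arbitrary outcome $z_0 \in Z$ on the residual set rather than reusing $z_1$, and argues the identity $f = f' \circ \varphi$ pointwise rather than via your (equivalent) observation that disjointification does not change the $\varphi$-preimages.
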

\begin{proof}
 We use an induction to define $H_1 = E_1$ and $H_{n + 1} = E_{n + 1}
\setminus \bigcup_{m = 1}^n H_m$ for $n \in \{1,\dots,k - 1\}$. We also
define $H_0 = Y \setminus \bigcup_{m = 1}^k H_k$. It is clear that all
these sets $H_0,H_1,\dots,H_k$ are measurable in $Y$. Moreover, it is
also obvious from the definition that they partition the set $Y$. Let
$z_0 \in Z$ be an arbitrary element of the non-empty set of outcomes.
Define the act $f' \in \acts Y$ such that $f'(y) = z_n$ if $y \in H_n$.
Because the sets $H_0,H_1,\dots,H_k$ partition $Y$ this is well-defined
as a function. It is a measurable function because the sets
$H_0,H_1,\dots,H_k$ are measurable. To see that $f = f' \circ \varphi$,
fix any $x \in X$ and let $n$ be such that $f(x) = z_n$. Because
$f^{-1}[\{z_n\}] = \varphi^{-1}[E_n]$ we then have that $x \in
\varphi^{-1}[E_n]$. Also, for every $m \neq n$ the sets
$f^{-1}[\{z_n\}]$ and $f^{-1}[\{z_m\}] = \varphi^{-1}[E_m]$ are disjoint
and hence $x \notin \varphi^{-1}[E_m]$. It follows that $\varphi(x) \in
H_n$ and thus $f'(\varphi(x)) = z_n = f(x)$.
\end{proof}

The next lemma states a property that also plays an important role in
Proposition~1 of \cite{DiTillio08}:
\begin{lemma} \label{pushing acts down}
Consider a countable cochain $(X_n,\xi_n)_{n \in \omega}$, with
coherence morphisms $\xi_n : X_{n + 1} \to X_n$, and let $X_\omega$,
with projections $\zeta_n : X_\omega \to X_n$, be its limit. Then, for
every act $f \in \acts X_\omega$ there is a natural number $n \in
\omega$ such that for every natural number $m \geq n$ there is an act
$f' \in \acts X_m$ such that $f = f' \circ \zeta_m$.
\end{lemma}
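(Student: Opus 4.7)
The key observation is that every measurable set in the limit $X_\omega$ is already a single ``cylinder'' of the form $\zeta_n^{-1}[E]$ for some $n \in \omega$ and some measurable $E \subseteq X_n$, not merely a Boolean combination of such. Indeed, for $n \leq m$ we have $\zeta_n = \xi^m_n \circ \zeta_m$, where $\xi^m_n = \xi_n \circ \cdots \circ \xi_{m-1}$, so any cylinder at level $n$ can be rewritten as a cylinder at any higher level~$m$. Combined with the fact that cylinders are closed under complements (since $X_\omega \setminus \zeta_n^{-1}[E] = \zeta_n^{-1}[X_n \setminus E]$) and that two cylinders at different levels $n \leq m$ intersect in the cylinder $\zeta_m^{-1}[(\xi^m_n)^{-1}[E] \cap E']$, this shows the family of cylinders is already an algebra, and hence equals the whole algebra on $X_\omega$.

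Given this, I would proceed as follows. Fix $f \in \acts X_\omega$ with range $\{z_1,\dots,z_k\}$. Since $f$ is measurable and a finite step function, each preimage $f^{-1}[\{z_i\}]$ is measurable in $X_\omega$, and hence by the previous observation there exist $n_i \in \omega$ and a measurable $E_i \subseteq X_{n_i}$ with $f^{-1}[\{z_i\}] = \zeta_{n_i}^{-1}[E_i]$. Set $n = \max_{i} n_i$. For any $m \geq n$, define $E'_i = (\xi^m_{n_i})^{-1}[E_i]$, which is measurable in $X_m$, and use $\zeta_{n_i} = \xi^m_{n_i} \circ \zeta_m$ to rewrite
\[
 f^{-1}[\{z_i\}] = \zeta_m^{-1}[E'_i].
\]

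At this point the hypotheses of Lemma~\ref{factorisation} are satisfied with $\varphi = \zeta_m$, $X = X_\omega$, $Y = X_m$: the range of $f$ is finite and the preimage of each outcome $z_i$ is the $\zeta_m$-preimage of a measurable subset of $X_m$. Applying that lemma yields the desired $f' \in \acts X_m$ with $f = f' \circ \zeta_m$.

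The main technical point is the initial observation that the algebra on $X_\omega$ consists \emph{exactly} of single cylinders, rather than Boolean combinations thereof; this is what lets us concentrate the finitely many preimages of outcomes at a single level and then invoke Lemma~\ref{factorisation}. Note that epicness of the coherence morphisms $\xi_n$ plays no role here, which matches the statement of the lemma, and finiteness of the range of $f$ is crucial since only finitely many indices $n_i$ need to be bounded.
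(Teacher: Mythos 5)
Your proof is correct and follows essentially the same route as the paper's: identify the finitely many preimages $f^{-1}[\{z_i\}]$ as cylinders $\zeta_{n_i}^{-1}[E_i]$, push them all to a common level $m \geq \max_i n_i$ via $\zeta_{n_i} = \xi^m_{n_i}\circ\zeta_m$, and invoke Lemma~\ref{factorisation}. The only difference is that you explicitly verify that the cylinders already form an algebra (so every measurable set of $X_\omega$ is a single cylinder), a fact the paper builds into its description of the limit algebra and simply cites as ``the definition of the algebra on $X_\omega$.''
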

\begin{proof}
 Because $f : X_\omega \to Z$ is a finite step function we can enumerate
its range as $f[X] = \{z_1,\dots,z_k\}$. Then observe that for every $i
\in \{1,\dots,k\}$ the set $f^{-1}[\{z_i\}]$ is measurable in
$X_\omega$, since $f$ is a measurable and $Z$ has the discrete algebra.
By the definition of the algebra on $X_\omega$ this means that for every
$i \in \{1,\dots,k\}$ there is some $n_i \in \omega$ such that
$f^{-1}[\{z_i\}] = \zeta_{n_i}^{-1}[E_i]$ for some measurable set $E_i$
in $X_{n_i}$. Let $n$ be the maximum of all $n_i$ for $i \in
\{1,\dots,k\}$.

Consider then any $m \geq n$ and for all $i \in \{1,\dots,k\}$ define $E^m_i = (\xi^m_{n_i})^{-1}[E_i]
\subseteq X_m$. These $E^m_i$ are
measurable in $X_m$ because $\xi^m_{n_i}$ is a measurable function. By
the definition of the limit of a cochain we also have that $\zeta_{n_i} = \xi^m_{n_i} \circ \zeta_m$ for every $i
\in \{1,\dots,k\}$. Hence $f^{-1}[\{z_i\}] = \zeta_{n_i}^{-1}[E_i] =
(\xi^m_{n_i} \circ \zeta_m)^{-1}[E_i] =
\zeta_m^{-1}[(\xi^m_{n_i})^{-1}[E_i]] = \eta_m^{-1}[E^m_i]$. It thus
follows with Lemma~\ref{factorisation} that there is an act $f' \in X_m$
such that $f = f' \circ \zeta_m$.
\end{proof}


\begin{proposition}
 $\acts$ preserves limits of countable epic cochains. That is, it
maps limits of cochains of uncertainty spaces to colimits of chains of
sets.
\end{proposition}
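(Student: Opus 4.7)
The plan is to unpack what it means for the contravariant functor $\acts$ to preserve the limit: the cochain $(X_n,\xi_n)_{n\in\omega}$ with limit $X_\omega$ and projections $\zeta_n$ is sent to a chain $(\acts X_n,\acts \xi_n)_{n\in\omega}$ in $\Set$, with cocone maps $\acts \zeta_n : \acts X_n \to \acts X_\omega$ given by $f' \mapsto f' \circ \zeta_n$, and I must verify this cocone is a colimit of the chain. The cocone identity $\acts \zeta_n = \acts \zeta_{n+1} \circ \acts \xi_n$ is immediate from functoriality applied to $\zeta_n = \xi_n \circ \zeta_{n+1}$.

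The key tool for the universal property is Lemma~\ref{pushing acts down}, which says every $f \in \acts X_\omega$ factors as $f = f' \circ \zeta_n$ for some $n$ and some $f' \in \acts X_n$. Given any competing cocone $(T,(h_n)_{n\in\omega})$ with $h_n = h_{n+1} \circ \acts \xi_n$ for all $n$, I would define the mediating map $u : \acts X_\omega \to T$ by choosing such an $n$ and $f'$ for each $f$ and setting $u(f) = h_n(f')$. Once $u$ is shown to be well-defined, the cocone condition $h_n = u \circ \acts \zeta_n$ follows directly from the definition applied to $f = \acts \zeta_n(f')$, and uniqueness follows because any other mediator $u'$ must send $\acts \zeta_n(f')$ to $h_n(f')$, and by Lemma~\ref{pushing acts down} every element of $\acts X_\omega$ is of this form.

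The main obstacle is well-definedness of $u$. Suppose $f = f' \circ \zeta_n = f'' \circ \zeta_m$ with $n \leq m$. Using $\zeta_n = \xi^m_n \circ \zeta_m$ gives $f' \circ \xi^m_n \circ \zeta_m = f'' \circ \zeta_m$, and I want to cancel $\zeta_m$ on the right to conclude $f' \circ \xi^m_n = f''$. This is exactly where the epicness of the cochain is used: a standard induction, using surjectivity of each $\xi_k$ together with countable choice, shows that each limit projection $\zeta_m$ is surjective as a function (given $x_m \in X_m$, define $x_n = \xi^m_n(x_m)$ for $n < m$ and inductively pick $x_{k+1} \in \xi_k^{-1}(\{x_k\})$ for $k \geq m$). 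Hence $\zeta_m$ is right-cancellable in $\Set$, yielding $f'' = \acts \xi^m_n(f')$. Iterating the cocone relation gives $h_n = h_m \circ \acts \xi^m_n$, so $h_m(f'') = h_m(\acts \xi^m_n(f')) = h_n(f')$, which is exactly what well-definedness demands.
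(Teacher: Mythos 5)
Your proof is correct and takes essentially the same route as the paper's: define the mediating map $u$ via the factorization provided by Lemma~\ref{pushing acts down}, show its value is independent of the chosen factorization by cancelling the limit projection on the right (this is where epicness of the cochain enters), and deduce the cocone condition and uniqueness from the same lemma. The only difference is presentational: you isolate well-definedness as an explicit step and spell out why each projection $\zeta_m$ is surjective (epic coherence morphisms plus a dependent-choice construction of a coherent sequence), a point the paper's proof simply asserts with ``because $\zeta_k$ is epic''.
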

\begin{proof}
 Consider a cochain $(X_n,\xi_n)_{n \in \omega}$ of uncertainty spaces
and let $X_\omega$, together with projections $\zeta_n : X_\omega \to
X_n$, be its limit. We show that $\acts X_\omega$, together with the
inclusions $\acts \zeta_n : \acts X_n \to \acts X_\omega$ has the universal
property of the colimit of the chain $(\acts X_n,\acts \xi_n)_{n \in
\omega}$. For this purpose consider any set $T$ together with functions
$h_n : X_n \to T$ such that $h_{n} = h_{n+1} \circ \acts \xi_n$
for each $n \in \omega$. We need to define the function $u : \acts
X_\omega \to T$ such that $h_n = u \circ \acts \zeta_n$ for all $n \in \omega$
and show that it is unique with that property.

To define $u(f)$ for some act $f \in \acts X_\omega$ we use
Lemma~\ref{pushing acts down}. From this lemma it follows that there is
some $k$ and act $f' \in \acts X_k$ such that $f = f' \circ \zeta_k = \acts
\zeta_k (f')$. We then set $u(f) = h_k(f')$. To check that this satisfies
$h_n (g) = u \circ \acts \zeta_n (g)$ for all $n \in \omega$ and $g \in
\acts X_n$ consider the act $\acts \zeta_n(g) \in \acts X_\omega$. By
definition of $u$ we have that $u(\acts \zeta_n (g)) = h_k(f')$ for some
$f' \in \acts X_k$ such that $\acts \zeta_n(g) = \acts \zeta_k (f')$. We
need to show that $h_k(f') = h_n(g)$. Assume that $n \leq k$. This is
without loss of generality because we are only using the completely
symmetric fact that $\acts \zeta_n(g) = \acts \zeta_k (f')$. Because $\zeta_n
= \xi^k_n \circ \zeta_k$, it follows from $\acts \zeta_n(g) = \acts \zeta_k
(f')$ that $f' \circ \zeta_k = g \circ \zeta_n = g \circ \xi^k_n \circ
\zeta_k$. We obtain that $f' = g \circ \xi^k_n = \acts \xi^k_n (g)$
because $\zeta_k$ is epic. From this we can then conclude that $h_k(f') =
h_n(g)$ since $h_k = h_n \circ \acts \xi^k_n$.

That $u$ is unique also follows from Lemma~\ref{pushing acts down}. The
possible values of $u(f)$ are completely determined because $f = \acts
\zeta_n(f')$ for some $n$ and $f' \in X_n$ and we need to ensure that
$h_n(f') = u \circ \acts \zeta_n (f')$.
\end{proof}

\begin{lemma} \label{finite sets at finite level}
 Consider a chain $(X_n, \xi_n)_{n \in \omega}$ of sets and let
$X_\omega$, with inclusions $\iota_n : X_n \to X_\omega$ for all $n$, be
its colimit. Then for each finite $K \subseteq X_\omega$ there is an $m
\in \omega$ and a $K' \subseteq X_m$ such that $\iota_m[K'] = K$.
\end{lemma}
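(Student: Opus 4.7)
The plan is to exploit the concrete description of the colimit of a chain in $\Set$, as sketched in Section~\ref{app:limits}: $X_\omega$ can be realized as the disjoint union $\bigsqcup_n X_n$ quotiented by the equivalence relation that identifies $x_n \in X_n$ with $x_m \in X_m$ whenever $\xi^n_k(x_n) = \xi^m_k(x_m)$ for some $k \geq n,m$. Under this description the inclusion $\iota_n : X_n \to X_\omega$ sends an element of $X_n$ to its equivalence class, and in particular every element of $X_\omega$ lies in the image of some $\iota_n$.

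Given the finite set $K = \{y_1,\dots,y_k\} \subseteq X_\omega$, I would first pick, for each $y_r$, a level $n_r \in \omega$ and an element $x_r \in X_{n_r}$ with $\iota_{n_r}(x_r) = y_r$; this is possible precisely because the inclusions $\iota_n$ jointly cover $X_\omega$. Using finiteness of $K$, let $m = \max\{n_1,\dots,n_k\}$. Define $x'_r = \xi^{n_r}_m(x_r) \in X_m$ for each $r$ and set $K' = \{x'_1,\dots,x'_k\} \subseteq X_m$. From the defining coherence property of the colimit, $\iota_m \circ \xi^{n_r}_m = \iota_{n_r}$, so $\iota_m(x'_r) = \iota_{n_r}(x_r) = y_r$, which gives $\iota_m[K'] \supseteq K$. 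Conversely, $\iota_m[K']$ consists precisely of the $y_r$ by construction, so $\iota_m[K'] = K$.

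The argument is essentially bookkeeping, and the only real input is the availability of a maximum of the finitely many levels $n_r$; this is where finiteness of $K$ is essential and is the one spot the statement would fail for countably infinite $K$ (since the levels $n_r$ could then be unbounded). No further technical obstacle arises: everything else follows from the standard description of colimits of chains in $\Set$ and the coherence relations between the $\iota_n$ and the chain morphisms $\xi^n_m$.
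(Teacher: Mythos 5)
Your proof is correct and follows essentially the same route as the paper's: pick a preimage and a level for each element of the finite set $K$, take the maximum $m$ of those finitely many levels, push the preimages forward along the chain maps $\xi^{n_r}_m$, and conclude via the cocone identity $\iota_m \circ \xi^{n_r}_m = \iota_{n_r}$. The only difference is presentational (you spell out the concrete quotient description of the colimit and remark where finiteness of $K$ is used), so nothing further is needed.
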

\begin{proof}
 By the definition of the colimit of a countable chain we have that for every
$k \in X_\omega$ there exists some $k' \in X_{m_k}$ such that
$\iota_{m_k}(k') = k$. Let $m$ be the maximum of the finitely many
$m_k$ for all $k \in K$ and let then $K' = \{\xi^{m_k}_m(k') \in X_m
\mid k \in K\}$. It then holds that $\iota_m[K'] = K$ because $\iota_m
\circ \xi^{m_k}_m(k') = \iota_{m_k}(k') = k$ for every $k \in K$.
\end{proof}

\begin{theorem} \label{justchoice preserves limits}
 $\justchoice$ preserves colimits of countable chains. That is, it maps
colimits of chains of sets onto limits of cochains of uncertainty
spaces.
\end{theorem}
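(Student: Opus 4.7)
The plan is to show that $\justchoice X_\omega$ together with the measurable maps $\justchoice \iota_n : \justchoice X_\omega \to \justchoice X_n$ satisfies the universal property of the limit of the cochain $(\justchoice X_n, \justchoice \xi_n)_{n \in \omega}$ in $\Unc$. So I fix an uncertainty space $T$ with measurable coherent maps $\varphi_n : T \to \justchoice X_n$, meaning $\varphi_n = \justchoice \xi_n \circ \varphi_{n+1}$ for all $n$, and construct a unique measurable $\mu : T \to \justchoice X_\omega$ with $\varphi_n = \justchoice \iota_n \circ \mu$ for every $n \in \omega$.

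For $t \in T$ and a finite $K \subseteq X_\omega$, I use Lemma~\ref{finite sets at finite level} to pick a level $m$ and a finite $K' \subseteq X_m$ with $\iota_m[K'] = K$, and set $\mu(t)(K) = \iota_m[\varphi_m(t)(K')]$. The inclusion $\mu(t)(K) \subseteq K$ is immediate. The main obstacle is well-definedness, since $\iota_m$ may fail to be injective and so $K$ admits many representations. Given two of them, I would first lift to a common level $m$, obtaining $K_1, K_2 \subseteq X_m$ with $\iota_m[K_i] = K$, and then move to a further level $m^* \geq m$ large enough that every pair of elements of the finite set $K_1 \cup K_2$ that are identified by $\iota_m$ is already identified by $\xi^m_{m^*}$; such $m^*$ exists because only finitely many pairs need attention and each is identified at some finite colimit stage. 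At level $m^*$ one gets $\xi^m_{m^*}[K_1] = \xi^m_{m^*}[K_2] =: L$, and the iterated coherence $\varphi_m(t) = \justchoice \xi^m_{m^*} \circ \varphi_{m^*}(t)$, combined with $\iota_m = \iota_{m^*} \circ \xi^m_{m^*}$ and the set-theoretic identity $f[f^{-1}[S] \cap U] = S \cap f[U]$, reduces $\iota_m[\varphi_m(t)(K_i)]$ to $\iota_{m^*}[\varphi_{m^*}(t)(L)]$, independent of $i$.

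Once $\mu$ is well-defined, the coalgebra equation $\varphi_n = \justchoice \iota_n \circ \mu$ unfolds to $\varphi_n(t)(K) = \iota_n^{-1}[\iota_n[\varphi_n(t)(K)]] \cap K$; the nontrivial inclusion uses the same identification principle, namely that if $k \in K$ satisfies $\iota_n(k) = \iota_n(k')$ for some $k' \in \varphi_n(t)(K)$, then $k$ and $k'$ become equal at some level $m^*$, and the coherence formula $\varphi_n(t)(K) = (\xi^n_{m^*})^{-1}[\varphi_{m^*}(t)(\xi^n_{m^*}[K])] \cap K$ treats them symmetrically. For measurability I would check preimages on the basic generators: given $B^K_L$, pick any representation $K = \iota_m[K']$ and let $L' = \iota_m^{-1}[L] \cap K'$; then $\mu(t)(K) \subseteq L$ is equivalent to $\varphi_m(t)(K') \subseteq L'$, so $\mu^{-1}[B^K_L] = \varphi_m^{-1}[B^{K'}_{L'}]$, which is measurable because $\varphi_m$ is. Uniqueness follows by reversing the construction: any $\mu'$ satisfying the universal equation forces $\iota_m[\varphi_m(t)(K')] = \mu'(t)(K) \cap \iota_m[K'] = \mu'(t)(K)$, so $\mu' = \mu$. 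The hardest step throughout is the well-definedness of $\mu$; the remaining verifications reduce to applying the same finite-stage identification principle in the colimit.
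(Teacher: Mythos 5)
Your proposal is correct and follows essentially the same route as the paper's proof: the same construction of $\mu(t)(K)$ via Lemma~\ref{finite sets at finite level}, the same finite-stage identification principle in the colimit combined with coherence of the $\varphi_n$, and the same measurability and uniqueness arguments. The only organizational difference is that you establish representation-independence of $\mu(t)(K)$ up front, whereas the paper fixes a canonical (least-level, $t$-independent) representation and instead carries out the identification argument inside the verification of $\varphi_n = \justchoice \iota_n \circ \mu$; this is a cosmetic rearrangement, not a different proof.
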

\begin{proof}
 Consider a chain $(X_n, \xi_n)_{n \in \omega}$ of sets and let
$X_\omega$, with inclusions $\iota_n : X_n \to X_\omega$ for all $n$, be
its colimit. We show that $\justchoice X_\omega$, with the
$\justchoice \iota_n : \justchoice X_\omega \to \justchoice X_n$ as the
projections, has the universal property of the limit of the cochain
$(\justchoice X_n, \justchoice \xi_n)_{n \in \omega}$. Hence, consider
any uncertainty space $T$ together with measurable functions $\varphi_n
: T \to \justchoice X_n$, satisfying $\varphi_n = \justchoice \xi_n
\circ \varphi_{n + 1}$, for all $n \in \omega$. We need to show that
there is a unique measurable function $\mu : T \to \justchoice X_\omega$
such that $\justchoice \iota_n \circ \mu = \varphi_n$ for all $n \in
\omega$.

We first describe how to define the choice function $\mu(t) \in
\justchoice X_\omega$ on a finite set $K \subseteq X_\omega$. Let $m \in
\omega$ be the least number such that there exists a $K' \subseteq X_m$
such that $\iota_m[K'] = K$. Such a number exists because of
Lemma~\ref{finite sets at finite level}. Also observe that the choice of
$m$ and $K'$ only depends on $K$ but not on $t$. This is a property
which we exploit below to show that $\mu$ is measurable. We then set the
value of the choice function $\mu(t)$ on $K$ to be
\begin{equation} \label{def mu}
 \mu(t)(K) = \iota_m[\varphi_m(t)(K')],
\end{equation}
where $\varphi_m(t)(K') \subseteq K'$ denotes the elements of $K'$
selected by the choice function $\varphi_m(t) \in \justchoice X_m$. This
is well-defined because we have $\iota_m[\varphi_m(t)(K')] \subseteq
\iota_m[K'] = K$

To see that $\mu$ is measurable it suffices to check that the preimage
$\mu^{-1}[B^K_L] \subseteq T$ of a basic measurable set $B^K_L = \{C \in
\justchoice X_\omega \mid C(K) \subseteq L\}$ is measurable in $T$.
Hence, fix finite $K$ and $L$ with $L \subseteq K$ and let $m \in
\omega$ and $K'$ be defined from $K$ as explained above. We are now
going to show that $\mu^{-1}[B^K_L] = \varphi_m^{-1}[B^{K'}_{L'}]$,
where $L' = \iota_m^{-1}[L]$. Because $\varphi_m : T \to \choice X_m$ is
assumed to be measurable, it then follows that $\mu^{-1}[B^K_L]$ is
measurable too. By unfolding the definitions one sees that the claim
that $\mu^{-1}[B^K_L] = \varphi_m^{-1}[B^{K'}_{L'}]$ is equivalent to
the claim that for all $t \in T$
\[
 \mu(t)(K) \subseteq L \iff \varphi_m(t)(K') \subseteq \iota_m^{-1}[L] .
\]
By the definition of $\mu(t)$ above we see that the left side of this
equivalence is the same as $\iota_m[\varphi_m(t)(K')] \subseteq L$,
which is clearly equivalent to $\varphi_m(t)(K') \subseteq
\iota_m^{-1}[L]$.

We need to verify that $\varphi_n = \justchoice \iota_n \circ \mu$ for
all $n \in \omega$. This requires that for every $t \in T$ the choice
functions $\varphi_n(t)$ and $\justchoice \iota_n \circ \mu(t)$ are the
same. Hence, they need to have the same value on every finite set $L'
\subseteq X_n$.

First let $K = \iota_n[L']$ and recall the definitions to see that
\begin{equation} \label{first unfolding}
 (\justchoice \iota_n \circ \mu(t))(L') =
\iota_n^{-1}[\mu(t)(\iota_n[L'])] \cap L' =
\iota_n^{-1}[\iota_m[\varphi_m(t)(K')]] \cap L',
\end{equation}
for the $m \in \omega$ and the finite set $K' \subseteq X_m$ that are
defined from $K$ as described above. Our choice of $K'$ above ensures
that $\iota_m[K'] = K = \iota_n[L']$. Hence for every $k' \in K'$ there
is some $l' \in L'$ such that $\iota_m(k') = \iota_n(l')$ and conversely
for every $l' \in L'$ there is a $k' \in K'$ such that $\iota_m(k') =
\iota_n(l')$. By the definition of identity for elements in the colimit
$X_\omega$ it follows that for each such pair $k' \in X_m$ and $l' \in
X_n$ with $\iota_m(k') = \iota_n(l')$ there is some $j \in \omega$ such
that $\xi^m_j(k') = \xi^n_j(l')$. Let $i$ be the maximum of all those $j
\in \omega$, which exists because there are finitely many pairs $(k',l')
\in K' \times L'$. It then clearly holds that $\xi^m(k') =
\xi^n(l')$, whenever $\iota_m(k') = \iota_n(l')$ for $(k',l') \in K'
\times L'$. Define then $L \subseteq X$ to be the finite set
\[
 L = \xi^m[K'] = \xi^n[L'].
\]
Because $\varphi_n =  \justchoice \xi^n \circ \varphi$ we obtain
that
\[
 \varphi_n (t) (L') = (\xi^n)^{-1}[\varphi(t)(\xi^n_i[L'])] \cap L'
= (\xi^n)^{-1}[\varphi(t)(L)] \cap L' .
\]
Similarly, because $\varphi_m =  \justchoice \xi^m \circ \varphi$ we
obtain
\[
 \varphi_m (t) (K') = (\xi^m)^{-1}[\varphi_m(t)(\xi^m[K'])] \cap K'
= (\xi^m)^{-1}[\varphi(t)(L)] \cap K' .
\]
To prove $\varphi_n (t) (L') = (\justchoice \iota_n \circ \mu(t))(L')$
it thus suffices by \eqref{first unfolding} to show that
\begin{equation} \label{final step}
 (\xi^n)^{-1}[U] \cap L' = \iota_n^{-1}[\iota_m[(\xi^m)^{-1}[U] \cap
K']] \cap L'
\end{equation}
for the set $U = \varphi(t)(L) \subseteq L$.

For the left-to-right inclusion of \eqref{final step} consider any $l'
\in L'$ such that $\xi^n(l') \in U$. We need that $l' \in
\iota_n^{-1}[\iota_m[(\xi^m)^{-1}[U] \cap K']]$. From the definition of
$L$ we get that there is then some $k' \in K'$ such that $\xi^m(k') =
\xi^n(l')$. Hence $k' \in (\xi^m)^{-1}[U] \cap K'$ and $\iota_n(l')
= \iota_m(k')$. The latter two directly entail that $l' \in
\iota_n^{-1}[\iota_m[(\xi^m)^{-1}[U] \cap K']]$.

For the right-to-left inclusion of \eqref{final step} consider any $l'
\in L'$ such that $\iota_n(l') = \iota_m(k')$ for some $k' \in
(\xi^m)^{-1}[U] \cap K'$. With the definition of $i$ above it then
follows from $\iota_n(l') = \iota_m(k')$ that $\xi^n(l') =
\xi^m(k') \in U$. Hence, $l' \in (\xi^m)^{-1}[U]$ and we are done.

Lastly, we show that $\mu(t)$ is completely
determined by the requirement that $\varphi_n(t) = \justchoice \iota_n
\circ \mu(t)$ for all $n \in \omega$. Consider any $\nu : T \to
\justchoice X_\omega$ such that $\varphi_n(t) = \justchoice \iota_n
\circ \nu(t)$ for all $n \in \omega$. We argue that then $\nu(t)(K) =
\mu(t)(K)$ for the $\mu$ defined as above and all finite $K \subseteq
X_\omega$. Let $m \in \omega$ and $K' \subseteq X_m$ as in the
definition of $\mu$ above, i.e., such that $\iota_m[K'] = K$.
The requirement that $\varphi_m(t) = \justchoice \iota_m \circ \nu(t)$
means that
\[
 \varphi_m(t)(K') = \iota_m^{-1}[\nu(t)(\iota_m[K'])] \cap K' =
\iota_m^{-1}[\nu(t)(K)] \cap K'. \]
One can see that this entails that
\[
 \iota_m[\varphi_m(t)(K')] = \nu(t)(K).
\]
The left-to-right inclusion is trivial and the other inclusion follows
because $\nu(t)(K) \subseteq K$ and $\iota_m[K'] = K$. Therefore, we have
shown that $\nu(t)(K)$ equals the expression that we use in \eqref{def
mu} to define
$\mu(t)(K)$.
\end{proof}

As an immediate consequence of the previous two propositions we obtain
the main result of this section.

\section{Proofs for Section~\ref{choice structures}}
\label{proofs for choices}

Theorems \ref{iso at omega}~and~\ref{omega terminal} follow from the
results about $\choice$ together with well-known results \cite{Worrell99}
about the terminal coalgebra. In the following, we sketch how these results
from the general theory of coalgebras apply to the setting of choice
structures.


\subsection{Preliminary observations}
\label{preliminary observations}


Theorems \ref{iso at omega}~and~\ref{omega terminal} concern the
universal choice structure that is obtained from the choice hierarchies
as described in Section~\ref{universal choice structure}. Let us first
see how this fits into the coalgebraic set-up.

One can view the two uncertainty spaces $\Omega_{i,n}$ and
$\Omega_{j,n}$ on the $n$-th level in the choice hierarchy as defining an
object $\Omega_n = (\Omega_{i,n},\Omega_{j,n})$ in the category
$\Unc^2$. In fact one can define these objects directly, just using the
functor $\choicePlayers$ as follows: We start with $\Omega_0 = \top$, where
$\top$ is the terminal object in $\Unc^2$, and then define inductively
$\Omega_{n + 1} = \choicePlayers \Omega_n$. It is easy to see that with the exception of
the $0$-th level, which is omitted from the discussion in the main text,
this yields the same sequence of pairs of uncertainty spaces as defined
in Section~\ref{universal choice structure}.
Similarly, the coherence morphism $\xi_{i,n}$ and $\xi_{j,n}$ can also be
defined directly in $\Unc^2$ with $\xi_0 = \bang_{\choicePlayers \top} :
\choicePlayers \top \to \top$ and inductively $\xi_{n + 1} = \choicePlayers
\xi_n : \choicePlayers \Omega_{n + 1} \to \choicePlayers \Omega_n$.
Thus, the sequence $(\Omega_n,\xi_n)_{n \in \Omega}$ forms a countable
cochain in $\Unc^2$.

The definition of the uncertainty spaces $\Omega_i$ and $\Omega_j$ of
types in the universal choice structures from Section~\ref{universal
choice structure} is such that these are precisely the limits of the
countable cochains $(\Omega_{i,n},\xi_{i,n})_{n \in \omega}$ and
$(\Omega_{j,n},\xi_{j,n})_{n \in \omega}$ in $\Unc$.
Because limits in $\Unc^2$ are computed component-wise it follows that
the object $\Omega = (\Omega_i,\Omega_j)$ in $\Unc^2$ is the limit of the countable cochain $(\Omega_n,\xi_n)_{n \in \omega}$. Also recall that
the limit comes with projections $\zeta_n = (\zeta_{i,n},\zeta_{j,n}) :
\Omega \to \Omega_n$ back into the chain such that $\zeta_n = \xi_n \circ \zeta_{n + 1}$.




The crucial observation behind Theorems \ref{iso at
omega}~and~\ref{omega terminal} is then that the functor
$\choicePlayers$ on $\Unc^2$ preserves the limit of the epic cochain
$(\Omega_n,\xi_n)$. This follows from Corollary~\ref{choicePlayers
preserves limits}, which states that $\choicePlayers$ preserves limits
of countable epic cochains. To apply Corollary~\ref{choicePlayers
preserves limits} to the cochain $(\Omega_n,\xi_n)_{n \in \omega}$ we
need to check that the coherence morphisms $\xi_n$ for all $n \in
\omega$ are epic. This can be checked directly by an induction over the
definition of the $\xi_n$. In the base step $\xi_0 =
(\xi_{i,0},\xi_{j,0})$, and $\xi_{i,0} = \choice \pi_1$, where $\pi_1$
is the projection out of a product. Because this projection is epic, and
by the argument in \iftoggle{preprint}{Section~\ref{choice epic}}{\ref{choice epic}} the functor $\choice$ preserves epic
morphisms, it follows that $\xi_{i,0}$ is epic. We reason analogously
for $\xi_{j,0}$. That $\xi_{n + 1} = (\xi_{i,n + 1},\xi_{j,n + 1}) =
(\choice (\id_{A_j} \times \xi_{j,n}), \choice (\id_{A_i} \times
\xi_{i,n}))$ is epic, also follows easily because $\choice$ preserves
epic morphisms.

\subsection{Theorem~\ref{iso at omega}}
\label{definition mu}

Next consider the object $\choicePlayers \Omega$ of $\Unc^2$. For this
object we can define morphisms into the cochain $(\Omega_n,\xi_n)_{n \in
\omega}$ by setting $\tau_0 = \bang_{\choicePlayers \Omega} :
\choicePlayers \Omega \to \Omega_0$ and $\tau_{n + 1} = \choicePlayers
\zeta_n : \choicePlayers \Omega \to \choicePlayers \Omega_n$, which
satisfy $\tau_n = \xi_n \circ \tau_{n + 1}$ for all $n$. From
Theorem~\ref{preservation of limits} it follows that $\choicePlayers
\Omega$ together with the projections $\choicePlayers \zeta_n = \tau_{n
+ 1}$ is a colimit of the cochain $(\choicePlayers
\Omega_n,\choicePlayers \xi_n)_{n \in \omega}$, which by definition is
the same as the chain $(\Omega_{n + 1},\xi_{n + 1})_{n \in \omega}$. We
can use this observation to show that $\choicePlayers \Omega$ together
with the $\tau_n$ also satisfies the universal property of the limit of
the cochain $(\Omega_n,\xi_n)_{n \in \omega}$. To this aim take any
further object $T$ of $\Unc^2$ together with morphisms $g_n : T \to
\Omega_n$ such that $g_n = \xi_n \circ g_{n + 1}$ for all $n \in
\omega$. We now just consider this $g_n$ without $g_0$ as morphism into
the chain $(\choicePlayers \Omega_n,\choicePlayers \xi_n)_{n \in
\omega}$ satisfying that $g_{n + 1} = \xi_{n + 1} \circ g_{n + 2} =
\choicePlayers \xi_n \circ g_{n + 1}$ for all $n \in \omega$. Because
$\choicePlayers \Omega$ is a limit of this chain there must be then a
unique morphism $u : T \to \choicePlayers \Omega$ such that $g_{n + 1} =
\choicePlayers \zeta_n \circ u = \tau_{n + 1} \circ u$ for all $n \in
\omega$. Additionally, we have that $g_0 = \tau_0 \circ u$ holds
trivially because on both sides of the equation we have a morphism into
the terminal object $\top = \Omega_0$ of $\Unc^2$.

We have now seen that both $\Omega$ and $\choicePlayers \Omega$ satisfy the
universal property of limit for the cochain $(\Omega_n,\xi_n)_{n \in
\omega}$. Theorem~\ref{iso at omega} then follows because there can be
only one such object up to isomorphism.

More precisely, the isomorphism is given by the unique morphism
$\mu = (\mu_i,\mu_j) : \Omega \to \choicePlayers \Omega$ such that $\zeta_n =
\tau_n \circ \mu$ for all $n \in \omega$, which exists because $\Omega$
is a limit. If one considers the components of this morphism one obtains
the isomorphism $\mu_i : \Omega_i \to \choice(A_j \times \Omega_j)$ and
$\mu_j : \Omega_j \to \choice(A_i \times \Omega_i)$ that are referred to in
the formulation of Theorem~\ref{iso at omega}.

\begin{remark}
 One can generate a topology on $\Omega_i$, using the algebra of events as the basis.
 Because of our assumption that $Z$, $A_i$ and $A_j$ are finite it follows that all the $\Omega_{i,n}$ are finite. Thus, $\Omega_i$ is what is called a filtered limit of finite spaces, which is well-know to be a Stone space, meaning that it is compact, Hausdorff and has a basis of clopen sets. One can also show also that this topology has a countable basis, because for every of the countably many spaces $\Omega_{i,n}$ there are only finitely many events in $\Omega_i$ that are preimages of sets in $\Omega_{i,n}$. Compact Hausdorff spaces with a countable basis are called simple spaces in \cite{DiTillio08}.
\end{remark}

\begin{remark} \label{r:finiteness}
 For the result of this section, and the universality of the universal choice structure proved in the next section, it is not needed that the uncertainty spaces $Z$, $A_i$, and $A_j$ are finite. The preservation of limits of epic cochains, which is at the core of these results, holds for arbitrary uncertainty spaces. However, for our proof of Lemma~\ref{factorisation} we need that acts are defined as finite step functions and that $Z$ carries the discrete algebra. The finiteness of $Z, A_i$ and $A_j$ is required for the results about non-redundancy from Section~\ref{nonred} and the embedding of the universal preference structure from Section~\ref{embedding}.
\end{remark}



\subsection{Theorem~\ref{omega terminal}}
\label{why terminal}

We now prove Theorem~\ref{omega terminal}. Consider an
arbitrary choice structure $\X = (T_i,T_j,\theta_i,\theta_j)$, presented
as a coalgebra $(T,\theta) = ((T_i,T_j),(\theta_i,\theta_j))$ for
$\choicePlayers$.
We need to show that there is a unique morphism $\upsilon$ from
$(T,\theta)$ to the universal choice structure $(\Omega,\mu)$.

To prove the existence of $\upsilon$ first consider the measurable
functions $\upsilon_0 = \bang_T : T \to \Omega_0$ and the inductively
defined $\upsilon_{n + 1} = \choicePlayers \upsilon_n \circ \theta : T
\to \Omega_{n + 1}$. By an induction over $n$ we can show that these
measurable functions satisfy $\upsilon_n = \xi_n \circ \upsilon_{n +
1}$. In the base case this is clear because there is only one
morphism from $T$ to the terminal object $\Omega_0 = \top$. For the
inductive step we calculate as follows:
\begin{align*}
 \upsilon_{n + 1} & = \choicePlayers \upsilon_n \circ \theta =
\choicePlayers (\xi_n \circ \upsilon_{n + 1}) = \choicePlayers \xi_n
\circ \choicePlayers \upsilon_{n + 1} \circ \theta
= \xi_{n + 1} \circ \upsilon_{n + 2}.
\end{align*}
Because $\Omega$ is defined as the limit of the sequence
$(\Omega_n,\xi_n)_{n \in \Omega}$ it follows that there is a unique
morphism $\upsilon : T \to \Omega$ with the property that for all $n \in
\omega$
\begin{equation} \label{fancy upsilon}
 \upsilon_n = \zeta_n \circ \upsilon .
\end{equation}

It remains to be seen that $\upsilon$ is a morphism of choice structures
and that it is unique with this property.

To show that $\upsilon : T \to \Omega$ is a morphism from $(T,\theta)$
to $(\Omega,\mu)$ we need to verify that $\mu \circ \upsilon =
\choicePlayers \upsilon \circ \theta$. We show this by proving that
$\upsilon_n = \tau_n \circ \mu \circ \upsilon$ and $\upsilon_n = \tau_n
\circ \choicePlayers \upsilon \circ \theta$ hold for all $n \in \omega$.
The equality $\mu \circ \upsilon = \choicePlayers \upsilon \circ \theta$
then follows from the universal property for the limit $\choicePlayers
\Omega$ of $(\Omega_n,\xi_n)_{n \in \omega}$, with projections $\tau_n :
\choicePlayers \Omega \to \Omega_n$.

To see that $\upsilon_n = \tau_n \circ \mu \circ \upsilon$ we use that
by its definition in \iftoggle{preprint}{Section~\ref{definition mu}}{\ref{definition mu}} $\mu$ satisfies $\tau_n
\circ \mu = \zeta_n$. The claim then reduces to \eqref{fancy upsilon}.

To see that $\upsilon_n = \tau_n \circ \choicePlayers \upsilon \circ
\theta$ we distinguish two cases. If $n = 0$ we
have that both $\upsilon_0$ and $\tau_0 \circ \choicePlayers \upsilon \circ
\theta$ are morphisms from $T$ to the terminal object $\Omega_0 = \top$ of
$\Unc^2$. By the universal property of the terminal object they must be
equal. 

If $n$ is strictly positive we can unfold the definition of $\upsilon_{n
+ 1}$, use \eqref{fancy upsilon} and then unfold the definition of
$\tau_{n + 1}$ to obtain the following computation
\[
 \upsilon_{n + 1} = \choicePlayers \upsilon_n \circ \theta =
\choicePlayers (\zeta_n \circ \upsilon) \circ \theta = \choicePlayers
\zeta_n \circ \choicePlayers \upsilon \circ \theta = \tau_{n + 1} \circ
\choicePlayers \upsilon \circ \theta.
\]

To prove that $\upsilon$ is the only morphism of choice structures from
$(T,\theta)$ to $(\Omega,\mu)$ consider any other morphism $\upsilon' :
T \to \Omega$ in $\Unc^2$ such that $\mu \circ \upsilon' =
\choicePlayers \upsilon' \circ \theta$. We show that then $\upsilon_n =
\zeta_n \circ \upsilon'$ for all $n \in \omega$, from which is follows
that $\upsilon' = \upsilon$ because $\upsilon$ is defined as the unique
morphism with the property \eqref{fancy upsilon}. To prove
$\upsilon_n = \zeta_n \circ \upsilon'$ we use an induction on $n$. In
the base case we again have that $\upsilon_0$ and $\zeta_0 \circ
\upsilon'$ must be equal because they are both morphism from $T$ to the
terminal object $\Omega_0 = \top$ of $\Unc^2$.
In the inductive step we use the following computation:
\begin{align*}
 \upsilon_{n + 1} & = \choicePlayers \upsilon_n \circ \theta &
\mbox{definition of } \upsilon_{n + 1} \\
& = \choicePlayers (\zeta_n \circ \upsilon') \circ \theta &
\mbox{induction hypothesis} \\
& = \choicePlayers \zeta_n \circ \choicePlayers \upsilon' \circ \theta &
\choicePlayers \mbox{ functor} \\
& = \tau_{n + 1} \circ \choicePlayers \upsilon' \circ \theta &
\mbox{definition of } \tau_{n + 1} \\
& = \tau_{n + 1} \circ \mu \circ \upsilon' & \mbox{assumption on }
\upsilon' \\
& = \zeta_{n + 1} \circ \upsilon' & \mbox{uniqueness property of }
\mu
\end{align*}

\subsection{Results for Section~\ref{nonred}}
\label{app:chara non-red}

In this part of the appendix we prove our results about non-redundancy.

\subsubsection{Well-definedness of Definition~\ref{observables}}
\label{app:observables well defined}


To explain the definition of the algebra of observable events more
extensively, fix the choice structure $\X = (T_i,T_j,\theta_i,\theta_j)$
and consider pairs of algebras $(\alg_i,\alg_j)$ such that $\alg_i$ is an
algebra over the set $T_i$ and $\alg_j$ is an algebra over the set
$T_j$. Define an order $\leq$ over such pairs of algebras such that
$(\alg_i,\alg_j) \leq (\alg'_i,\alg'_j)$ iff $\alg_i \subseteq \alg'_i$
and $\alg_j \subseteq \alg'_j$. Definition~\ref{observables} then
defines $(\alg^{ob}_{\X,i},\alg^{ob}_{\X,j})$ as the least element
$(\alg_i,\alg_j)$ in this order $\leq$ which satisfies that
\begin{enumerate}
 \item $B^{ob}_{\X,i}(K,F) \in \alg_i$ for all finite $K,L \subseteq
\acts(A_j \times (T_j,\alg_j))$, and
 \item $B^{ob}_{\X,j}(K,F) \in \alg_i$ for all finite $K,L \subseteq
\acts(A_i \times (T_i,\alg_i))$.
\end{enumerate}
Let $O$ be the set of all pairs of algebras that satisfy these two
conditions. The well-definedness of
$(\alg^{ob}_{\X,i},\alg^{ob}_{\X,j})$ now amounts to the following
claim:
\begin{proposition}
 $O$ contains a (unique) least element.
\end{proposition}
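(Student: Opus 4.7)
The plan is to define $(\alg^{ob}_{\X,i}, \alg^{ob}_{\X,j})$ as the componentwise intersection of all pairs in $O$, and to show that this intersection (i) is non-vacuous, (ii) itself lies in $O$, and (iii) is the $\leq$-minimum. Non-emptiness is immediate because $(\powerset T_i, \powerset T_j) \in O$: both defining conditions involve merely that certain specified subsets of $T_i$ and $T_j$ belong to the algebra, which is trivially satisfied when the algebra is the full powerset. Antisymmetry of $\leq$ then ensures that any least element is unique.

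The substantive step is closure of $O$ under arbitrary componentwise intersection, and this rests on the following monotonicity observation: if $\alg_j \subseteq \alg'_j$ as algebras on $T_j$, then $\acts(A_j \times (T_j,\alg_j)) \subseteq \acts(A_j \times (T_j,\alg'_j))$, because enlarging the algebra on $T_j$ enlarges the product algebra on $A_j \times T_j$, and since acts are finite step functions into a discrete $Z$, measurability of $f$ only requires each preimage $f^{-1}[\{z\}]$ to lie in the product algebra — so every function measurable with respect to the smaller algebra remains measurable with respect to the larger one. The symmetric statement with $i$ and $j$ interchanged holds for the same reason.

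Given this, let $\{(\alg_i^\gamma,\alg_j^\gamma)\}_{\gamma \in \Gamma} \subseteq O$ be any family, and set $\alg_i^\ast = \bigcap_\gamma \alg_i^\gamma$ and $\alg_j^\ast = \bigcap_\gamma \alg_j^\gamma$; these are algebras since an intersection of algebras is an algebra. For any finite $K,L \subseteq \acts(A_j \times (T_j,\alg_j^\ast))$, monotonicity gives $K,L \subseteq \acts(A_j \times (T_j,\alg_j^\gamma))$ for every $\gamma$, so by the defining condition on $(\alg_i^\gamma,\alg_j^\gamma) \in O$ we get $B^{ob}_{\X,i}(K,L) \in \alg_i^\gamma$ for each $\gamma$, hence $B^{ob}_{\X,i}(K,L) \in \alg_i^\ast$. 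The analogous argument handles the condition for $\alg_j^\ast$. Thus $(\alg_i^\ast,\alg_j^\ast) \in O$. Taking $\Gamma = O$ itself produces an element of $O$ that lies below every other, which is the desired least element.

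The only delicate point is the monotonicity step, which is the exact place where our choice to define acts as finite step functions into a discretely measurable $Z$ pays off; without this one would have to worry that shrinking $\alg_j$ might not shrink the set of admissible acts, and then the intertwined conditions defining $O$ might fail to be closed under intersection. Everything else is routine bookkeeping about algebras and the order $\leq$.
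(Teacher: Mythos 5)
Your proposal is correct and follows essentially the same route as the paper: take the componentwise intersection of all pairs in $O$, note that an intersection of algebras is an algebra, and verify membership in $O$ via the monotonicity observation that enlarging the algebra on $T_j$ (resp.\ $T_i$) can only enlarge the set of acts $\acts(A_j \times (T_j,\alg_j))$, so the defining conditions transfer from each member of $O$ to the intersection. One small remark: that monotonicity holds simply because measurability is preserved when the domain algebra is enlarged, irrespective of acts being finite step functions into a discrete $Z$, so your attribution of the ``delicate point'' to that modelling choice is unnecessary (though harmless), while your explicit observation that $(\powerset T_i,\powerset T_j) \in O$ is a nice touch the paper leaves implicit.
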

\begin{proof}
 Define the pair of families of sets $(\mathcal{L}_i,\mathcal{L}_j)$ where $\mathcal{L}_i = \bigcap \{\alg_i
\mid (\alg_i,\alg_j) \in O\}$ and $\mathcal{L}_j = \bigcap \{\alg_j \mid
(\alg_i,\alg_j) \in O\}$. It is relatively easy to see that $(\mathcal{L}_i,\mathcal{L}_j)
\leq (\alg_i,\alg_j)$ for all $(\alg_i,\alg_j) \in O$. We argue that
$(\mathcal{L}_i,\mathcal{L}_j)$ is again a pair of algebras and it is in $O$. Thus,
$(\mathcal{L}_i,\mathcal{L}_j)$ is the least element of $O$.
We leave it to the reader to check that $(\mathcal{L}_i,\mathcal{L}_j)$ is again a pair of
algebras. This essentially reduced to the well-known fact that the arbitrary
intersection of algebras is again an algebra.

To show that $(\mathcal{L}_i,\mathcal{L}_j) \in O$ we need to show that it satisfies the
two conditions of the list above. We only check the first one that
$B^{ob}_{\X,i}(K,F) \in \mathcal{L}_i$ for all finite $K,L \subseteq
\acts(A_j \times (T_j,\mathcal{L}_j))$. To this aim fix any finite $K,L
\subseteq \acts(A_j \times (T_j,\mathcal{L}_j))$. To show that
$B^{ob}_{\X,i}(K,F) \in \mathcal{L}_i$ we need to show that
$B^{ob}_{\X,i}(K,F) \in \alg_i$ for all
$(\alg_i,\alg_j) \in O$. This would follow because $(\alg_i,\alg_j)$
also satisfies the first item in the list above, provided that we can
show that $K,L \subseteq \acts(A_j \times (T_j,\alg_j))$. To obtain the
latter we argue that $\acts(A_j \times (T_j,\mathcal{L}_j)) \subseteq
\acts(A_j \times (T_j,\alg_j))$. Because $\mathcal{L}_j \subseteq
\alg_j$ it follows that every measurable set in $A_j \times
(T_j,\mathcal{L}_j)$ is also measurable in $A_j \times (T_j,\alg_j)$.
Hence, a function that is measurable for the domain $A_j \times
(T_j,\mathcal{L}_j)$ is also measurable for the domain $A_j \times
(T_j,\alg_j)$.
\end{proof}

\subsubsection{Proof of Proposition~\ref{p:chara nonred}}

To prove Proposition~\ref{p:chara nonred} we need to perform another
inductive argument over the definition of the terminal sequence from
which the universal choice structure $\U$ was defined. Let us first
recall relevant facts from \iftoggle{preprint}{Sections \ref{preliminary observations} and
\ref{why terminal}}{\ref{preliminary observations} and
\ref{why terminal}} that will be used in the following arguments. For
every $n$ we had an approximant $\Omega_n$ in $\Unc^2$ such that
$\Omega_0$ is the terminal object of $\Unc^2$ and $\Omega_{n + 1} =
\choicePlayers \Omega_n$. Concretely, this meant that $\Omega_{i,0} =
\top$ is the one element uncertainty space and $\Omega_{i,n + 1} =
\choice (A_j \times \Omega_{j,n})$, and similarly with $i$ and $j$
swapped. Given any choice structure $\X = (T,\theta)$, where $T =
(T_i,T_j)$ and $\theta = (\theta_i,\theta_j)$ are in $\Unc^2$, we then
defined the maps $\upsilon_n : T \to \Omega_n$ such that $\upsilon_0 =
\bang_T$ is the unique morphism into the terminal object $\Omega_0$ of
$\Unc^2$ and $\upsilon_{n + 1} = \choicePlayers \upsilon_n \circ
\theta$. If we spell out the latter in terms of measurable functions we
have that
\begin{equation} \label{e:concrete definition upsilon}
 \upsilon_{i,n+1} = \choice(\id_{A_j} \times \upsilon_{j,n}) \circ \theta_i
    \quad \mbox{and} \quad 
 \upsilon_{j,n+1} = \choice(\id_{A_i} \times
\upsilon_{i,n}) \circ \theta_j.
\end{equation}
We then used the properties of the limit $\Omega$ to obtain a the morphism
$\upsilon : T \to \Omega$, which is unique for the property that
$\upsilon_n = \zeta_n \circ \upsilon$. Here, $\zeta_n : \Omega \to
\Omega_n$ is the projection from the limit into the cochain. For
measurable functions this property means that
\begin{equation} \label{e:concrete property upsilon}
 \upsilon_{i,n} = \zeta_{i,n} \circ \upsilon_i
\quad \mbox{and} \quad
 \upsilon_{j,n} = \zeta_{j,n} \circ \upsilon_j.
\end{equation}

\begin{lemma} \label{pain lemma}
 Let $\X = (T_i,T_j,\theta_i,\theta_j)$ be any choice structure and for
every $n \in \omega$ consider $\upsilon_{i,n} : T_i \to \Omega_{i,n}$
as defined above. Then it holds for every set $A \subseteq \Omega_{i,n}$
that $\upsilon_{i,n}^{-1}[A] \in \alg^{ob}_{\X,i}$. The same holds with
$j$ in place of $i$.
\end{lemma}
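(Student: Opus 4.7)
The plan is to prove the claim by simultaneous induction on $n$ for both players: the statement for Ida at level $n+1$ is derived from the statements for both Ida and Joe at level $n$, and similarly for Joe. By symmetry I spell out only Ida's step. The base case $n = 0$ is immediate, since $\Omega_{i,0}$ is the terminal object, its only subsets are $\emptyset$ and the singleton, and the respective preimages $\emptyset$ and $T_i$ lie in any algebra over $T_i$.

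For the inductive step at $n+1$, I observe that preimages commute with intersections and complements and $\alg^{ob}_{\X,i}$ is an algebra, so it suffices to treat the basic generators $B^K_L \subseteq \Omega_{i,n+1} = \choice(A_j \times \Omega_{j,n})$, where $K, L$ range over finite subsets of $\acts(A_j \times \Omega_{j,n})$ with $L \subseteq K$. Writing $g = \acts(\id_{A_j} \times \upsilon_{j,n})$ and unfolding $\upsilon_{i,n+1} = \choice(\id_{A_j} \times \upsilon_{j,n}) \circ \theta_i = \justchoice g \circ \theta_i$, the computation of $(\justchoice g)^{-1}[B^K_L]$ performed in Section~\ref{subsec choice functions} yields
\[
 \upsilon_{i,n+1}^{-1}[B^K_L] = \theta_i^{-1}[B^{g[K]}_{\hat{L}}] = B^{ob}_{\X,i}(g[K], \hat{L}),
\]
where $\hat{L} = \{f' \in g[K] \mid g^{-1}[\{f'\}] \cap K \subseteq L\}$.

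It then remains to check that $g[K] \subseteq \acts(A_j \times (T_j, \alg^{ob}_{\X,j}))$ (whence $\hat{L} \subseteq g[K]$ automatically sits there as well), because the membership $B^{ob}_{\X,i}(g[K], \hat{L}) \in \alg^{ob}_{\X,i}$ is then immediate from Definition~\ref{observables}. This is the only place where the induction hypothesis is used: every $f \in K$ is a measurable finite step function on $A_j \times \Omega_{j,n}$, and the composite $f \circ (\id_{A_j} \times \upsilon_{j,n})$ is again a finite step function, so measurability reduces to showing that $\id_{A_j} \times \upsilon_{j,n}$ pulls back generating rectangles $U \times V$ in $A_j \times \Omega_{j,n}$ to rectangles $U \times \upsilon_{j,n}^{-1}[V]$ that are measurable in $A_j \times (T_j, \alg^{ob}_{\X,j})$. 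The induction hypothesis for Joe at level $n$ delivers $\upsilon_{j,n}^{-1}[V] \in \alg^{ob}_{\X,j}$, which is exactly what is needed. The main difficulty is only bookkeeping — tracking which algebra each space carries — since the algebras $\alg^{ob}_{\X,i}$ and $\alg^{ob}_{\X,j}$ were defined by a mutual fixed point precisely so that this induction closes.
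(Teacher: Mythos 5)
Your overall strategy is the same as the paper's: mutual induction on $n$, reduction to the generating events, the identification of the preimage of a generator with an observable event $B^{ob}_{\X,i}(\cdot,\cdot)$, and the use of the induction hypothesis for Joe to show that the transported acts are measurable with respect to $A_j \times (T_j,\alg^{ob}_{\X,j})$. Your way of computing the preimage, namely invoking the identity $(\justchoice g)^{-1}[B^K_L] = B^{g[K]}_{\hat L}$ from Section~\ref{subsec choice functions} so that $\upsilon_{i,n+1}^{-1}[B^K_L] = B^{ob}_{\X,i}(g[K],\hat L)$, is in fact a slightly cleaner route than the paper's, which works with $K' = g[K]$ and $L' = g[L]$; your $\hat L$ automatically takes care of the bookkeeping for acts in $K$ that become identified after composition with $\id_{A_j}\times\upsilon_{j,n}$, and since $\hat L \subseteq g[K] \subseteq \acts(A_j\times(T_j,\alg^{ob}_{\X,j}))$, membership in $\alg^{ob}_{\X,i}$ follows directly from Definition~\ref{observables}, exactly as you say.

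There is, however, one missing step. The lemma asserts $\upsilon_{i,n}^{-1}[A] \in \alg^{ob}_{\X,i}$ for \emph{every} subset $A \subseteq \Omega_{i,n}$, whereas your reduction ``it suffices to treat the basic generators $B^K_L$'' only reaches the sets lying in the algebra generated by those $B^K_L$, i.e., the measurable subsets of $\Omega_{i,n+1}$. To get from there to arbitrary $A$ you need the additional observation that the paper makes at precisely this point: since $Z$, $A_i$ and $A_j$ are finite, each $\Omega_{i,n+1} = \choice(A_j\times\Omega_{j,n})$ is a finite set, and any two distinct choice functions in it are separated by some event of the form $B^K_L$; consequently every subset of $\Omega_{i,n+1}$ is a finite union of finite intersections of generators and their complements, and only then does closure of $\alg^{ob}_{\X,i}$ under the Boolean operations yield the claim for all $A$. (Your base case already covers arbitrary subsets because $\Omega_{i,0}$ is a singleton, and your inductive step applies the hypothesis only to measurable $V \subseteq \Omega_{j,n}$, so the induction does close once this upgrade is inserted.) This is a short but genuine step -- it is exactly where the finiteness assumptions on the game data enter -- and without it the lemma as stated is not yet proved.
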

\begin{proof}
 The proof is an induction on $n$. In the base case we have that
$\Omega_{i,0}$ is the one element uncertainty space. Thus,
$\upsilon_{i,0}^{-1}[A]$ is either the empty set or the whole set of all
types in $T_i$. Thus, it is in the algebra $\alg^{ob}_{\X,i}$.

To prove the inductive step we show below that $\upsilon_{i,n + 1}^{-1}[B^K_L]
\in \alg^{ob}_{\X,i}$ holds for for all $K,L \subseteq \acts (A_j \times
\Omega_{j,n})$, where $B^K_L = \{C \in \Omega_{j,n + 1} \mid C(K)
\subseteq L\}$.
Let us see first how from this it then follows that $\upsilon_{i,n + 1}^{-1}[A]
\in \alg^{ob}_{\X,i}$ holds for every arbitrary set $A \subseteq \Omega_{n + 1,i}$:
Because $\Omega_{i,n + 1}$ is finite and any two
distinct choice functions in $\Omega_{i,n + 1} = \choice (A_j \times
\Omega_{j,n})$ can be separated by some such set $B^K_L$, it follows from
a standard argument that every set $A \subseteq \Omega_{n + 1,i}$ can be
written as a finite union of intersections of sets of the form $B^K_L$
and complements of such sets. Because these unions, intersections and
complements are preserved under taking preimages and $\alg^{ob}_{\X,i}$
is an algebra it follows that $\upsilon_{i,n}^{-1}[A] \in
\alg^{ob}_{\X,i}$.

We now prove that $\upsilon_{i,n + 1}^{-1}[B^K_L]
\in \alg^{ob}_{\X,i}$ holds for for all $K,L \subseteq \acts (A_j \times
\Omega_{j,n})$. First, observe that by the induction hypothesis we have that if $f \in
\acts(A_j \times \Omega_{j,n})$ then $f \circ (\id_{A_j} \times
\upsilon_{j,n}) \in \acts(A_j \times (\Omega_j,\alg^{ob}_{\X,j}))$. This
holds because by the inductive hypothesis the preimage of any set at
$\Omega_{j,n}$ under $\upsilon_{j,n}$ is measurable in the uncertainty
space $(T_j,\alg^{ob}_{\X,j})$. Thus, the function $f \circ
(\id_{A_j} \times \upsilon_{j,n})$ is measurable even when we take its
domain to be $A_j \times (T_j,\alg^{ob}_{\X,j})$.
Then, fix the sets $K,L \subseteq \acts (A_j \times \Omega_{j,n})$ and
define the finite sets
\[
 K' = \{f \circ (\id_{A_j} \times \upsilon_{j,n}) \mid f \in K\}
 \quad \mbox{and} \quad
 L' = \{f \circ (\id_{A_j} \times \upsilon_{j,n}) \mid f \in L\}.
\]
From the observation in the beginning of the paragraph it follows that
that $K', L' \subseteq \acts(A_j \times (T_j,\alg^{ob}_{\X,j}))$.

To show that $\upsilon_{i,n+1}^{-1}[B^K_L] \in \alg^{ob}_{\X,i}$, it suffices by
the definition of $\alg^{ob}_{\X,i}$ to show that
$\upsilon_{i,n+1}^{-1}[B^K_L] = B^{ob}_{\X,i}(K',L')$. Unfolding the
definitions shows that this amounts to the claim that for all $t \in
T_i$ we have $\upsilon_{i,n + 1}(t)(K) \subseteq L$ iff $\theta_i(t)(K')
\subseteq L'$. To see that this is the case consider the following chain
of equivalences:
\begin{align*}
 \upsilon_{i,n + 1}(t)(K) \subseteq L & \mbox{ iff } (\choice(\id_{A_j} \times
\upsilon_{j,n}) \circ \theta_i (t))(K) \subseteq L & \mbox{by
\eqref{e:concrete definition upsilon}} \\
 & \mbox{ iff } \{f \in K \mid f \circ (\id_{A_j} \times \upsilon_{j,n}) \in
\theta_i(t)(K')\}
\subseteq L & \mbox{definition of } \Gamma \\
& \mbox{ iff } \theta_i(t)(K') \subseteq L'
\end{align*}
The last equivalence in this chain can be checked easily using the
definitions of $K'$ and $L'$.
\end{proof}

\begin{proposition} \label{p:real work}
 Let $\X = (T_i,T_j,\theta_i,\theta_j)$ be any choice structure and
$\upsilon = (\upsilon_i,\upsilon_j)$ the unique morphism into the
terminal choice structure $\U = (\Omega_i,\Omega_j,\mu_i,\mu_j)$. Let
$\alg_i$ be the algebra of $\Omega_i$. Then $\alg^{ob}_{\X,i} =
\{\upsilon_i^{-1}[E] \mid E \in \alg_i\}$. A similar claim holds with
$j$ for $i$.
\end{proposition}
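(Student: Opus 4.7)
The plan is to prove both inclusions separately. I introduce the candidate family $\alg'_i = \{\upsilon_i^{-1}[E] \mid E \in \alg_i\}$, and analogously $\alg'_j$, both of which are easily seen to be algebras since preimages commute with Boolean operations. The proposition then amounts to showing $\alg^{ob}_{\X,i} = \alg'_i$.

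For the inclusion $\alg'_i \subseteq \alg^{ob}_{\X,i}$, I would invoke Lemma~\ref{pain lemma} directly. The algebra $\alg_i$ on $\Omega_i$ is generated by the sets $\zeta_{i,n}^{-1}[O_n]$ for $n \in \omega$ and measurable $O_n \subseteq \Omega_{i,n}$. Since $\upsilon_{i,n} = \zeta_{i,n} \circ \upsilon_i$ by \eqref{e:concrete property upsilon}, we have $\upsilon_i^{-1}[\zeta_{i,n}^{-1}[O_n]] = \upsilon_{i,n}^{-1}[O_n]$, which Lemma~\ref{pain lemma} places in $\alg^{ob}_{\X,i}$. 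The containment then extends from the generators to all of $\alg_i$ because $\alg^{ob}_{\X,i}$ is closed under finite unions, intersections and complements, all of which are preserved by taking preimages.

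For the reverse inclusion $\alg^{ob}_{\X,i} \subseteq \alg'_i$, my plan is to exploit the minimality of the pair $(\alg^{ob}_{\X,i}, \alg^{ob}_{\X,j})$ established in \iftoggle{preprint}{Appendix~\ref{app:observables well defined}}{Section~\ref{app:observables well defined}}: it suffices to verify that $(\alg'_i, \alg'_j)$ satisfies both closure conditions of Definition~\ref{observables}, and by symmetry I only treat the one on $\alg'_i$. So I fix finite $K, L \subseteq \acts(A_j \times (T_j, \alg'_j))$ and aim to exhibit some $E \in \alg_i$ with $B^{ob}_{\X,i}(K, L) = \upsilon_i^{-1}[E]$. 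The key auxiliary fact here, in the spirit of Lemma~\ref{factorisation}, is that every $f \in \acts(A_j \times (T_j, \alg'_j))$ admits a lift $\tilde{f} \in \acts(A_j \times \Omega_j)$ with $f = \tilde{f} \circ (\id_{A_j} \times \upsilon_j)$: for each $a \in A_j$ and each $z$ in the finite range of $f$, the slice $\{t \mid f(a,t) = z\}$ lies in $\alg'_j$ and so equals $\upsilon_j^{-1}[V_{a,z}]$ for some measurable $V_{a,z} \subseteq \Omega_j$; since these preimages are pairwise disjoint in $z$, replacing $V_{a,z}$ by $\tilde{V}_{a,z} = V_{a,z} \setminus \bigcup_{z' \neq z} V_{a,z'}$ yields pairwise disjoint measurable sets whose $\upsilon_j$-preimages are unchanged, and $\tilde{f}$ can then be defined to send $(a,o)$ to the unique $z$ with $o \in \tilde{V}_{a,z}$ and to a fixed outcome otherwise.

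With this factorization in hand, I would fix for each $m \in K \cup L$ a single lift $m^* \in \acts(A_j \times \Omega_j)$, and put $K^* = \{k^* \mid k \in K\}$ and $L^* = \{\ell^* \mid \ell \in L\}$. Because $m = m^* \circ (\id_{A_j} \times \upsilon_j)$, the assignment $m \mapsto m^*$ is injective, and its restrictions give bijections $K \to K^*$ and $L \to L^*$. Unfolding the definition of $\choice$ on the coalgebra-morphism equation $\mu_i \circ \upsilon_i = \choice(\id_{A_j} \times \upsilon_j) \circ \theta_i$ yields
\[
 \mu_i(\upsilon_i(t))(K^*) = \{k^* \in K^* \mid k \in \theta_i(t)(K)\},
\]
and the bijection $L \leftrightarrow L^*$ identifies the condition $k^* \in L^*$ with $k \in L$, giving $\mu_i(\upsilon_i(t))(K^*) \subseteq L^*$ iff $\theta_i(t)(K) \subseteq L$. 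Therefore $B^{ob}_{\X,i}(K, L) = (\mu_i \circ \upsilon_i)^{-1}[B^{K^*}_{L^*}] = \upsilon_i^{-1}[E]$ with $E := \mu_i^{-1}[B^{K^*}_{L^*}] \in \alg_i$ by the measurability of $\mu_i$. The main subtlety will be in the factorization step and in making a consistent choice of lifts for elements of $K \cap L$, so that the correspondence $K \leftrightarrow K^*$ really does restrict to $L \leftrightarrow L^*$.
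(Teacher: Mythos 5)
Your proposal is correct and follows essentially the same route as the paper: the inclusion $\{\upsilon_i^{-1}[E] \mid E \in \alg_i\} \subseteq \alg^{ob}_{\X,i}$ via Lemma~\ref{pain lemma} and $\upsilon_{i,n} = \zeta_{i,n} \circ \upsilon_i$, and the converse via minimality of $(\alg^{ob}_{\X,i},\alg^{ob}_{\X,j})$, a factorization of acts through $\id_{A_j} \times \upsilon_j$, and the coalgebra-morphism identity $\mu_i \circ \upsilon_i = \choice(\id_{A_j} \times \upsilon_j) \circ \theta_i$. The only local difference is that where the paper identifies $A_j \times (T_j,\algI_j)$ with the pulled-back algebra $\algJ_j$ and invokes Lemma~\ref{factorisation}, you construct the lift $\tilde{f}$ by hand, slice-by-slice over the finite set $A_j$, which is a valid (if slightly less general) re-derivation of that lemma in this special case.
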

\begin{proof}
 Let $\algI_i = \{\upsilon_i^{-1}[E] \mid E \in \alg_i\}$ and $\algI_j =
\{\upsilon_j^{-1}[E] \mid E \in \alg_j\}$. Because taking preimages
preserves intersections and complements it follows that $\algI_i$ and
$\algI_j$ are algebras. Before we argue that $\alg^{ob}_{\X,i} \subseteq
\algI_i$, define
\[
 \algJ_j = \{(\id_{A_j} \times \upsilon_j)^{-1}[E] \mid E \mbox{
measurable in } A_j \times \Omega_j\}.
\]
Observe that $A_j \times (T_j,\algI_j) =
(A_j \times T_j,\algJ_j)$. One can prove both inclusions of this equality by considering the generating cylinders in the product spaces $A_j \times (T_j,\algI_j)$ and $A_j \times \Omega_j$.
Then consider any act $f \in \acts(A_j \times (T_j,\algI_j)) = \acts
(A_j \times T_j,\algJ_j)$. We use Lemma~\ref{factorisation} from \iftoggle{preprint}{Appendix~\ref{choice preserves limits}}{\ref{choice preserves limits}} to show that there is some act $f'
\in \acts (A_j \times \Omega_j)$ such that
\begin{equation} \label{e:factorisation}
 f = f' \circ (\id_{A_j} \times \upsilon_j).
\end{equation}
To see that all assumptions of Lemma~\ref{factorisation} are satisfied, observe first
that by the definition of $\algJ_j$ the function $(\id_{A_j} \times
\upsilon_j)$ is measurable from $(A_j \times T_j,\algJ_j)$ to $A_j \times
\Omega_j$. Moreover, because $f$ is a measurable function from $(A_j
\times T_j,\algJ_j)$ to the discrete uncertainty space $Z$ we have that
$f^{-1}[\{z\}]$ is measurable in $(A_j \times T_j,\algJ_j)$ for every $z
\in Z$. By the definition of $\algJ_j$ this means that for every $z \in
Z$ there is some measurable set $E_z$ in $A_j \times \Omega_j$ such
that $f^{-1}[\{z\}] = (\id_{A_j} \times \upsilon_j)^{-1}[E_z]$.

To prove that $\alg^{ob}_{\X,i} \subseteq \algI_i$ we show that for all
finite $K,L \subseteq \acts(A_j \times (T_j,\algI_j))$ the set
$B^{ob}_{\X,i}(K,L)$ is in $\algI_i$. This, and together with the
analogous statement in which $i$ and $j$ are swapped, shows that 
$\algI_i$ and $\algI_j$ satisfy the properties from
Definition~\ref{observables}. The claim that $\alg^{ob}_{\X,i} \subseteq
\algI_i$ then follows because $\alg^{ob}_{\X,i}$ and $\alg^{ob}_{\X,j}$
are defined to be the smallest algebras which satisfy these properties.

Consider any finite $K,L \subseteq \acts(A_j \times (T_j,\algI_j))$.
Define the finite sets $K',L' \subseteq \acts(A_j \times \Omega_j)$ such
that $K' = \{f' \mid f \in K\}$ and $\{f' \mid f \in L\}$, where $f'$ is
defined from $f$ as in \eqref{e:factorisation}. It follows immediately
from \eqref{e:factorisation} that
\begin{equation} \label{e:prime is good}
 K = \{g \circ (\id_{A_j} \times \upsilon_j) \mid g \in K'\}
\quad \mbox{and} \quad
 L = \{g \circ (\id_{A_j} \times \upsilon_j) \mid g \in L'\}.
\end{equation}

We are then going to show that
\begin{equation} \label{e:morphium}
 B^{ob}_{\X,i}(K,L) = (\choice (\id_{A_j}
\times \upsilon_j) \circ \theta_i)^{-1}[B^{K'}_{L'}],
\end{equation}
where $B^{K'}_{L'} = \{C \in \choice (A_j \times \Omega_j) \mid C(K')
\subseteq L'\}$ is one of the basic measurable sets that generates the
algebra on $\choice(A_j \times \Omega_j)$. Because $\upsilon$ is a
morphism of choice structures it follows that $B^{ob}_{\X,i}(K,L) =
(\mu_i \circ \upsilon_i)^{-1}[B^{K'}_{L'}] =
\upsilon_i^{-1}[\mu_i^{-1}[B^{K'}_{L'}]]$. From this we obtain that
$B^{ob}_{\X,i}(K,L) \in \algI_i = \{\upsilon_i^{-1}[E] \mid E \in
\alg_i\}$ because $\mu_i : \Omega_i \to \choice (A_j \times \Omega_j)$
is measurable and hence $\mu_i^{-1}[B^{K'}_{L'}] \in \alg_i$.

To show \eqref{e:morphium} we unfold the definitions and see that it amounts
to the claim that $\theta_i(t)(K) \subseteq L$ iff $(\choice (\id_{A_j}
\times \upsilon_j) (\theta_i(t)))(K') \subseteq L'$ holds for for all $t
\in T_i$. Using the definition of $\choice$ the right side of the
equivalence in this claim unfolds to the statement
\[
 \{g \in K' \mid g \circ \choice (\id_{A_j} \times \upsilon_j) \in
\theta_i(t)(\{g \circ \choice (\id_{A_j} \times \upsilon_j) \mid g \in
K'\})\} \subseteq L'.
\]
Using the definitions of $K'$ and $L'$ and the equalities from
\eqref{e:prime is good} one can easily see that this is equivalent to
$\theta_i(t)(K) \subseteq L$.

\medskip

To prove the other inclusion fix a measurable set $E \in \alg_i$. We
need to show that $\upsilon_i^{-1}[E] \in \alg^{ob}_{\X,i}$. By the
definition of the algebra $\alg_i$ on the limit $\Omega_i$ it
follows that there is some $n$ such that $E = \zeta_{i,n}^{-1}[A]$ for some
$A \subseteq \Omega_{i,n}$. By Lemma~\ref{pain lemma} we know that
$\upsilon_{i,n}^{-1}[A] \in \alg^{ob}_{\X,i}$. From \eqref{e:concrete
property upsilon} we have that $\upsilon_{i,n} = \zeta_{i,n} \circ
\upsilon_i$ and thus
\[
 \upsilon_{i,n}^{-1}[A] = (\zeta_{i,n} \circ
\upsilon_i)^{-1}[A] = \upsilon_i^{-1}[\zeta_{i,n}^{-1}[A]]
= \upsilon_i^{-1}[E].
\]
With this we can conclude that $\upsilon_i^{-1}[E] \in
\alg^{ob}_{\X,i}$.
\end{proof}

Because the identity morphism is the unique morphism from the universal
choice structure to itself it follows as a simple corollary from
Proposition~\ref{p:real work} that $\alg^{ob}_{\U,i}$ is equal to the
algebra $\alg_i$ of $\Omega_i$. This property has already been observed
in other settings and is called ``minimality'' in \cite{DiTillio08}. 

Before we provide the proof of Proposition~\ref{p:chara nonred} we would
like to mention that the proposition shows that non-redundancy is the
same as the notion of observability from the theory of coalgebras. In
\cite{Jacobs16} a coalgebra is called ``observable'' if the unique
morphism from the coalgebra to the terminal coalgebra is monic. In
\cite{Rutten00} such coalgebras are called ``simple''. 



\begin{proof}[Proof of Proposition~\ref{p:chara nonred}]
For the direction from left to right assume that $\X$ is non-redundant
and consider the unique morphism $\upsilon = (\upsilon_i,\upsilon_j)$
from $\X = (T_i,T_j,\theta_i,\theta_j)$ into the universal choice
structure $\U$. To show that $\upsilon_i$ is injective consider $t,t'
\in T_i$ such that $t \neq t'$. Because $\X$ is non-redundant
$\alg^{ob}_{\X,i}$ is separating, and thus there is some $E \in
\alg^{ob}_{\X,i}$ with $t \in E$ and $t' \notin E$. By
Proposition~\ref{p:real work} there is then some $E'$ that is measurable
in $\Omega_i$ such that $E = \upsilon_i^{-1}[E']$. But then it must be
the case that $\upsilon_i(t) \neq \upsilon_i(t')$, because otherwise
$\upsilon_i(t') = \upsilon_i(t) \in E'$ and thus $t' \in
\upsilon_i^{-1}[E'] = E$. The same reasoning works for $j$ in place of
$i$.

For the other direction assume that the $\upsilon_i : T_i \to \Omega_i$
is injective. To show that $\alg^{ob}_{\X,i}$ is separating on $T_i$
consider any $t,t' \in T_i$ with $t \neq t'$. Because $\upsilon_i$ is
injective we have $\upsilon_i(t) \neq \upsilon_i(t')$.
The algebra on $\Omega_i$, which is defined as the limit algebra
of the cochain $(\Omega_{i,n},\xi_{i,n})_{n \in \omega}$ for discrete spaces
$\Omega_{i,n}$, is separating because any two distinct infinite sequences have distinct projections to some finite level $n$, where they can be separated.
Thus there must be a measurable set $E$ in $\Omega_i$ such that $t \in
E$ and $t' \notin E$. By Proposition~\ref{p:real work} the set
$\upsilon_i^{-1}[E]$ is in $\alg^{ob}_{\X,i}$ and obviously this set
separates $t$ and $t'$. The same argument can be used with $j$ in place
of $i$.
\end{proof}

\section{Proofs for Section~\ref{preference structures}}
\label{proofs for prel}

\subsection{Preliminary observations}

It is easy to check that $\justprel$, and hence also $\prel$ and
$\prelPlayers$ are functors, that is, they preserve identities and
composition. The construction of the universal preference structure can
then be carried out analogously to the construction for the universal
choice structure given above. In fact it is only required to reprove a
variant of Theorem~\ref{justchoice preserves limits} for the functor
$\justprel$, which is done implicitly in the proofs of Section~3 from
\cite{DiTillio08}. 

\subsection{Proposition~\ref{justmaxi injective}}

We now argue that the map $\justmaxi_X : \justprel X \to \justchoice X$
is injective at every set $X$. To show this, assume we have two
preference relations $\pref$ and $\pref'$ over $X$ such that
$\justmaxi_X({\pref}) = \justmaxi_X({\pref'})$. We need to argue that
$x \pref x'$ iff $x \pref' x'$ for all $x,x' \in X$. Since the situation
is symmetric it suffices to check one direction. Hence assume that $x
\pref x'$. We want to show that $x \pref' x'$. It suffices to consider
the case where $x \neq x'$ because otherwise $x \pref' x'$ follows
because $\pref'$ is reflexive.

As $x \neq x'$ and $x \pref x'$ it follows that it can not be the case
that $x' \pref x$, because otherwise there would be a contradiction with
the anti-symmetry of $\pref$. As we explain in Remark~\ref{why posets}
this use of anti-symmetry is crucial. As $x \pref x'$ and not $x' \pref
x$ it follows that $x'$ is the only maximal element of the set
$\{x,x'\}$ in the relation $\pref$. Hence $\justmaxi_X({\pref})(\{x,x'\}) =
\{x'\}$.

By the assumption that $\justmaxi_X({\pref}) = \justmaxi_X({\pref'})$ it
follows that $\justmaxi_X({\pref'})(\{x,x'\}) = \{x'\}$. But this is
only possible if $x \pref' x'$, which is what we had to show.

\subsection{Proposition~\ref{maxi natural}}
\label{proof maxi natural}

Proposition~\ref{maxi natural} states that $\maxi$ is a natural
transformation from the functor $\prel$ to the functor $\choice$. It is
easy to check that this reduces to the claim that $\justmaxi$ is a
natural transformation from $\justprel$ to $\justchoice$. This means
that we need to show that for every function $f : X \to Y$ it holds that
$\justmaxi_X \circ \justprel f = \justchoice f \circ \justmaxi_Y$. Note
that here the $Y$ and $X$ are swapped because $\justprel$ and
$\justchoice$ are contravariant functors.

Fix any function $f : X \to Y$, preference relation ${\pref} \in
\justprel Y$ and finite set $K \subseteq X$. We have
to show that
\[
 \justmaxi_X(\justprel f(\pref))(K) = \justchoice f (\justmaxi_Y(\pref))(K) .
\]

For the left-to-right inclusion take any $x \in \justmaxi_X(\justprel
f(\pref))(K)$. We need to show that $x \in \justchoice f
(\justmaxi_Y(\pref))(K)$. This means we want to show that $x \in
f^{-1}[\justmaxi_Y(\pref)(f[K])] \cap K$. Our assumption that $x \in
\justmaxi_X(\justprel f(\pref))(K)$ means that $x$ is a maximal element
of the set $K$ in the order ${\pref^f} = \justprel f(\pref)$. Hence $x
\in K$ and it remains to show that $x \in
f^{-1}[\justmaxi_Y(\pref)(f[K])]$, which means that $f(x)$ is a maximal
element of the set $f[K]$ in the order $\pref$. So consider any other
element of $f[K]$, which must be of the form $f(x')$ for some $x' \in
K$, and assume that $f(x) \pref f(x')$. We need to show that then also
$f(x') \pref f(x)$. From equation \eqref{defining justprel} in
Section~\ref{di tillio} defining $\pref^f$ it follows that $x \pref^f
x'$. Then we can use that $x$ is a maximal element in $\pref^f$ to
conclude that $x' \pref^f x$. Using \eqref{defining justprel} again, we
then obtain the required $f(x') \pref f(x)$.

Now consider the right-to-left inclusion. Take any $x \in
f^{-1}[\justmaxi_Y(\pref)(f[K])] \cap K$. We need to show that $x \in
\justmaxi_X(\justprel f(\pref))(K)$, which means that $x$ is a maximal
element of the set $K$ in the order ${\pref^f} = \justprel f(\pref)$.
Clearly $x \in K$. To show that $x$ is a $\pref^f$-maximal element in
$K$ pick any other $x'$ in $K$ with $x \pref x'$. We need to show that
$x' \pref x$. From $x \pref x'$ it follows with \eqref{defining
justprel} that $f(x) \pref f(x')$. We now use that $x \in
f^{-1}[\justmaxi_Y(\pref)(f[K])]$. From this it follows that $f(x) \in
\justmaxi_Y(\pref)(f[K])$. This means that $f(x)$ is maximal in the set
$f[K]$. Because $x' \in K$ it holds that also $f(x')$ is in the set
$f[K]$. By the maximality of $f(x)$ in $f[K]$ it follows from $f(x)
\pref f(x')$ that $f(x') \pref f(x)$. With \eqref{defining justprel} we
obtain $x' \pref x$.

\subsection{Theorem~\ref{upsilon injective}}

In the proof of Theorem~\ref{upsilon injective} we need a further
concept from category theory which is a generalization of monic
morphism. A family of morphism $(f_j : Y \to X_j)_{j \in J}$ for any
index set $J$ is \emph{jointly monic} if for all further morphisms $g,h
: T \to Y$ it holds that if $f_j \circ g = f_j \circ h$ for all $j \in
J$ then already $g = h$.

Families of monic morphism are closely related to limits of cochains.
Using the universal property of the limit $X_\omega$ with projections
$\zeta_n$ of a cochain $(X_n,f_n)_{n \in \omega}$, it is easy to show
that the family of all projections $(p_n)_{n \in \omega}$ is jointly
monic. Moreover, if we have another object $T$ with a jointly monic
family $(g_n : T \to X_n)_{n \in \omega}$ such that $g_n = f_n \circ
g_{n + 1}$ for all $n$ then the unique morphism $u : T \to X_\omega$
that exists because of the universal property of the limit $X_\omega$ is
monic.

To prove Theorem~\ref{upsilon injective}, let $\U' = (\Omega',\mu')$ be
the universal preference structure, presented as coalgebra for
$\prelPlayers$. We assume that $\Omega'$, $\mu'$, $\Omega'_n$,
$\zeta'_n$, \dots are defined in the same way as the objects $\Omega$,
$\mu$, $\Omega_n$, $\zeta_n$, \dots are defined in \iftoggle{preprint}{Section~\ref{proofs
for choices}}{\ref{proofs for choices}} for the universal choice structure, just using
$\prelPlayers$ instead of $\choicePlayers$.

Then consider the choice structure $\embedded{\U'} =
(\Omega',\maxiPlayers_{\Omega'} \circ \mu')$ where $\maxiPlayers$ is a
natural transformation from $\prelPlayers$ to $\choicePlayers$ that is
defined such that it applies $\maxi$ componentwise. Note that this
definition of $\embedded{\U'}$ corresponds to the one given in
Section~\ref{embedding}. Let $\upsilon$ be the unique morphism from
$\embedded{\U'}$ to the universal choice structure $\U$ that exists
according to Theorem~\ref{omega terminal}.

The claim of Theorem~\ref{upsilon injective} is that this $\upsilon$ is
monic. Because in Theorem~\ref{omega terminal} $\upsilon$ was obtain
using the universal property of the limit $\Omega$ from the family of
approximations $(\upsilon_n : \Omega'_n \to \Omega_n)_{n \in \omega}$ it
suffices to show that this family is jointly monic.

Define morphisms $\delta_0 = \bang_{\Omega'_0} : \Omega'_0 \to \Omega_0$
and inductively $\delta_{n + 1} = \choicePlayers \delta_n \circ
\maxiPlayers_{\Omega'_n} : \Omega'_{n + 1} \to \Omega_{n + 1}$. One can
show with an induction over $n$ that all these $\delta_n$ are monic. The
base case this holds because $\Omega'_0$ is the terminal object of
$\Unc^2$ and in the inductive step we use Proposition~\ref{justmaxi
injective} and the fact hat $\choicePlayers$ preserves monics, which we
show in \iftoggle{preprint}{Section~\ref{choice monic}}{\ref{choice monic}}. The latter needs that the image of
the injective measurable function $\delta_n : \Omega'_n \to \Omega_n$
that is preserved has the discrete algebra. This is the case because one
can show that if $A_i$ and $A_j$ are finite then so are all the
$\Omega'_n$.

We then prove by induction on $n$ that
\begin{equation} \label{last equation}
 \upsilon_n = \delta_n \circ \zeta'_n.
\end{equation}
It follows that the $\upsilon_n$ are jointly monic because the
$\zeta'_n$ are projections out of the limit $\Omega'$ and hence jointly
monic and the $\delta_n$ are all monic.

For the base case, of \eqref{last equation}, we have that $\upsilon_0 =
\delta_0 \circ \zeta'_0$ because both morphism map to the terminal
object $\Omega_0$ of $\Unc^2$.

For the inductive step we use the following computation:
\begin{align*}
 \upsilon_{n + 1} & = \choicePlayers \upsilon_n \circ
\maxiPlayers_{\Omega'} \circ \mu' & \mbox{definition of } \upsilon_{n +
1} \\
 & = \choicePlayers (\delta_n \circ \zeta'_n) \circ
\maxiPlayers_{\Omega'} \circ \mu' & \mbox{induction hypothesis} \\
 & = \choicePlayers \delta_n \circ \choicePlayers \zeta'_n \circ
\maxiPlayers_{\Omega'} \circ \mu' & \choicePlayers \mbox{ functor} \\
 & = \choicePlayers \delta_n \circ \maxiPlayers_{\Omega'_n} \circ
\prelPlayers \zeta'_n \circ \mu' & \maxiPlayers \mbox{ natural
transformation} \\
 & = \choicePlayers \delta_n \circ \maxiPlayers_{\Omega'_n} \circ
\tau'_{n + 1} \circ \mu' & \mbox{definition of } \tau'_{n + 1} \\
 & = \choicePlayers \delta_n \circ \maxiPlayers_{\Omega'_n} \circ
\zeta'_{n + 1} & \mbox{uniqueness property of } \mu' \\
 & = \delta_{n + 1} \circ \zeta'_{n + 1} & \mbox{definition of }
\delta'_{n + 1}
\end{align*}

\end{document}